\setlist{noitemsep}
\definecolor{darkblue}{RGB}{0,0,128}
\definecolor{darkgreen}{RGB}{0,150,0}
\definecolor{tiletwo}{RGB}{184, 184, 184}
\definecolor{tileone}{RGB}{120, 120, 120}
\definecolor{syndtwo}{RGB}{242,100,48}
\definecolor{syndone}{RGB}{0,157,220}
\definecolor{qubit}{RGB}{60,60,60}
\definecolor{tiletop}{RGB}{33,131,128}
\definecolor{tilebot}{RGB}{60,7,72}
\newtheorem{theorem}{Theorem}[section]
\newtheorem*{theorem*}{Theorem}
\newtheorem{lemma}[theorem]{Lemma}
\newtheorem{corollary}[theorem]{Corollary}
\newtheorem{fact}[theorem]{Fact}
\newtheorem{definition}[theorem]{Definition}
\newsavebox\mmt
\newsavebox\cnottarget
\tikzset{on layer/.code={
    \pgfonlayer{#1}\begingroup
    \aftergroup\endpgfonlayer
    \aftergroup\endgroup
  }}
\renewcommand{\epsilon}{\varepsilon}
\title{Hierarchical memories: Simulating quantum LDPC codes with local gates}
\author[1]{Christopher A. Pattison}
\author[2,3]{Anirudh Krishna}
\author[1,4]{John Preskill}
\affil[1]{Institute for Quantum Information and Matter, California Institute of Technology, Pasadena, CA 91125}
\affil[2]{Department of Computer Science, Stanford University, Stanford, CA, 94305}
\affil[3]{Stanford Institute for Theoretical Physics, Stanford University, Stanford, CA, 94305}
\affil[4]{AWS Center for Quantum Computing, Pasadena CA 91125}
\date{April 3, 2025}
\begin{document}

\maketitle

\begin{abstract}
Constant-rate low-density parity-check (LDPC) codes are promising candidates for constructing efficient fault-tolerant quantum memories.
However, if physical gates are subject to geometric-locality constraints, it becomes challenging to realize these codes.
In this paper, we construct a new family of $\dsl N,K,D\dsr$ codes, referred to as hierarchical codes, that encode a number of logical qubits $K = \Omega(N/\log(N)^2)$.
The $N$\textsuperscript{th} element $\cH_N$ of this code family is obtained by concatenating a constant-rate quantum LDPC code with a surface code; nearest-neighbor gates in two dimensions are sufficient to implement the syndrome-extraction circuit $\CFT_N$ and achieve a threshold.
Below threshold the logical failure rate vanishes superpolynomially as a function of the distance $D(N)$.
We present a bilayer architecture for implementing $\CFT_N$, and estimate the logical failure rate for this architecture.
Under conservative assumptions, we find that the hierarchical code outperforms the basic encoding where all logical qubits are encoded in the surface code.
\end{abstract}

\section{Introduction}
Quantum error-correcting codes encode quantum information in entangled states over many qubits.
They are defined by a set of operators called stabilizer generators.
Errors can accumulate in the state due to imperfect control and interactions with the environment.
Stabilizer generators can be measured using syndrome-extraction circuits; the outcome of these measurements are called syndromes, classical information used to infer corrections to these errors.
To minimize the probability of corrupting information beyond recovery, it is imperative to minimize the points of failure in the syndrome-extraction circuit.
This can be realized by restricting the number of gates that each qubit interacts with and minimizing the total space-time volume of this circuit.
The extent to which this can be done depends on the choice of error-correcting code and physical constraints.

Syndrome-extraction circuits are the workhorse of quantum memories, devices that can reliably store qubits for some fixed duration.
In this paper, we are concerned with designing memories that can encode a growing number of qubits and simultaneously have a low probability of failure.\footnote{We leave fault-tolerant \emph{computation} for future work.}
We focus on their design when qubits are embedded in a two-dimensional lattice and gates are subject to constraints on geometric locality.

Quantum low-density parity-check (LDPC) codes are natural candidates for constructing quantum memories.
A quantum LDPC code refers to a family $\{\cQ_t\}_t$ of $\{\dsl n(t),k(t),d(t), \Delta_q, \Delta_g \dsr\}$ codes.
This notation means that the $t$\textsuperscript{th} element in the family uses $n(t)$ data qubits to encode $k(t)$ logical qubits and has distance $d(t)$, i.e.\ it is robust to $\floor{(d(t)-1)/2}$ Pauli errors.
We assume that for all $t$, $n(t) > n(t-1)$.
In what follows, we wish to focus on the dependence of $k$ and $d$ as functions of $n$.
In this case, we implicitly parameterize the family using the size $n$ of the code, i.e.\ we use the notation $\{\cQ_n\}_n$, and we say that we are working with a $\dsl n,k(n),d(n), \Delta_q,\Delta_g \dsr$ code family.

A quantum LDPC code is one where, for all codes in the code family, every stabilizer generator only involves at most a constant number $\Delta_g$ of qubits, and each qubit is supported within at most a constant number $\Delta_q$ of stabilizer generators.
Such codes can encode a number of qubits that increases with the code size; simultaneously, the probability of \emph{any} error on the encoded level is suppressed exponentially in the distance $d(n)$.

Furthermore, the syndrome-extraction circuit $C_n$ can be efficient as measured by two figures of merit.
The \emph{depth} of the syndrome-extraction circuit is the number of time steps $\Depth(C_n)$ it takes to implement.
The \emph{width} of the syndrome-extraction circuit is the total number of qubits $\Space(C_n)$ it uses (including ancilla qubits in addition to data qubits).
The size or volume of the circuit is the product of the depth and the width.
Building on a result by Kovalev and Pryadko \cite{kovalev2013fault}, Gottesman \cite{gottesman2014fault} constructed fault-tolerant syndrome-extraction circuits that have volume which is a constant times the volume of the noise-free syndrome-extraction circuit --- there exists a threshold $q$ such that if gates fail with fixed probability $p < q$, the probability of the circuit failing falls exponentially in the distance $d(n)$.

However, realizing this architecture in a 2-dimensional layout is challenging.
It requires high-fidelity gates acting on qubits that may be far apart.
Some architectures might not support such interactions.

It is known that geometric locality severely constrains quantum error-correcting codes in $2$ and $3$ (Euclidean) dimensions.
The most famous codes that are implemented using only geometrically-local gates are surface codes \cite{kitaev2003fault,bravyi1998quantum} and color codes \cite{bombin2006topological,kubica2015universal}.
Seminal results by Bravyi and Terhal \cite{bravyi2009no}, and later by Bravyi, Poulin and Terhal \cite{bravyi2010tradeoffs} showed that these codes are optimal for quantum LDPC codes defined using geometrically-local stabilizers.
Subsequently, it was shown that to implement LDPC codes where the parameters $k$ and $d$ are both strictly better than the surface code, we require a growing amount of long-range connectivity \cite{baspin2021connectivity,baspin2021quantifying}.

When restricted to using only nearest-neighbor gates in $2$ dimensions, Delfosse \emph{et al}.\ \cite{delfosse2021bounds} proved the following tradeoff for syndrome-extraction circuits for constant-rate LDPC codes: \footnote{The bound applies to classes of codes that are called \emph{locally expanding}.
The exact definition of locally-expanding codes is not relevant; the interested reader is pointed to the paper by Delfosse \emph{et al}.\ \cite{delfosse2021bounds}.
For our purposes, it includes some important classes of quantum LDPC codes such as hypergraph product codes \cite{tillich2014quantum} and good quantum LDPC codes \cite{breuckmann2021balanced,panteleev2021asymptotically,leverrier2022quantum,lin2022good}.
}
\begin{equation}
    \label{eq:dbt}
    \Depth(C_n) = \Omega\left(\frac{n}{\sqrt{\Space(C_n)}}\right)~,
\end{equation}
where $\Depth(C_n)$ is the depth of the syndrome-extraction circuit and $\Space(C_n)$ is the total number of qubits, data and ancilla, used in the circuit \footnote{For an explanation of $O(\cdot)$, $\Theta(\cdot)$ and $\Omega(\cdot)$ notation, please refer to Appendix \ref{app:glossary}.}.
In words, this shows that given only nearest-neighbor gates to build a syndrome-extraction circuit for constant-rate LDPC codes, we can choose to minimize either the depth or the width of $C_n$, but cannot do both.

This sets the stage for presenting the main questions we address in this paper: does the family of circuits saturating Equation~\eqref{eq:dbt} still have a threshold?
If not, how do we modify the code and associated circuit to achieve a threshold as efficiently as possible?
How do we construct the most efficient syndrome-extraction circuits given access to gates whose range is more than merely nearest neighbor?
Can we improve on the bound in Equation~\eqref{eq:dbt}?

\subsection*{Our contributions}

This paper is centered around the theme of implementing efficient quantum memories.
Our main result is that our proposal, called a hierarchical code, has a threshold and that it achieves asymptotically better error suppression than the surface code.
As it brings together a few different ideas, we present a short summary of each section and how to navigate the paper.
Although these results build on each other, our presentation is modular --- readers ought to be able to proceed to their section of choice after reading this overview and Section~\ref{sec:background} where we define all the concepts required to formally state our results. (The statements of the main theorems of each section are only presented informally below.)

\paragraph{Section~\ref{sec:design-routing} : Permutation routing on graphs}
Connectivity beyond nearest-neighbor interactions is being explored in many architectures.
There is evidence that some architectures can support gates of range $R$ where $R$ can be large \cite{leung2019deterministic,periwal2021programmable}.
Motivated by these developments, we ask: given nearest-neighbor Clifford gates and $\swapp$ gates of range $R$, can we reduce the depth of the syndrome-extraction circuit for constant-rate LDPC codes?
To this end, we will permute qubits to bring them within range to apply an entangling gate.
This is expressed as a \emph{permutation routing}, a task on a graph $G = (V,E)$ specified by a permutation $\alpha: V \to V$.
In this task, two vertices labeled $u$ and $v$ connected by an edge $(u,v)$ are allowed to exchange labels within each step.
The objective is to ensure that all labels match destinations $\alpha(u)$ while minimizing the total time required.
Permuting vertices in parallel is non-trivial---the paths along which one permutes different pairs can overlap and thereby require more time.
Section~\ref{subsec:routing} reviews a permutation routing algorithm due to Annexstein and Baumslag \cite{annexstein1990unified}.
This algorithm yields a permutation routing on a product of two graphs given permutation routings on each of the input graphs.
In Section~\ref{subsec:permutn-routing-long-range}, we build on this algorithm to permute vertices on an $L \times L$ lattice where two vertices separated by a distance $R$ are connected by an edge using a sparse subgraph.
The main technical result of this section is the following existence result.

\begin{theorem}[Permutation routing]
\label{thm:permutnrouting}
  For $R$ even, there is an efficient construction of a degree-12 graph $G = (V,E)$ whose vertex set $V$ is identified with an $L \times L$ lattice with edges of length at most $R$.
  Given a permutation $\alpha: V \to V$, a permutation routing implementing $\alpha$ can be performed in depth $3L/R + O(\log^2 R)$.
\end{theorem}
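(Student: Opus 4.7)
The plan is to build $G$ as the overlay of two subgraphs at complementary scales and to route a given permutation in three stages using the Annexstein--Baumslag product-routing algorithm reviewed in Section~\ref{subsec:routing}. I would identify $V$ with $[L/R]^2 \times [R]^2$ so that each vertex has a block coordinate and an in-block coordinate, and then add (a) \emph{coarse} edges of length exactly $R$ which, for every fixed in-block offset, join horizontally and vertically adjacent blocks; and (b) \emph{fine} edges of length at most $R/2$ within each $R \times R$ block, forming a constant-degree network (for instance a shuffle-exchange graph, or an iterated small-product graph obtained by repeated application of Annexstein--Baumslag) whose routing number on $R^2$ vertices is $O(\log^2 R)$. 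The coarse edges contribute degree $4$ and, by tuning the fine network, the fine edges contribute degree at most $8$, giving total degree $12$ with all edge lengths at most $R$.

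To route a permutation $\alpha$, I would proceed in three phases. First, in each block in parallel, permute tokens using the fine subgraph so that the subsequent coarse phase will deliver each token to its target block; this costs depth $O(\log^2 R)$. Second, perform a permutation on the $(L/R)^2$ super-vertices using only the coarse edges; identifying the coarse structure with $P_{L/R} \square P_{L/R}$ and applying the Annexstein--Baumslag product-routing bound together with a standard row-column-row schedule on the super-lattice gives depth at most $3L/R$. Third, permute within each destination block again via the fine subgraph in depth $O(\log^2 R)$ to place each token on its exact target. Summing yields depth $3L/R + O(\log^2 R)$, as claimed.

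The main obstacle will be showing that the two intra-block permutations and the coarse super-permutation can always be chosen to compose to the desired $\alpha$ — in other words, that an arbitrary permutation on the product vertex set $[L/R]^2 \times [R]^2$ admits the product-of-permutations decomposition required by the three-phase schedule. This is precisely the decomposition supplied by the Annexstein--Baumslag framework, but verifying it here requires careful bookkeeping of labels across blocks and checking that the degree budget of $12$ is respected once both scales are overlaid. A secondary but more routine task will be exhibiting a concrete constant-degree network on $R^2$ vertices achieving routing number $O(\log^2 R)$ with all edges of length at most $R$; this can be handled either by a direct shuffle-exchange layout or by recursively applying the product construction to $O(\log R)$ small factors.
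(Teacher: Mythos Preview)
Your three-phase fine--coarse--fine decomposition via Annexstein--Baumslag, with a degree-$4$ coarse grid on $[L/R]^2$ and a degree-$\le 8$ fine network on each $R\times R$ block, is exactly the paper's strategy up to a harmless regrouping of Cartesian-product factors: the paper writes the graph as $H\times H$ with $H=\cE_R\times \nn_1(L/R,1)$, which by associativity is the same as $(\cE_R\times\cE_R)\times\nn_2(L/R,1)$, i.e.\ your coarse/fine split.

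The gap is in your fine network. Neither of your two candidates meets all three constraints (degree $\le 8$, routing depth $O(\log^2 R)$, and edge length $\le R$ in the $R\times R$ embedding). A product of $O(\log R)$ constant-size factors has degree $\Theta(\log R)$, blowing the degree budget. A shuffle-exchange graph on $R^2$ vertices has the right degree and routing depth, but laid out in an $R\times R$ block its edges can have Euclidean length up to $\sqrt{2}(R-1)>R$, and you have not argued for a layout avoiding this. The paper's fix is to take the fine graph to be $\cE_R\times\cE_R$, where $\cE_R$ is a random $4$-regular spectral expander on $[R]$: the two factors act along rows and along columns of the block respectively, so every fine edge lies within a single length-$R$ segment (hence has length $\le R-1$); the degree is $4+4=8$; and routing on each $\cE_R$ in depth $O(\log^2 R)$ comes from the Alon--Chung bound for expanders, so one more application of Annexstein--Baumslag gives $O(\log^2 R)$ on the product. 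Plugging this fine network into your scheme yields the claimed $3L/R+O(\log^2 R)$ and degree $12$.
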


While it is itself not the main result of our paper, it will be used in service of proving Theorem~\ref{thm:hierarchical-ppties} which demonstrates the existence of efficient syndrome-extraction circuits given $\swapp$ gates of range up to $R$ (but not all gates need to have length equal to $R$).
This section is entirely technical and only discusses graph properties and permutation routings.

\paragraph{Section~\ref{sec:surface-code} and Section~\ref{sec:ft-asymptotics}: Hierarchical codes \& the bilayer architecture}

Given access to only nearest-neighbor gates, Delfosse \emph{et al}.\ present some evidence \emph{against} the existence of a threshold if one were to permute qubits to bring them within range to perform a $\cnot$ (see Figure~2 of \cite{delfosse2021bounds}).
In particular, in the setting where $\Space(C_n) = \Theta(n)$ and $\Depth(C_n) = \Theta(\sqrt{n})$, it appears too many errors accumulate before we can complete executing the syndrome-extraction circuit.

We circumvent this problem using code concatenation.
We concatenate a constant-rate $\dsl n,k,d, \Delta_q, \Delta_g \dsr$ LDPC code $\{\cQ_n\}$ with a $\dsl d_{\ell}^2, 1, d_{\ell}\dsr$ rotated surface code $\cRS_{\ell}$ to obtain the \emph{hierarchical code} $\{\cH_N\}$ with parameters denoted $\dsl N,K,D\dsr$.
This means that each qubit of the syndrome-extraction circuit for the LDPC code $\cQ_n$, henceforth referred to as the ``outer code'', is itself the logical qubit of a rotated surface code $\cRS_{\ell}$, which we refer to as the ``inner code'' or sometimes as a ``tile.''
As a rotated surface code can suppress errors exponentially in $d_{\ell}$, we can suppress errors long enough to complete syndrome measurements of the outer quantum LDPC code using relatively small inner codes.
The lattice length $d_{\ell}$ of the inner code only scales logarithmically in the size of the outer LDPC code, i.e.\ $d_{\ell} = \Theta(\log(n))$.
Here $\ell$ indexes the qubits in the rotated surface code, $\ell^2 = 2d_{\ell}^2-1$.
Section~\ref{sec:surface-code} is dedicated to the construction of syndrome-extraction circuits $\CFT_N$ corresponding to $\cH_N$.
The hierarchical code family $\{\cH_N\}$ is not LDPC: The stabilizer generators for the outer code act on a number of physical qubits that scales with the size $d_{\ell}$ of the inner code.
However, local operations are sufficient to implement the corresponding syndrome-extraction circuit $\CFT_N$.
The main result of this section is summarized in the following theorem.

\begin{theorem}
    \label{thm:hierarchical-ppties}
    The $\dsl N,K,D \dsr$ hierarchical code $\cH_N$ is constructed by concatenating an outer code, a constant-rate $\dsl n,k,d, \Delta_q, \Delta_g\dsr$ quantum LDPC code $\cQ_n$, and an inner code, a rotated surface code $\cRS_{\ell}$ where $d_{\ell} = \Theta(\log(n))$.
    Let $\rho > 0$ and $\delta \geq 1/2$, such that $k = \rho \cdot n$ and $d = \Theta(n^{\delta})$.
    The code $\cH_N$ has parameters
    \begin{align*}
        K(N) = \Omega\left(\frac{N}{\log^2(N)} \right)~, \qquad
        D(N) = \Omega\left(\frac{N^{\delta}}{\log^{2\delta-1}\left[ N / \log(N) \right]}\right)~.
    \end{align*}
    There exists an explicit and efficient construction of an associated family of syndrome-extraction circuits $\CFT_N$ constructed using only local Clifford operations and $\swapp$ gates of range $R$ such that
    \begin{align*}
        \Space(\CFT_N) = O(N)~, \qquad
        \Depth(\CFT_N) = O\left( \frac{\sqrt{N}}{R} \right)~.
    \end{align*}
\end{theorem}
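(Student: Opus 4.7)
The plan is first to pin down the parameters $(N,K,D)$ by direct counting, and then to construct $\CFT_N$ by simulating a constant-depth outer syndrome-extraction circuit at the logical level of the inner tiles, with tile motion furnished by Theorem~\ref{thm:permutnrouting}.

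For the parameters, I would use the fact that $\cQ_n$ is LDPC: by König's edge-coloring theorem on its bipartite Tanner graph, there is an $O(1)$-depth outer syndrome-extraction circuit that uses $\Theta(n)$ data and ancilla qubits in total. Replacing each outer qubit by a rotated surface tile of $\Theta(d_\ell^2)=\Theta(\log^2 n)$ physical qubits gives $N=\Theta(n\log^2 n)$, so $n=\Theta(N/\log^2 N)$ and $K=\rho n=\Omega(N/\log^2 N)$. Standard concatenation of distances gives $D\geq d\cdot d_\ell=\Theta(n^{\delta}\log n)$, and substituting $n=\Theta(N/\log^2 N)$ yields the stated $\Omega(N^{\delta}/\log^{2\delta-1}N)$ bound.

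For the circuit, I would arrange the $\Theta(n)$ tiles on a $\sqrt n\times\sqrt n$ coarse grid, so that the physical qubits form an $L\times L$ lattice with $L=\Theta(\sqrt n\,d_\ell)=\Theta(\sqrt N)$. Edge-coloring the outer Tanner graph decomposes one round of outer syndrome extraction into $O(1)$ tile-level matchings. For each matching I would: (i) invoke Theorem~\ref{thm:permutnrouting} at the physical scale to permute qubits so that every ancilla tile becomes adjacent to its matched data tile; (ii) execute the transversal logical $\cnot$ between the now-adjacent tiles; and (iii) interleave inner surface-code rounds throughout to protect each tile during transit. Routing contributes depth $3L/R+O(\log^2 R)=O(\sqrt N/R)$ per matching, the transversal gate and accompanying inner rounds contribute $O(d_\ell)=O(\log N)$, and summing over the $O(1)$ matchings gives $\Depth(\CFT_N)=O(\sqrt N/R)$ in the regime where $\sqrt N/R$ dominates both $\log^2 R$ and $\log N$. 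The space bound $\Space(\CFT_N)=\Theta(N)$ is immediate from the layout.

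The step I expect to be the main obstacle is lifting a \emph{tile-level} permutation cleanly to physical SWAPs of range $R$: moving a $d_\ell\times d_\ell$ tile rigidly by tile distance $\Delta$ requires $\Theta(d_\ell^2)$ physical SWAPs of length $\Theta(\Delta\cdot d_\ell)$ executed in parallel without disturbing the inner surface-code encoding. I would handle this by reusing the single routing schedule produced by Theorem~\ref{thm:permutnrouting} transversally across the $d_\ell\times d_\ell$ internal coordinates of each tile, so that every abstract tile-SWAP expands into a constant-depth family of compatible physical SWAPs. A subsidiary concern is error accumulation on idle or moving tiles during the long routing phase; the choice $d_\ell=\Theta(\log n)$ is meant to drive the per-tile failure probability inverse-polynomially small in $N$, which suffices for the asymptotic statement above and is revisited quantitatively in Section~\ref{sec:ft-asymptotics}.
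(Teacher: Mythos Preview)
Your parameter derivation and high-level circuit strategy match the paper's (Section~\ref{sec:surface-code}): concatenation gives $K$ and $D$ directly, K\"onig's edge-coloring furnishes an $O(1)$-stage outer syndrome-extraction circuit, and permutation routing from Theorem~\ref{thm:permutnrouting} supplies the tile movement. However, there is a genuine gap in how you implement the logical entangling gate. You place tiles on a single-layer $\sqrt{n}\times\sqrt{n}$ coarse grid and plan to execute a transversal logical $\cnot$ once an ancilla tile is ``adjacent'' to its matched data tile. But with two $d_\ell\times d_\ell$ tiles sitting side by side in one plane, qubit $(i,j)$ of one tile is $\Theta(d_\ell)$ lattice sites away from qubit $(i,j)$ of the other, and the gate set $\cK$ of Definition~\ref{def:local-cliff} permits range-$R$ $\swapp$s but only nearest-neighbor $\cnot/\cz$. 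Interleaving the two tiles would make corresponding qubits neighbors but would destroy the contiguous layout you need for the inner surface-code rounds you intend to interleave; and any scheme that suspends inner error correction for $\Theta(d_\ell)=\Theta(\log n)$ steps forfeits the constant-$\pround$ premise on which the threshold argument rests.

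The paper's resolution is architectural: a \emph{bilayer} arrangement (Sections~\ref{subsec:inner-code-construction}--\ref{subsec:swap-bilayer}) stacks matched tiles in two parallel planes so that the transversal $\cnot$ is a single layer of genuinely nearest-neighbor vertical physical gates; lattice surgery in one layer is noted as an alternative. The obstacle you flag as primary---lifting tile permutations to physical $\swapp$s while preserving tile integrity---is real but secondary, and the paper handles it via a walking primitive and Lemma~\ref{lem:range-R-swap-depth}, which splits into the case $R\ge\ell$ (where your transversal-lift idea works) and $R<\ell$ (where your ``constant-depth'' claim fails and the walking primitive is needed).
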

Our construction works for all values of $\delta > 0$; we choose $\delta \geq 1/2$ to make theorem statements simpler.
Before describing how the circuit $\CFT_N$ is constructed, we motivate why it is interesting---it has a threshold.

We work in a model where errors occur in a stochastic manner.
We declare a logical failure if \emph{any} of the $K$ encoded qubits fail.
More generally, we declare failure if any logical error occurs on the code space.
The main result of Section~\ref{sec:ft-asymptotics} is the following theorem.
\begin{theorem}[Informal]
\label{thm:informal-threshold}
    Consider the $\dsl N, K, D\dsr$ family of hierarchical codes $\cH_N$ and the associated family of syndrome-extraction circuits $\CFT_N$.
    Suppose the outer code $\cQ_n$ has constant rate $k = \rho n$ and distance $d(n) = \Theta(n^{\delta})$.
    If we repeat the syndrome-extraction circuit $\CFT_N$ for $d(n)$ rounds, then there exists a threshold $q \in (0,1]$ corresponding to $\CFT_N$ such that, if each gate fails with fixed probability $0 < p < q$, then the probability of logical failure under minimum-weight decoding, $p_{\cH}(N)$, obeys
    \begin{align*}
        p_{\cH}(N) \leq \exp\left(- c_{\cH} \cdot \frac{N^{\delta}}{\log^{2\delta}(N)} \right)~,
    \end{align*}
    for some positive number $c_{\cH}$ independent of $N$.
\end{theorem}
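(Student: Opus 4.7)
The plan is to establish the threshold by a two-level concatenated-decoding argument that exploits the hierarchical structure of $\cH_N$. At the inner level, each ``tile'' is a rotated surface code $\cRS_{\ell}$ of distance $d_{\ell} = \Theta(\log n)$, whose syndrome-extraction sub-circuit is geometrically local and has a well-studied circuit-level threshold $p_c > 0$ under stochastic noise with minimum-weight decoding. For physical error rate $p < p_c$, standard surface-code analyses give a per-round-per-tile logical error rate of the form $p_{\mathrm{tile}} \leq A\exp(-\beta d_{\ell})$ for constants $A,\beta > 0$ depending only on $p$; by taking the implicit constant in $d_{\ell} = \Theta(\log n)$ sufficiently large, we can force $p_{\mathrm{tile}} \leq n^{-\alpha}$ for any desired $\alpha > 0$.

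At the outer level, we treat each tile as a noisy logical qubit of the constant-rate LDPC code $\cQ_n$ and invoke the fault-tolerance results of Kovalev--Pryadko and Gottesman~\cite{kovalev2013fault,gottesman2014fault}. Since $\cQ_n$ has bounded stabilizer weight $\Delta_g$ and qubit degree $\Delta_q$, there is a threshold $q_o > 0$ such that, provided the effective per-outer-gate error on each tile lies below $q_o$, repeating the outer syndrome-extraction circuit for $d(n)$ rounds and applying minimum-weight decoding suppresses the probability that any of the $K$ encoded qubits fails to $\exp(-c' d(n))$ for some $c' > 0$. Picking $\alpha$ large enough that the effective tile-level rate is strictly below $q_o$ closes the argument. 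To translate into the claimed scaling, note that $N = \Theta(n \cdot d_{\ell}^2) = \Theta(n\log^2 n)$, so $n = \Theta(N/\log^2 N)$ and therefore $d(n) = \Theta(n^{\delta}) = \Theta(N^{\delta}/\log^{2\delta} N)$, which is exactly the exponent in the theorem.

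The main obstacle is justifying the step that the ``effective per-outer-gate error on each tile is below $q_o$,'' because by Theorem~\ref{thm:hierarchical-ppties} a single round of $\CFT_N$ has depth $O(\sqrt{N}/R)$ rather than $O(1)$, so naively errors accumulate over a time window that grows polynomially with $N$. The resolution is that the inner surface code continues to be measured throughout the routing steps of the outer circuit, so what governs the outer decoder is the probability of an undetected logical fault per elementary outer operation, not per physical time step. Concretely, one sets up rectangles of the form ``prepare tiles $\to$ route $\to$ apply the outer CNOT $\to$ route back $\to$ measure the outer syndrome'' and shows by a combinatorial counting argument that a logical failure inside such a rectangle requires $\Omega(d_{\ell})$ independent physical faults. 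The number of fault patterns inside a rectangle is polynomial in $N$, but this polynomial is dominated by the $\exp(-\beta d_{\ell}) = n^{-\Omega(1)}$ suppression once the constant in $d_{\ell} = \Theta(\log n)$ is chosen large enough to also absorb the union bound over the $K = O(N)$ encoded qubits and the $d(n)$ rounds. A secondary subtlety is that tiles whose routing paths overlap can experience mildly correlated errors; these correlations are handled by enlarging the effective rate by a constant factor that still decays as $\exp(-\Theta(d_{\ell}))$, and so do not affect the asymptotic bound.
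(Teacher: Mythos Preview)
Your proposal is correct and follows the same high-level strategy as the paper: exploit the surface-code threshold at Level~0, coarse-grain to Level~1, show $d_\ell=\Theta(\log n)$ suffices to make the effective outer-gate error a small constant, then invoke Gottesman's LDPC threshold result and translate $n\to N$ via $n=\Theta(N/\log^2 N)$.

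The technical machinery differs, and it is worth noting how. You propose an AGP-style rectangle decomposition together with a combinatorial count of fault patterns that can cause a Level-1 logical failure inside one routed outer gate. The paper instead develops the locally decaying distribution framework systematically: it proves a general propagation bound (their Theorem~\ref{thm:pround-scaling}) showing that for \emph{any} LDPC syndrome-extraction circuit built from $\cK$, the per-round failure rate satisfies $\pround \le 2^{\Delta+1}\cdot \Depth(C)\cdot \pphys^{1/(2\Delta+2)}$, using the symplectic representation and a linear-map lemma (Lemma~\ref{lem:lse_map}) to track how errors spread through $\bH_{\ssX}$. This bound is then applied at Level~1 with $\pphys^{(1)}=\exp(-c_{\mathrm{EC}}\ell)$ and $\Depth(C_n^{\cQ})=O(\sqrt{n}/R)$ to read off $\ell=\Theta(\log n)$. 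The paper's route buys a reusable quantitative statement (dependence on $\Delta$ and depth is explicit) and handles the correlations you flag as a ``secondary subtlety'' cleanly via the locally decaying formalism; your rectangle argument is more direct but would need a careful percolation/cluster count to make the sentence ``the number of fault patterns inside a rectangle is polynomial in $N$'' rigorous --- what is actually polynomial in $N$ is the number of time steps (hence a union bound over rounds), while the fault-pattern count is handled by the standard surface-code argument and scales with $d_\ell$, not $N$.
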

The theorem is only stated informally here because we have not yet defined the noise model with respect to which this result holds.
We will consider a \emph{locally decaying error model} to account for correlated errors that may occur in a circuit.
This error model is defined in Section~\ref{subsec:noise-and-imperfect-syndromes}.
Section~\ref{sec:ft-asymptotics} is dedicated to a proof of the existence of a threshold.
We build on Gottesman's proof of the existence of a threshold for syndrome-extraction circuits (Theorem 4 of \cite{gottesman2014fault}).
The central idea is the requirement that the probability of failure for a qubit \emph{per round of syndrome extraction}, denoted $\pround$, remains a sufficiently small constant.
This is reviewed in Section~\ref{subsec:ldpc-review}.
Gottesman's result was based on syndrome-extraction circuits for LDPC codes that have constant depth.
As Equation~\eqref{eq:dbt} highlights, this is not possible when subject to locality constraints.
We study the dependence of $\pround$ on the circuit depth in Section~\ref{subsec:evoln-syndrome-ext}.
In Section~\ref{subsec:cft-threshold}, we show that $\ell = \Theta(\log(n))$ is sufficient for $\CFT_N$ to have a threshold.

As we ask to minimize circuit width $\Space(\CFT_N)$ \emph{and} subject the circuit to locality constraints, we pay a price --- in addition to the growing depth, the number of encoded qubits $K(N)$ and distance $D(N)$ are suppressed by polylogarithmic factors in $n$ relative to the outer code $\cQ_n$ which has constant rate and distance $d(n) = \Theta(n^{\delta})$. 
Furthermore, for fixed gate error rates $p \in [0,1]$, the sub-threshold scaling of the logical error rate $p_{\cH}(N)$ of $\{\cH_N\}$ is subexponential, but superpolynomial, in the distance $D(N)$; for any positive constants $\alpha, \beta$, the logical failure probability $p_{\cH}(N)$ vanishes faster than any polynomial function $N^{-\beta}$ but slower than any exponential function $\exp(-\alpha \cdot N)$:
\begin{equation*}
    \frac{p_{\cH}(N)}{N^{-\beta}} \xrightarrow{N \to \infty} 0~, \qquad \frac{p_{\cH}(N)}{\exp(-\alpha \cdot N)} \xrightarrow{N\to \infty} \infty~.
\end{equation*}

Having motivated why we are interested in $\cH_N$, we return to the construction of $\CFT_N$.
In Section~\ref{sec:surface-code}, we propose a novel bilayer architecture to implement it.
We begin the section by presenting the syndrome-extraction circuit $C_n$ for the constant-rate $\dsl n,k,d, \Delta_q, \Delta_g\dsr$ LDPC code.
Physical qubits are arranged in two parallel layers, each a lattice of side length approximately $L = \Theta(\sqrt{n})$.
To obtain the syndrome-extraction circuit $\CFT_N$ for the concatenated code, each of the $\Space$ qubits in $C^{\cQ}_n$ is replaced by a rotated surface code.

In Section~\ref{subsec:inner-code-construction}, we describe how to arrange $\Space = \Space(C_n^{\cQ})$ surface codes $\cRS_{\ell}$ in a bilayer architecture.
Each layer now has side length approximately $2L \ell$ qubits to accommodate the tiles.
An instance of a single layer is shown in Figure~\ref{fig:bilayer-intro} (a).
We assume access to nearest-neighbor physical Clifford operations and $\swapp$ gates of range $R$ within a layer and Clifford operations between adjacent qubits in different layers.
These physical qubits are aggregated into $\dsl d_{\ell}^2, 1, d_{\ell}\dsr$ codes $\cRS_{\ell}$.
See Figure~\ref{fig:bilayer-intro} (b).
There are $2L^2$ tiles in total.
Even though we are only implementing a quantum memory, we still need to understand how to perform a limited set of logical operations on tiles to implement the syndrome-extraction circuit for the outer code.
The advantage of the bilayer architecture is that it allows for transversal $\cnot$ to implement logical $\cnot$.
We propose a new technique to perform logical $\swapp$ operations between tiles.
This yields all required logical Clifford operations between tiles to perform syndrome-extraction for the outer code.

We note that the existence of a threshold does not depend on using the bilayer architecture.
For example, tiles can be arranged in a single layer and Clifford gates can be implemented via lattice surgery \cite{litinski2019game,horsman2012surface}.
For an alternative implementation in the context of measurement-based quantum computation, see \cite{bombin2021logical}.
Although we do not prove it here, it is possible to show that a threshold exists also in this setting using similar techniques.

\begin{figure}[h]
    \centering
    \begin{tikzpicture}
        \node at (-3,2.25) {(a)};
        \node at (0,0) {\includegraphics[width=0.3\columnwidth]{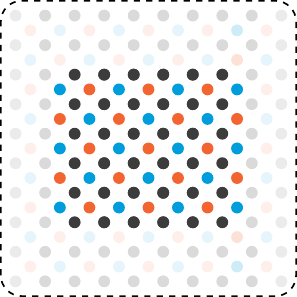}};
        \node at (4.25,2.25) {(b)};
        \node at (7.7,0) {\includegraphics[width=0.5\columnwidth]{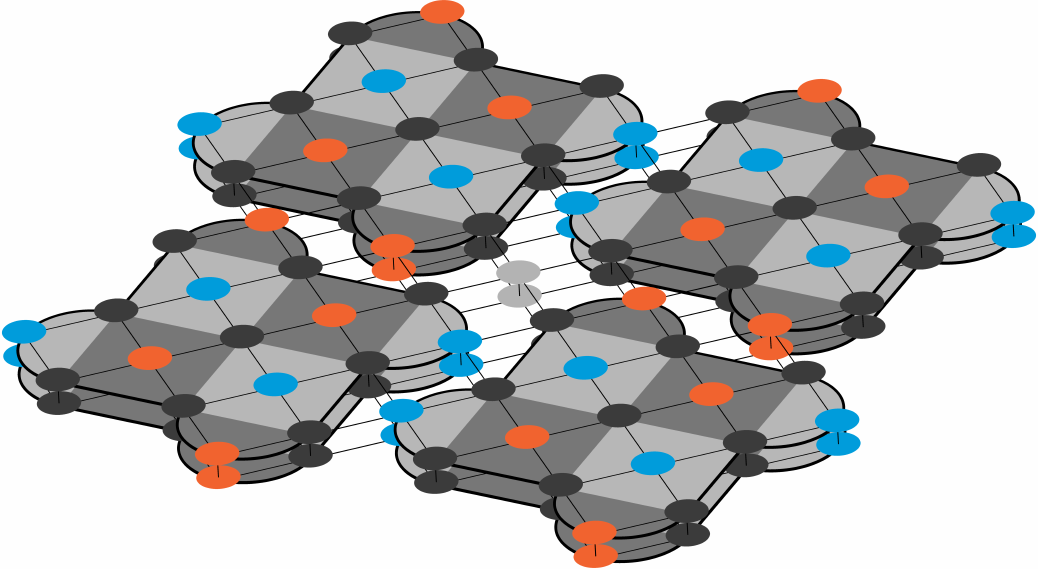}};
    \end{tikzpicture}
    \caption{The bilayer architecture used to implement the syndrome-extraction circuit $\CFT_N$ for the hierarchical code $\cH_N$.
    (a) represents a single layer of the bilayer architecture.
    Colored dots represent syndrome qubits and gray dots represent data qubits.
    Transparent dots represent inactive qubits.
    At any given time step, the qubits that participate in the circuit are depicted as opaque dots and form a lattice of side length $L \ell$; its location within the larger lattice can shift relative to the second layer.
    This is used to facilitate logical Clifford operations.
    (b) represents parallel tiles of distance $d_{\ell}$.
    Each tile represents an outer qubit of the hierarchical code construction.
    Light gray dots will be used to facilitate Clifford operations but are not used in the syndrome-extraction circuit for $\cRS_{\ell}$.
    }
    \label{fig:bilayer-intro}
\end{figure}

The circuits $\CFT_N$ are constructed such that each lattice position remains connected to a fixed and constant-sized set of other lattice positions for any $R = \omega(1)$.
Furthermore, the connectivity does not change dynamically over the course of the circuit.
This way, the wiring can be decided ahead of time.

\paragraph{Section~\ref{sec:numerical estimates} : Comparisons to surface code} Finally, we compare the hierarchical memory $\cH_N$ with a simple memory that only uses rotated surface codes.
At the outset, it may seem unclear whether the use of extra resources to execute the constant-rate LDPC code's syndrome-extraction circuit can be better spent simply building better surface codes which are ideally suited for $2$-dimensional local interactions.
We let $\{\cB_{M}\}$ refer to the \emph{basic} encoding where each logical qubit is encoded in a separate surface code; for some distance $d_{M}$, we let $\cB_M$ be the $K$-fold product of the surface code, i.e.\ $\cB_M = \cRS_{\ell_M}^{\otimes K}$.
The index $M$ represents the total number of qubits in this encoding, i.e.\ $M = \Theta(K d_{M}^2)$.
To contrast $\cH_N$ and $\cB_M$, we present both an asymptotic comparison as well as numerical estimates based on some conservative assumptions.

\begin{theorem}[Informal]
\label{thm:asymptotic-comparison}
    Let $\cH_N$ be a specific $\dsl N,K,D \dsr$ hierarchical code family such that the (outer) constant-rate LDPC code $\cQ_n$ has distance $d = \Theta(n^{\delta})$.
    Let $\cB_M$ be the basic encoding $\cRS_{\ell_M}^{\otimes K}$ that encodes $K$ qubits in separate rotated surface codes of distance $d_{M}$.
    Let $\Cplain_M$ be the corresponding family of syndrome-extraction circuits for $\cB_M$.
    Let $p_{\cB}(M)$ denote the logical failure probability under minimum-weight decoding for $\cB_M$ where we declare failure if \emph{any} logical qubit fails.
    Suppose the gate error rate $p$ %
    is below the thresholds for both the basic encoding and the hierarchical code.
    To achieve $p_{\cB}(M) < p_{\cH}(N)$, we require
    \begin{align*}
        \Space(\Cplain_M) = \Omega\left[ \left(\frac{N}{\log(N)}\right)^{1+2\delta} \right]~, \qquad
        \Depth(\Cplain_M) = \Omega\left[ \left(\frac{N}{\log^2(N)} \right)^{\delta}\right]~.
    \end{align*}
\end{theorem}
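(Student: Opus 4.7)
The plan is to derive the stated resource lower bounds by \emph{forcing} the surface-code distance $d_M$ used in $\cB_M$ to be large enough that the basic-encoding failure probability $p_{\cB}(M)$ could conceivably undercut the hierarchical upper bound $p_{\cH}(N) \leq \exp(-c_{\cH} N^{\delta}/\log^{2\delta}(N))$ supplied by Theorem~\ref{thm:informal-threshold}. The structural reason this works is that $\cB_M$ must encode the \emph{same} number of logical qubits $K = K(N) = \Omega(N/\log^2 N)$ as $\cH_N$, and failure of any single block already triggers global failure, so I will need a matching \emph{lower bound} on $p_{\cB}(M)$ as a function of $d_M$ (and $K$) rather than the usual upper bound.

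First I would establish the lower bound $p_{\cB}(M) \geq \tfrac{1}{2}\min\{1,\,K\cdot p_{\mathrm{single}}(d_M)\}$ from independence of the $K$ surface-code blocks, where $p_{\mathrm{single}}(d_M)$ is the per-block logical failure probability. I would then invoke the two-sided estimate $p_{\mathrm{single}}(d_M) = \Theta(\exp(-c_2 d_M))$ for the rotated surface code under minimum-weight decoding at any fixed sub-threshold gate error rate. The \emph{lower} bound here is the part that needs care: it follows by exhibiting an explicit string of $\lceil d_M/2 \rceil$ space-time Pauli faults along a minimum-weight logical representative that deceives the minimum-weight decoder, together with the observation that, under the locally decaying error model used for $\CFT_N$, this string occurs with probability at least $\Omega(p^{d_M/2}) = \exp(-O(d_M))$.

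Combining $p_{\cB}(M) < p_{\cH}(N)$ with these two estimates gives $\exp(-c_2 d_M) \lesssim p_{\cH}(N)/K$, and taking logarithms yields $d_M = \Omega(N^{\delta}/\log^{2\delta}(N))$, since the $\log K = O(\log N)$ term is swallowed by the $N^{\delta}/\log^{2\delta}(N)$ contribution. Substituting into $\Space(\Cplain_M) = \Theta(K\,d_M^2)$ together with $K = \Omega(N/\log^2 N)$ then produces the stated space lower bound, up to resolving the precise polylogarithmic exponent to get exactly $(N/\log N)^{1+2\delta}$. For depth, I would use the standard fact that, with noisy measurements, suppressing the logical error rate as $\exp(-\Omega(d_M))$ requires $T = \Omega(d_M)$ repetitions of the syndrome-extraction cycle (otherwise measurement errors cap the effective distance), and since each cycle has constant depth for the rotated surface code we obtain $\Depth(\Cplain_M) = \Omega(T) = \Omega(d_M) = \Omega((N/\log^2 N)^{\delta})$.

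The principal obstacle is securing the matching \emph{lower} bound on $p_{\mathrm{single}}(d_M)$: the upper bound is textbook, but the lower bound demands showing that a sufficiently likely minimum-weight failure pattern survives the locally decaying noise model used for $p_{\cH}$ without being suppressed below the desired exponential scaling. A secondary bookkeeping issue is tracking the log factors tightly enough to recover exactly $(N/\log N)^{1+2\delta}$ rather than a bound with a larger polylog denominator; this likely requires using a somewhat sharper form of either the $K$ or $p_{\cH}$ bound than the ones stated in Theorem~\ref{thm:hierarchical-ppties} and Theorem~\ref{thm:informal-threshold}, plugged in directly from the explicit construction.
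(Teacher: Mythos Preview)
Your approach is essentially the same as the paper's (Lemma~6.1): lower bound $p_{\cB}(M)$ by a single-tile failure probability, equate exponents with $p_{\cH}(N)$ to force $d_M = \Omega(N^{\delta}/\log^{2\delta} N)$, then read off $\Space = \Theta(K\,d_M^2)$ and $\Depth = \Theta(d_M)$. Two small differences are worth noting.

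First, the paper uses the simpler bound $p_{\cB}(M) \geq p_{\cRS}(\ell_M)$ --- failure of \emph{one} block already implies failure of the whole encoding --- rather than your $\tfrac{1}{2}\min\{1, K\cdot p_{\mathrm{single}}\}$. This avoids any appeal to independence of the $K$ blocks (which is not guaranteed under the locally decaying model) and loses only a factor of $K$ inside the logarithm, which as you note is harmless. Second, the paper simply writes $p_{\cRS}(\ell_M) = \exp(-c_{\mathrm{EC}}\,\ell_M)$ as if it were an equality and does not separately justify the lower bound you flag as the ``principal obstacle''; your explicit fault-string argument is the right way to make this rigorous, but the paper treats it as a standing assumption about sub-threshold surface-code behavior. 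Your polylog bookkeeping concern is also legitimate: the paper asserts $(N/\log N)^{1+2\delta}$ while a direct substitution of the stated bounds $K = \Omega(N/\log^2 N)$ and $d_M = \Omega(N^{\delta}/\log^{2\delta} N)$ gives a larger polylog denominator; the tighter exponent comes from working through $n$ rather than $N$ using $p_{\cH}(N) = p_{\cQ}(n) = \exp(-\Theta(n^{\delta}))$ and $k = \Theta(n)$, exactly as you suspected.
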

We can compare this with parameters for $\CFT_N$ from Theorem \ref{thm:hierarchical-ppties}.
For all $\delta > 0$, the width $\Space$ of $\CFT_N$ is less than that of $\Cplain_M$.
Furthermore, if the outer code has a single-shot decoder, i.e.\ if a constant number of applications of $\CFT_N$ are sufficient to achieve a threshold, then the depth $\Depth$ of $\CFT_N$ is also less than that of $\Cplain_M$.
Efficient single-shot decoders are known to exist for constant-rate  LDPC codes \cite{leverrier2015quantum,fawzi2018constant, fawzi2018efficient}.

Having said this, it is unclear whether this advantage manifests for practically-relevant code sizes and error rates.
To make such a comparison, we use numerical estimates.
We choose the size $M = M(N)$ such that the syndrome-extraction circuits for the hierarchical scheme $\cH_N$ and the basic encoding $\cB_M$ use the same number of physical qubits.
Fixing the total number of qubits in this manner, we look for a \emph{crossover point}, the gate error rate $q_0$ at which the hierarchical code achieves a lower logical failure rate than the basic encoding.

We estimate the circuit-level failure rate using some assumptions about the sub-threshold scaling of the logical failure rate for LDPC codes.
We assume the threshold of the surface code is $10^{-2}$ and the threshold for constant-rate LDPC codes under circuit-level noise is $10^{-3}$.
Our model takes into consideration how the logical failure rate %
depends on the depth of the circuit $\CFT_N$, and how hook errors could reduce the effective distance.
Hook errors are harmful errors that spread from the ancilla qubits to the data qubits during syndrome extraction.
These are explained in Section~\ref{subsec:hook_errors}.

We offer evidence that against circuit-level depolarizing noise, the crossover happens at a gate error rate as high as $5 \times 10^{-3}$ depending on the choice of outer code family and inner/outer code sizes.
See the left-most plot in Figure~\ref{fig:logical-failure-rate-intro}.
\begin{figure}
    \centering
    \includegraphics[scale=0.75]{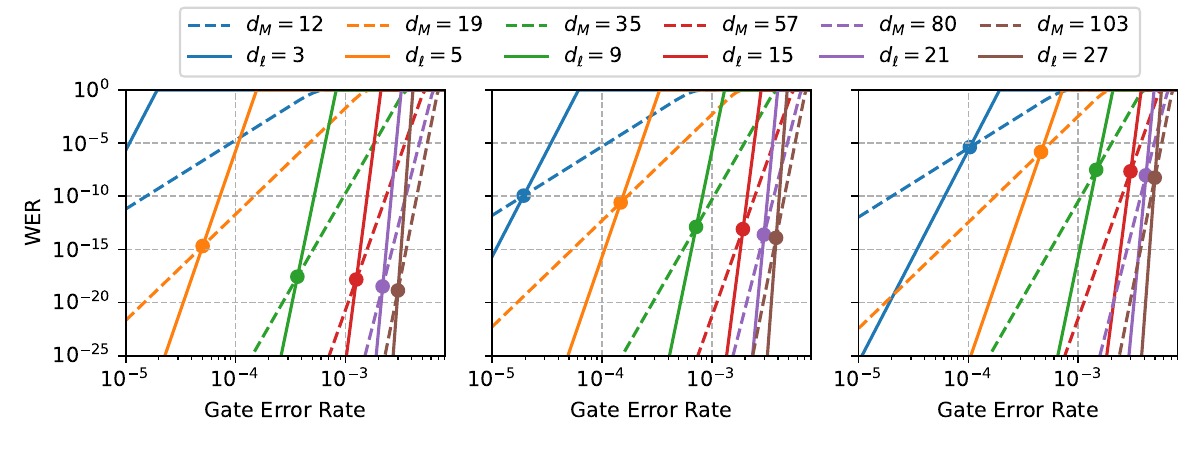}
    \caption{Comparing the logical failure rate for the hierarchical memory versus the logical failure rate for the basic encoding.
    The outer LDPC code has parameters $\dsl 1\; 116\; 416, 112\; 896, 119 \dsr$.
    Each color represents an inner code of distance $d_{\ell}$.
    The solid and dashed lines are estimates for the WER for the hierarchical memory and basic encoding respectively.
    The legend shows the size of the surface codes in each setting.
    For example, the solid blue line represents a hierarchical code with inner code lattice length $d_{\ell} = 3$.
    The dashed blue line represents a basic encoding that uses surface codes of lattice length $12$.
    The three panels correspond to three different assumptions about the error rate in $\swapp$ gates, as described in the text.
    In the left-most plot, $\swapp$ gates are assumed to fail at the same rate as entangling gates.
    In contrast, in the middle and right plots, $\swapp$ gates have a fidelity $10\times$ and $100\times$ better than entangling gates respectively.
    }
    \label{fig:logical-failure-rate-intro}
\end{figure}
These numbers are merely a proof-of-concept and depend on the aforementioned assumptions which are discussed in Section~\ref{subsec:setup-num-est}.

We arrive at these estimates assuming all gates fail with the same probability.
While such an assumption is convenient for proofs, in some architectures, it may be possible to perform $\swapp$ operations with higher fidelity than $\cnot$ or $\cz$ \cite{endres2016atom,barredo2016atom,bluvstein2022quantum,hensinger2006t,kaufmann2017fast}.
For example, in ion trap and neutral atom trap architectures, $\swapp$ gates can be performed by moving the traps.
The mechanism is entirely different than that used to perform other two-qubit gates and, in principle, could have much better fidelity.
These considerations are especially important to us as the main source of noise in the hierarchical scheme stems from $\swapp$ gates.
We present variations of our numerical estimates when the $\swapp$ gates have better fidelity than the $\cnot$ gates.
The middle plot and right-most plot in Figure~\ref{fig:logical-failure-rate-intro} represent estimates for the failure rate when the $\swapp$ gates are $10\times$ and $100\times$ better than entangling gates respectively.

As mentioned, our estimates are predicated on some assumptions.
We re-examine these assumptions in Section~\ref{sec:future} and propose ways to improve the failure rate for hierarchical codes.
We show how we can reduce the effect of hook errors by designing noise-biased qubits.
A qubit is said to have a noise bias if $\ssX$ and $\ssY$ errors are suppressed with respect to $\ssZ$ errors.
\begin{figure}
    \centering
    \begin{tikzpicture}
        \node at (-5,3) {(a)};
        \node at (-2,0) {\includegraphics[width=0.45\columnwidth]{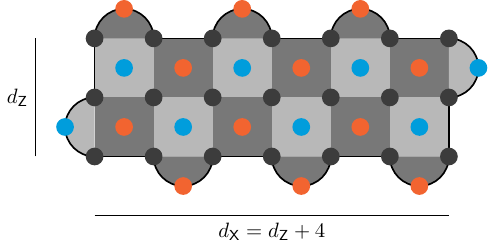}};
        \node at (2.25,3) {(b)};
        \node at (6,0) {\includegraphics[width=0.45\columnwidth]{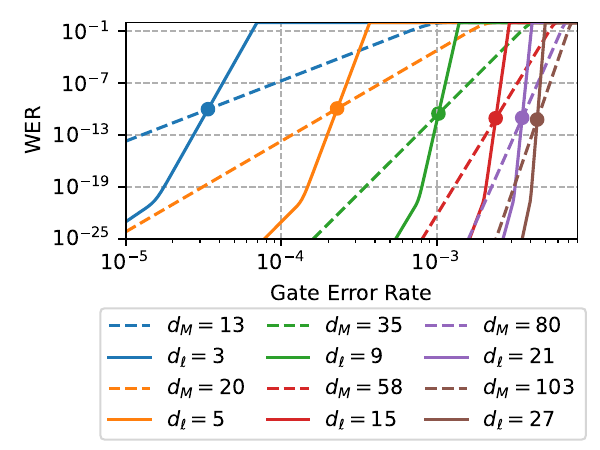}};
    \end{tikzpicture}
    \caption{(a) Creating Level-1 qubits such that the probability of logical $\ssX$ failure is less than the probability of $\ssZ$ failure.
    This is accomplished by changing the aspect ratio of the tiles.
    (b) Estimating crossover points when $\swapp$ gates are $10\times$ better than entangling gates.}
    \label{fig:biased-noise-intro}
\end{figure}
We can introduce a bias on Level-1 qubits using unbiased Level-0 (physical) qubits.
As the inner code is a surface code, we can engineer a bias simply by making the surface code longer in one direction of our choosing.
See Figure~\ref{fig:biased-noise-intro} (a).
Based on our estimates, we expect this can reduce the size of the code considerably.
Figure~\ref{fig:biased-noise-intro} (b) shows the crossover points for the hierarchical code and the basic encoding with the assumption that $\swapp$ gates are $10\times$ better than entangling gates using a much smaller outer code.

Secondly, we believe that decoders for the hierarchical code can take advantage of their concatenated structure.
To achieve this, we propose using message-passing decoders between the outer and inner codes.
These ideas can be used in soft decoders for the outer code to partially overcome the problems of degeneracy \cite{poulin2008iterative}.
Similar ideas have proved useful in the context of concatenating GKP codes and LDPC codes \cite{raveendran2022finite}.

Lastly, we expect the hierarchical scheme to be resistant to \emph{burst} errors.
Unlike typical errors which affect only one or two qubits at a time, burst errors can wipe out entire patches of qubits.
This can happen when there exists poorly localized error mechanisms such as the absorption of cosmic rays in superconducting circuits \cite{vepsalainen2020impact,mcewen2022resolving,thorbeck2022tls,acharya2022suppressing,cardani2023disentangling} or blackbody radiation mediated transitions to other Rydberg states in neutral atom platforms \cite{festa2022blackbody,zeiher2016many}.
Large deviations may also occur in single points of failure in the control hardware such as power supplies, local oscillators, lasers, etc \cite{arute2019quantum, kraglund2019entanglement, madjarov2020high, ebadi2021quantum, ma2022universal}. %
Protection of surface codes from burst errors was initially studied in \cite{xu2022distributed} by concatenating a small constant-sized stabilizer code with surface codes.
The hierarchical scheme is robust to these errors because each inner surface code represents a qubit of the outer code which we know is resistant to some number of erasure errors.

\textbf{Related work:} Gottesman \cite{gottesman2000fault} demonstrated that it is possible to find a threshold using only local gates and concatenation.
Svore, Divincenzo and Terhal \cite{svore2005local,svore2006noise} studied this issue further and established a numerical lower bound on the threshold in a scheme with many layers of concatenation.
Yamasaki and Koashi \cite{yamasaki2022time} show that concatenated codes can be used to achieve constant overhead quantum computation that is also time efficient.

In contrast to these approaches, we consider a qualitatively different setting.
In our hierarchical model, the concatenated code has only two layers.
The outer LDPC code grows quickly to improve the error rate, while the inner code grows slowly to achieve a threshold.
The number of encoded logical qubits in the code therefore increases (sublinearly) with the size of the code.
Consequently, the rate of error suppression is significantly better.

Finally, Baspin \emph{et al}.\ ~\cite{baspin2023lower} have recently generalized the result of Delfosse \emph{et al}.\ in another direction.
In contrast to the constructive approach in this paper, they approach this problem top-down --- given access to \emph{arbitrary} local operations and classical communication (not merely Clifford operations), they study syndrome-extraction circuits for LDPC codes and their ability to suppress stochastic errors.
They prove the existence of a tradeoff between the parameters of the syndrome-extraction circuit and the sub-threshold error scaling (See Theorem 28 of \cite{baspin2023lower}).
For fixed gate error rate $p$, suppose we use an $\dsl N,K,D \dsr$ code $\cH_N$ and desire a sub-threshold scaling of the logical failure rate $p_{\cH}(N) = \exp(-f(N))$ for some function $f(N)$.
Let $C_N$ be the corresponding family of syndrome-extraction circuits.
Assuming $f(N) = O(N)$, we express Theorem 28 of \cite{baspin2023lower} in our notation
\begin{align}
    \label{eq:bfs22}
    \frac{\Space(C_N)}{K} = \Omega\left(\frac{\sqrt{f(N)}}{\Depth(C_N)} \right)~.
\end{align}
To compare with our result, suppose we only use $\swapp$ gates of constant range, i.e.\ $R = O(1)$.
From Theorem~\ref{thm:hierarchical-ppties}, the syndrome-extraction circuit $\CFT_N$ achieves $p_{\cH}(N) = \exp(-\Theta(N^{\delta}/\log^{2\delta}(N))$ with $\Space(\CFT_N) = \Theta(N)$ and $\Depth(\CFT_N) = O(\sqrt{N})$.
\begin{align}
    \frac{\Space(\CFT_N)}{K} = O(\log(N))~, \qquad \frac{\sqrt{f(N)}}{\Depth(\CFT_N)} = O\left(\frac{N^{(\delta-1)/2}}{\log^{\delta}(N)}\right)~.
\end{align}
Comparing with Equation~\eqref{eq:bfs22}, we can see that the bound is satisfied for any constant $\delta > 0$.
Note that such a low logical error rate is only feasible because our syndrome-extraction circuit $\CFT_N$ has polynomially growing depth.

\section{Background \& Notation}
\label{sec:background}

In this section, we begin by formally defining concepts needed to state our results.
Section \ref{subsec:basic_defs} defines syndrome-extraction circuits.
We review gadgets used to construct them and how to use these gadgets to obtain a syndrome-extraction circuit given an error correcting code.
Section \ref{subsec:noise-and-imperfect-syndromes} reviews locally decaying distributions that describe errors on states and faults on circuits.
These are general error models that can describe the types of correlated errors that we might witness in a circuit.
A noise model is parameterized by a failure rate which quantifies the probability of errors.
We described how error correcting codes and their associated syndrome-extraction circuits are robust to some amount of errors occurring below a \emph{threshold} failure rate.
Section \ref{subsec:concat-review} reviews syndrome-extraction circuits for concatenated codes.
The hierarchical code is constructed by concatenating a constant-rate quantum LDPC code and the surface code.
These are defined in Section \ref{subsec:ldpc-review} and Section \ref{subsec:surface-review} respectively.
We review Gottesman's requirements \cite{gottesman2014fault} for the existence of a threshold.
This will be an important idea in the proof of the existence of a threshold for hierarchical codes.

\subsection{Basic definitions}
\label{subsec:basic_defs}

Let $\cP = \langle \ssX, \ssZ\rangle/\{\pm i, \pm 1\}$ denote the (projective) single-qubit Pauli group (where we ignore phases); for $n \in \bbN$, let $\cP_n$ denote the $n$-fold tensor product $\cP^{\otimes n}$.
For $\ssP \in \cP_n$, $\supp(\ssP) \subseteq [n]$ denotes the support of $\ssP$, i.e.\ the set of qubits on which $\ssP$ acts non-trivially.
The \emph{weight} of a Pauli operator $\ssP$ is $|\supp(\ssP)|$, the number of qubits in its support.
For brevity, we denote this as $|\ssP|$.

For $\ba$, $\bb \in \{0,1\}^n$, let $\ssX(\ba) = \otimes_{i} \ssX^{a_i}$, and $\ssZ(\bb) = \otimes_j \ssZ^{b_j}$.
Any Pauli operator $\ssP \in \cP_n$ can be expressed uniquely as $\ssP = \ssX(\ba) \ssZ(\bb)$ for $\ba$, $\bb \in \{0,1\}^n$.
We use $\ssP|_{\ssX}, \ssP|_{\ssZ} \in \{0,1\}^n$ to denote the $\ssX$ and $\ssZ$ components of $\ssP$ respectively, i.e.\ $\ssP|_{\ssX} := \ba$ and $\ssP|_{\ssZ} := \bb$.

\textbf{Stabilizer codes:}
An $n$-qubit quantum error correcting code is the simultaneous $+1$-eigenspace of a set of commuting Pauli operators.
These Pauli operators form a subgroup $\cS$ of the Pauli group called the stabilizer group.
The stabilizer group $\cS$ is generated by elements $\ssS_1,...,\ssS_{m}$.
The codespace $\cQ$ is then defined as
\begin{align*}
  \cQ = \{\ket{\psi} \in (\bbC^2)^{\otimes n} |\; \ssS_i \ket{\psi} = \ket{\psi} \; \forall i \in [m] \}~.
\end{align*}
The number of encoded qubits $k$ is the base $2$ logarithm of the number of linearly-independent vectors in $\cQ$.
Equivalently, given $\cS$, it is simply $k = n-m$ (where we assume the stabilizer generators are independent).

Intuitively, the minimum distance $d$ of the code is the minimum weight Pauli operator such that we can map one element of $\cQ$ to a distinct element of $\cQ$.
Formally, we write
\begin{align*}
  d = \min_{\substack{\ssP \in \cP_{n} \setminus \cS\\ [\ssP,\ssS_i] = 0}} |\ssP| ~.
\end{align*}
We say such a code is an $\dsl n,k,d \dsr$ (stabilizer) code.

The code is said to be a CSS code if every generator can be chosen such that it is a tensor product of only $\ssX$ or $\ssZ$ Pauli operators \cite{calderbank1996good,steane1996multiple}.
We can define the $\ssX$- and $\ssZ$-distances $d_{\ssX}$ and $d_{\ssZ}$ of a CSS code as
\begin{align*}
  d_{\ssX} = \min_{\substack{\ssP \in \{\ssI, \ssZ\}^{\otimes n} \setminus \cS \\ [\ssP,\ssS_i] = 0}} |\ssP| \qquad
  d_{\ssZ} = \min_{\substack{\ssP \in \{\ssI, \ssX\}^{\otimes n} \setminus \cS \\ [\ssP,\ssS_i] = 0}} |\ssP|~.
\end{align*}

Let $1 \leq b \leq m_{\ssX}$ and $1 \leq c \leq m_{\ssZ}$ index the $\ssX$-type and $\ssZ$-type stabilizer generators $\{\ssS_b^{\ssX}\}$ and $\{\ssS_c^{\ssZ}\}$.

\textbf{Syndrome-extraction circuits \& measurement gadgets:} A syndrome-extraction circuit $C$ for a code $\cQ$ can be composed of the following elements that are allowed to be classically controlled.
\begin{definition}[Clifford operations]
\label{def:local-cliff}
Consider a set of qubits arranged in a lattice in $2$ dimensions.
We define the set $\cK$ of  \emph{elementary Clifford operations} as follows:
\begin{enumerate}
  \item Initialization of new qubits in state $\ket{0}$ or $\ket{+}$,
  \item Single-qubit Pauli gates,
  \item Two-qubit Clifford gates $\cnot$ between nearest-neighbor qubits,
  \item Single-qubit Pauli $\ssX$ and $\ssZ$ measurements,
  \item Physical $\swapp$ operation with range up to $R$.
\end{enumerate}
\end{definition}
At any given time step, a qubit in $C$ can be involved in at most one of these operations.
In addition, we assume instantaneous classical communication and access to classical computation for processing measurement data.

To obtain the syndrome, we use gadgets to measure Pauli operators which are described as follows.
Consider a CSS code $\cQ$ with $m_{\ssX}$ $\ssX$-type stabilizer generators $\cS^{\ssX} = \{\ssS_b^{\ssX}\}_{b=1}^{m_{\ssX}}$ and $m_{\ssZ}$ $\ssZ$-type stabilizer generators $\cS^{\ssZ} = \{\ssS_{b}^{\ssZ}\}_{c=1}^{m_{\ssZ}}$.
The entire set of stabilizer generators is $\cS = \cS^{\ssX} \union \cS^{\ssZ}$.
\begin{enumerate}
  \item \textbf{For $1 \leq b \leq m_{\ssX}$}, measure a product of $\ssX$ operators:
  \begin{enumerate}
    \item Initialize the $b$\textsuperscript{th} ancilla qubit in $\ket{+}_b$.
    \item Perform a $\cnot$ gate controlled on the $b$\textsuperscript{th} ancilla qubit and targeted on each qubit in the support of $\ssS_b^{\ssX}$.
    \item Perform a measurement of the $b$\textsuperscript{th} ancilla in the $\ssX$ basis.
  \end{enumerate}
  \item \textbf{For $1 \leq c \leq m_{\ssZ}$,} measure a product of $\ssZ$ operators:
  \begin{enumerate}
    \item Initialize the $c$\textsuperscript{th} ancilla qubit in $\ket{0}_c$.
    \item Perform a $\cnot$ gate targeted on the $c$\textsuperscript{th} ancilla qubit and controlled on each data qubit in the support of $\ssS_c^{\ssZ}$.
    \item Perform a measurement of the $c$\textsuperscript{th} ancilla in the $\ssZ$ basis.
  \end{enumerate}
\end{enumerate}
Figure \ref{fig:syndrome-extraction} illustrates gadgets for measuring an $\ssX$ operator of weight $5$ and a $\ssZ$ operator of weight $4$.
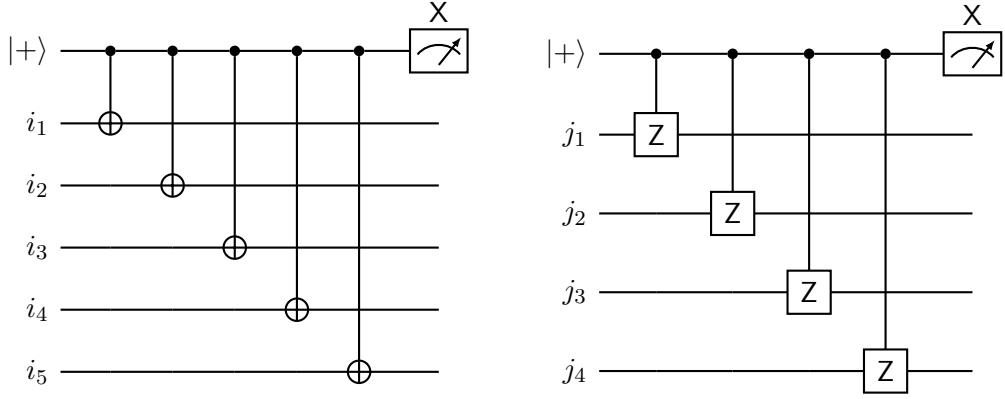
\begin{figure}[h]
  \centering
  \begin{quantikz}[row sep={0.6cm,between origins}]
    \lstick{$\ket{+}$}& \ctrl{1} & \ctrl{2} & \ctrl{3} & \ctrl{4} & \ctrl{5} & \meter{$\ssX$} & \midstick[2,brackets=none]{} & \lstick{$\ket{0}$} & \targ{}   & \targ{}   & \targ{}   & \targ{}   & \targ{}   & \meter{$\ssZ$} \\
    \lstick{$i_1$}    & \targ{}  & \qw      & \qw      & \qw      & \qw      & \qw            &                              & \lstick{$j_1$}     & \ctrl{-1} & \qw       & \qw       & \qw       & \qw       & \qw \\
    \lstick{$i_2$}    & \qw      & \targ{}  & \qw      & \qw      & \qw      & \qw            &                              & \lstick{$j_2$}     & \qw       & \ctrl{-2} & \qw       & \qw       & \qw       & \qw \\
    \lstick{$i_3$}    & \qw      & \qw      & \targ{}  & \qw      & \qw      & \qw            &                              & \lstick{$j_3$}     & \qw       & \qw       & \ctrl{-3} & \qw       & \qw       & \qw \\
    \lstick{$i_4$}    & \qw      & \qw      & \qw      & \targ{}  & \qw      & \qw            &                              & \lstick{$j_4$}     & \qw       & \qw       & \qw       & \ctrl{-4} & \qw       & \qw \\
    \lstick{$i_5$}    & \qw      & \qw      & \qw      & \qw      & \targ{}  & \qw            &                              & \lstick{$j_5$}     & \qw       & \qw       & \qw       & \qw       & \ctrl{-5} & \qw
  \end{quantikz}
  \caption{Performing the syndrome extraction corresponding to the operator $\ssX_{i_1}\ssX_{i_2}\ssX_{i_3}\ssX_{i_4}\ssX_{i_5}$ on the left and $\ssZ_{j_1}\ssZ_{j_2}\ssZ_{j_3}\ssZ_{j_4}\ssZ_{j_5}$ on the right.
  The measurements are performed on some qubits $\{i_1,...,i_5,j_1,...,j_5\} \subseteq [n]$.}
  \label{fig:syndrome-extraction}
\end{figure}
Given a circuit $C$, we use two figures-of-merit to quantify its size:
\begin{enumerate}
  \item $\Space(C)$: the width of the circuit, i.e.\ the total number of physical qubits, data and ancilla, used in the circuit.
  \item $\Depth(C)$: the depth of the circuit, i.e.\ the number of time steps required to measure all syndromes.
\end{enumerate}

We assume operations in $\cK$ can be performed in parallel (subject to the constraint that each qubit is only involved in one operation at a time).
We shall present one way of using parallel operations to build efficient syndrome-extraction circuits for quantum LDPC codes in Section \ref{sec:design-routing}.

\subsection{Noise \& imperfect syndrome-extraction circuits}
\label{subsec:noise-and-imperfect-syndromes}

In practice, $C$ is imperfect.
In general, errors on multiple locations with complicated correlations can arise at the end of a syndrome-extraction circuit.
Under the action of a two-qubit gate for instance, single-qubit errors which occur with probability $p$ can transform into two-qubit correlated errors which occur with probability $p$.
Two-qubit gates themselves can fail and introduce errors on both qubits where there were none before.
As yet another example, small clusters of qubits that are near each other can also fail together, for example, due to crosstalk, stray magnetic fields, etc.
These errors are outside the scope of an i.i.d.\ errors model and hence, we consider a generalization.

\begin{definition}
    Let $n \in \bbN$ and $\Pow(n) = \{E : E \subseteq [n]\}$.
    Consider a probability distribution $\widehat{\Pr}:\Pow(n) \to [0,1]$ and for $E \subseteq [n]$, let $\Pr(E)$ be the total probability
    \begin{align*}
        \Pr(E) = \sum_{E' \supseteq E} \widehat{\Pr}(E')~.
    \end{align*}
    We say the distribution $\Pr$ is locally decaying with rate $p \in [0,1]$ if for all $E \subseteq [n]$,
    \begin{align*}
        \Pr(E) \leq p^{|E|}~.
    \end{align*}
\end{definition}

We first consider general errors on an $n$-qubit state.
We assume every set of qubits has some probability of being corrupted by an arbitrary Pauli error.
Consider a Pauli operator $\ssE' \in \cP_n$ such that $\ssE' = \ssX(\bx')\ssZ(\bz')$.
Let $\widehat{\cE}(\bx', \bz')$ be the probability of the error $\ssE'$.
By definition, $\widehat{\cE}$ is itself a map from $\Pow(n) \times \Pow(n)$ to $[0,1]$.
Let $\cX(\bx): \Pow(n) \to \bbR$ and $\cZ(\bz):\Pow(n) \to \bbR$ denote
\begin{align}
   \cX(\bx) = \sum_{\bx' \supseteq \bx} \sum_{\bz'} \widehat{\cE}(\bx', \bz')~,\qquad
   \cZ(\bz) = \sum_{\bx'} \sum_{\bz' \supseteq \bz} \widehat{\cE}(\bx', \bz')~.
\end{align}
In other words, $\cX$ and $\cZ$ denote the probability that a random error $\ssE$ distributed according to $\widehat{\cE}$ has $\ssX$ and $\ssZ$ components that contain $\bx$ and $\bz$ respectively.
For brevity, we have used $\bx' \supseteq \bx$ and $\bz' \supseteq \bz$ to mean that the supports of $\bx$, $\bz$ are contained in $\bx'$, $\bz'$ respectively.
Treating $\cX$ and $\cZ$ separately in this way does not prevent correlations between $\ssX$ and $\ssZ$ errors.

\begin{definition}[Locally decaying errors on qubits]
    \label{def:lde-qubits}
    Given an $n$-qubit state with Pauli errors distributed according to $\widehat{\cE}$.
    We say that errors are described by a \emph{locally decaying errors model} to mean that $\cX$ and $\cZ$ are both locally decaying distributions with failure rate $p$.
\end{definition}

We want to extend this idea to describe errors caused by faulty circuits.
A \emph{location} in a circuit $C$ refers to a one- or two-qubit gate (including identity), single-qubit preparation or single-qubit measurement operation at some time step $1 \leq t \leq \Depth(C)$.
A fault location is a location which performs a random Pauli operation following the desired Clifford operation.
We assume that a fault location introduces a Pauli operator on the qubits in its support chosen according to some distribution $\widehat{\cF}$.
Given a set $F$ of fault locations in $C$, the support of $F$ is the set $\supp(F) \subseteq [\Space(C)]$ of qubits that are in some location in $F$.

For a set $F'$ of locations, let $\widehat{\cF}(F')$ denote the probability of the set of locations $F'$ being faulty.
For a set $F$ of fault locations, the total probability $\cF(F)$ is
\begin{align}
    \cF(F) = \sum_{F' \supseteq F} \widehat{\cF}(F')~.
\end{align}

\begin{definition}[Locally decaying faults on circuits]
    \label{def:lde-circuits}
    Let $C$ be a depth $1$ circuit with faults distributed according to $\widehat{\cF}$.
    We say that the faults are described by a \emph{locally decaying faults model} if $\cF$ is a locally decaying distribution with failure rate $\pphys$---for all sets of locations $F$,
    \begin{align*}
        \cF(F) \leq \pphys^{|F|}~.
    \end{align*}
\end{definition}
Note that the probability of failure falls with the number of locations $|F|$ and not the number of qubits $|\supp(F)|$.

In practice, different locations may have different failure rates.
To prove the existence of a threshold, we assume that $\pphys$ is the maximum failure probability across all gates.
We return to this assumption in Section~\ref{sec:numerical estimates}, where we discuss how the logical failure rate behaves if gates have different failure rates.

Definition~\ref{def:lde-circuits} pertains to circuits of depth $1$---we assume faults in successive time steps are independent.
In a more general model for faults, we could include arbitrary fault patterns for a circuit of growing depth so long as the probability of a particular fault path falls exponentially with the size of the fault path.

As a state undergoes circuit operations, errors can spread and accumulate.
Consider a $\cnot$ gate acting on two qubits.
Figure~\ref{fig:cnot-evoln} illustrates how a generating set of $2$-qubit Pauli operators $\{\ssX \ssI, \ssI \ssZ, \ssI \ssX, \ssZ \ssI\}$ on these two qubits evolve under ideal $\cnot$.
The error doubles in size in the worst-case scenario.
As shorthand, we say that Pauli operators `flow' within circuits to refer to this spreading.
$\ssX$ operators flow down a $\cnot$ and $\ssZ$ operators flow up.

  \begin{figure}[h]
    \centering
    \begin{tikzpicture}
        \node at (-0.9,1.2) {$(a)$};
        \draw (0,0) to (1,0);
        \draw[fill=black] (1,1) circle (0.07);
        \draw (1,0) circle (0.13);
        \draw (1,1) to[-] (1,-0.13) (0.87,0) to[-] (1.13,0);
        \draw[red,thick] (1,1) to[-] (2,1);
        \draw[color=red,rounded corners,thick] (0,1) -- ++ (1,0) |- (2,0);
        \node at (-0.3,1) {{\color{red} $\ssX$}};
        \node at (-0.3,0) {$\ssI$};
        \node at (2.3,1) {{\color{red} $\ssX$}};
        \node at (2.3,0) {{\color{red} $\ssX$}};
    \begin{scope}[shift={(5,0)}]
        \node at (-0.9,1.2) {$(b)$};
        \draw (0,1) to (1,1);
        \draw[fill=black] (1,1) circle (0.07);
        \draw (1,0) circle (0.13);
        \draw (1,1) to[-] (1,-0.13) (0.87,0) to[-] (1.13,0);
        \draw[red,thick] (1,0) to[-] (2,0);
        \draw[color=red,rounded corners,thick] (0,0) -- ++ (1,0) |- (2,1);
        \node at (-0.3,0) {{\color{red} $\ssZ$}};
        \node at (-0.3,1) {$\ssI$};
        \node at (2.3,1) {{\color{red} $\ssZ$}};
        \node at (2.3,0) {{\color{red} $\ssZ$}};
    \end{scope}
    \begin{scope}[shift={(0,-2.5)}]
        \node at (-0.9,1.2) {$(c)$};
        \draw (0,1) to (2,1) (0,0) to (2,0);
        \draw[fill=black] (1,1) circle (0.07);
        \draw (1,0) circle (0.13);
        \draw (1,1) to[-] (1,-0.13) (0.87,0) to[-] (1.13,0);
        \node at (-0.3,0) {$\ssX$};
        \node at (-0.3,1) {$\ssI$};
        \node at (2.3,1) {$\ssI$};
        \node at (2.3,0) {$\ssX$};
    \end{scope}
    \begin{scope}[shift={(5,-2.5)}]
        \node at (-0.9,1.2) {$(d)$};
        \draw (0,1) to (2,1) (0,0) to (2,0);
        \draw[fill=black] (1,1) circle (0.07);
        \draw (1,0) circle (0.13);
        \draw (1,1) to[-] (1,-0.13) (0.87,0) to[-] (1.13,0);
        \node at (-0.3,0) { $\ssI$};
        \node at (-0.3,1) {$\ssZ$};
        \node at (2.3,1) {$\ssZ$};
        \node at (2.3,0) {$\ssI$};
    \end{scope}
    \end{tikzpicture}
    \caption{Evolution of Pauli errors under the action of $\cnot$.
    The first qubit is the control qubit and the second qubit is the target.
    The operators $\ssX \otimes \ssI$ and $\ssI \otimes \ssZ$ double in size.
    The red paths show how $\ssX$ `flows down' a $\cnot$ gate and $\ssZ$ `flows up' a $\cnot$ gate.
    }
    \label{fig:cnot-evoln}
  \end{figure}
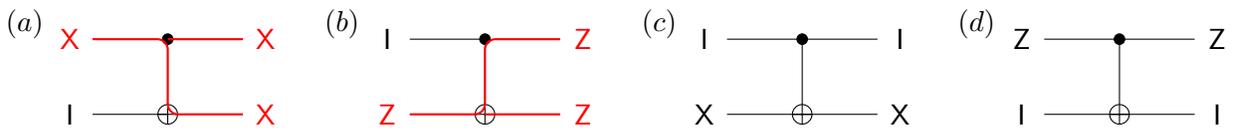

The structure of syndrome-extraction circuits is special.
For $\ssP \in \{\ssX, \ssZ\}$, a controlled-$\ssP$ gate within the syndrome-extraction circuit uses ancilla qubits as control qubits and data qubits as target qubits (See Figure~\ref{fig:syndrome-extraction}).
This means that errors only flow in limited ways---for example, $\ssX$ errors always flow from ancilla qubits to data qubits, and $\ssZ$ errors flow from data qubits to ancilla qubits when $\cnot$ gates are applied.

Implementing the imperfect circuit $C$, we obtain an imperfect syndrome.
To overcome this problem, we repeat the syndrome-measurement circuit for $r$ \emph{rounds}.
Let $\bsig = (\bsig_{\ssX}^{(1)}, \bsig_{\ssZ}^{(1)},...,\bsig_{\ssX}^{(r)},\bsig_{\ssZ}^{(r)})$ be the $r$ faulty syndromes.

\textbf{Failure rate per round:}

Consider a corrupted code state $\ssE \ket{\psi}$ where $\psi$ is a code state and $\ssE = \ssX(\be_x) \ssZ(\be_z)$ is some Pauli operator.
If the syndrome-extraction circuit $C$ has no faults, the joint state of the data and ancilla qubits after one round of syndrome extraction is described by
\begin{align}
    \ssE \ket{\psi} \otimes \ssZ(\bsig) \ket{+}^{\otimes m}~,
\end{align}
where $\bsig$ represent the ideal syndromes for $\ssX$- and $\ssZ$-type stabilizer generators.

However, because of faults in the circuit, the state after the circuit is
\begin{align}
    (\ssD \otimes \ssA )(\ssE \ket{\psi} \otimes \ssZ(\bsig) \ket{+}^{\otimes m})~,
\end{align}
where $\ssD$ and $\ssA$ represent errors on the data and ancilla qubits respectively caused by faults in $C$ that then spread.

Let $\widehat{\cE}'(\ssD \otimes \ssA)$ denote the probability of errors \emph{per round} on the qubits.
Let $\cX'$, $\cZ'$ denote the induced distributions for errors on data and ancilla qubits of $\ssX$ and $\ssZ$ type respectively.

\begin{definition}[Probability of errors per round]
\label{def:pround}
We say that the probability of errors per round is locally decaying with failure rate $\pround \in [0,1]$ such that $\cX'$, $\cZ'$ are locally decaying distributions with failure rates $\pround$ respectively.
\end{definition}
Definition~\ref{def:pround} thus considers one round of syndrome extraction not as individual operations, but in aggregate; it then associates a failure probability $\pround$ with the entire round associated with the probability of witnessing $\ssX$ and $\ssZ$ errors.
Thus, $\pround$ can be a function of the code size $N$, as well as other details of the implementation such as the specific syndrome-extraction circuit used.

A priori, $\widehat{\cE}'$ can depend on $r$ and the input error $\ssE$.
However, as entangling gates restrict the direction of error propagation, errors do not propagate from one data qubit to another or from one ancilla qubit to another.
In Section~\ref{subsec:evoln-syndrome-ext}, we use this to show that $\pround$ does not depend on how many prior rounds of the syndrome-extraction circuit have already been applied.
We show that $\pround$ is a function of $\pphys$ of the form $a \cdot \pphys^{b}$, where $a$ is a function of the depth $\Depth(C)$ and $b$ is a function of the degrees $\Delta_q$ and $\Delta_g$.

\paragraph{Recovering the state:}
After performing $r$ rounds of syndrome extraction, a \emph{decoding algorithm} $\dec: (\bbF_2^m)^{\times r} \to \cP_n$ maps the observed syndrome $\bsig$ to a deduced error.

The applied correction may not completely correct all errors due to faults in the syndrome extraction circuit.
We declare success if, after applying the correction, the final state is `not too far' from the desired output of the ideal circuit $C$.
To this end, we consider the ideal recovery map $\cR$ \cite{aliferis2005quantum}---a fictitious quantum channel that is not subject to geometric constraints or noise.
We gauge the accuracy of the circuit $\widetilde{C}$ using the logical failure probability $p_{\cQ}$, which is the probability that the residual error is correctable by the ideal recovery map.
To be precise, $p_{\cQ}$ is the probability that \emph{any} logical qubit fails in one round of error correction.
The probability $p_{\cQ}$ also referred to as the \emph{Word Error Rate} (WER).

\paragraph{Ideal recovery map \& Thresholds:}

To understand whether a scheme is scalable, we are interested in properties of a \emph{family} of codes $\{\cQ_n\}$ to process an ever increasing number $n$ of qubits.
Consider a code family $\{\cQ_n\}$ and suppose errors are described by a locally decaying distribution $\cE$ with failure rate $\pin$.
Let $\{C_n\}$ be the corresponding set of syndrome-extraction circuits to $\{\cQ_n\}$, where faults are described by $\cF$, a locally decaying distribution with failure rate $\pphys \in [0,1]$.
We can compute $\pround$ as a function of $\pphys$ as shown in Section~\ref{subsec:evoln-syndrome-ext}.

For our purposes, we say that the family has a \emph{threshold} with respect to the noise model and decoding algorithm if there exists a pair $\qin$, $\qround \in (0,1]$ such that if
\begin{align}
    \pin < \qin~, \qquad \pround < \qround ~,
\end{align}
the probability of logical failure $p_{\cQ} \to 0$ as the size of the code $n \to \infty$.
The logical probability of failure is defined with respect to family of ideal recovery maps.
It depends on $\pin$ and $\pphys$ and the thresholds.

Whether a threshold exists with respect to a given noise model, the exact value of the threshold, as well as how quickly the logical failure probability decreases as a function of $n$ (e.g. polynomially or exponentially), depend not only on the choice of quantum error-correcting code $\cQ_n$, but also the implementation of the syndrome-extraction circuit $C_n$ and the decoding algorithm.
In our construction, the code family is a concatenated code where the syndrome-extraction circuit is subject to constraints on geometric locality.

While the state after error correction is `close enough' to the codespace, undoing the deduced error may not correct all errors.
The remaining errors on the state are described by $\cE_{\mathrm{res}}$ that is a locally decaying distribution with failure rate $p_{\mathrm{res}}$.
We can perform another round of error correction and thereby keep the state alive for arbitrary duration if $p_{\mathrm{res}} < \pin$.
For this reason, we will specify the residual failure rate after error correction in addition to the logical failure probability $p_{\cQ}$.

\subsection{Concatenated codes}
\label{subsec:concat-review}

A concatenated code is a quantum code obtained via the composition of two codes, an inner code $\Qin$ and an outer code $\Qout$.
We consider the simple case of a $\dsl n_{0}, 1, d_{0}\dsr$ code $\Qin$ that only encodes $1$ qubit and a suitable $\dsl n,k,d\dsr$ code $\Qout$.

\definecolor{blue2}{HTML}{DBF0F5}
\definecolor{blue1}{HTML}{63A8B9}
\definecolor{blue0}{HTML}{52747D}
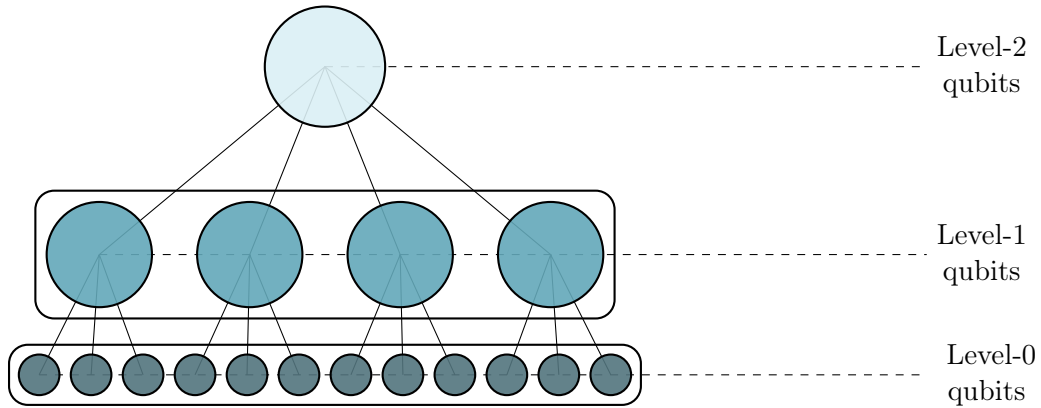
\begin{figure}[h]
    \centering
    \begin{tikzpicture}
        \tikzmath{
            \levelzerorad = 0.27;
            \levelonerad = 0.7;
            \levelzerospace = 2*3.8/11;
            \levelzeroboxpad = 0.4;
            \leveloneboxpad = 0.85;
        }

        \draw[dashed] (0,1.8)--(8,1.8) node[right,align=center] {Level-2 \\ qubits};
        \draw[dashed] (-3,-0.7)--(8,-0.7) node[right,align=center] {Level-1 \\ qubits};
        \draw[dashed] (-3.8,-2.3)--(8,-2.3) node[right,align=center] {{ Level-0} \\ qubits};
        
        \draw (0, 1.8) node(C0) {};
        \foreach \i in {0,...,3} {
            \draw[] (-3,-0.7) ++(2*\i,0) node(B\i) {};
        }
        \foreach \i in {0,...,11} {
            \draw (-3.8,-2.3) ++(\levelzerospace*\i,0) node(A\i) {};
        }

        \draw[]
            (C0.center) -- ( B0.center)
            (C0.center) -- ( B1.center)
            (C0.center) -- ( B2.center)
            (C0.center) -- ( B3.center)
            (B0.center) -- ( A0.center)
            (B0.center) -- ( A1.center)
            (B0.center) -- ( A2.center)
            (B1.center) -- ( A3.center)
            (B1.center) -- ( A4.center)
            (B1.center) -- ( A5.center)
            (B2.center) -- ( A6.center)
            (B2.center) -- ( A7.center)
            (B2.center) -- ( A8.center)
            (B3.center) -- ( A9.center)
            (B3.center) -- (A10.center)
            (B3.center) -- (A11.center);
        
        \filldraw[thick, fill=blue2, fill opacity=0.9] (C0) circle(0.8);
        \foreach \i in {0,...,3} {
            \filldraw[thick, fill=blue1, fill opacity=0.9] (B\i) circle(\levelonerad);
        }
        \foreach \i in {0,...,11} {
            \filldraw[thick,fill=blue0, fill opacity=0.9] (A\i) circle(\levelzerorad);
        }
        
        \draw[rounded corners=0.25cm, thick] ($(A0) + (-\levelzeroboxpad,-\levelzeroboxpad)$) rectangle  ($(A11) + (\levelzeroboxpad,\levelzeroboxpad)$) {};
        \draw[rounded corners=0.25cm, thick] ($(B0) + (-\leveloneboxpad,-\leveloneboxpad)$) rectangle  ($(B3) + (\leveloneboxpad,\leveloneboxpad)$) {};
    \end{tikzpicture}
    \caption{Visualizing a concatenated code $\cH$.}
    \label{fig:concat}
\end{figure}

\paragraph{Code parameters:}
The concatenated code, denoted $\Qconcat$ with parameters $\dsl N,K,D\dsr$, is constructed by replacing each qubit of the code $\Qout$ by a copy of $\Qin$, resulting in $n$ copies of the inner code $\Qin$.
The benefit of this construction is that the distance $D$ of the code $\Qconcat$ is amplified with respect to the constituent codes.
To be precise,
\begin{align*}
    N = n\cdot n_{0}~, \qquad K = k~, \qquad D = d \cdot d_{0}~.
\end{align*}
The physical qubits are referred to as Level-0 qubits, the logical qubits of $\Qin$ which form the block $\Qout$ are referred to as Level-1 qubits, and the logical qubits of $\Qconcat$ are referred to as Level-2 qubits.
See the schematic in Figure~\ref{fig:concat}.

When errors on qubits are distributed in an i.i.d.\ manner, the advantage of concatenation becomes apparent when we ``coarse grain'' details of the concatenated code.
Consider a simple setting where qubits are subject to independent $\ssX$ and $\ssZ$ errors.
Suppose we use the code $\Qout$ without concatenation.
By assumption, the probability of failure of each of the physical qubits is $p$.
However, after concatenation, the probability of failure of the Level-1 qubits is suppressed---it fails with probability proportional to $p^{d_{0}/2}$.
This is because at least $d_0/2$ errors are required to cause a logical error for $\Qin$.
The inner code thus adds an extra layer of protection and consequently, the logical failure rate for the outer code is that much lower.
As we shall see, we have to be more careful when making this sort of argument in the context of circuits.

\paragraph{Syndrome-extraction circuit:}

Let $C^{\cQ_0}$ and $C^{\cQ}$ denote the syndrome-extraction circuits for $\Qin$ and $\Qout$ respectively such that both can be implemented in $2$ dimensions using $\cK$, the set of local Clifford operations and $R$-local $\swapp$ gates.
To implement a $\cnot$ between distant qubits, we may need to permute qubits using $\swapp$ gates to bring them within range of a two-qubit gate.
We discuss to how to design such a permutation in Section~\ref{sec:design-routing}.

A syndrome-extraction circuit $\CFT$ for the concatenated code $\Qconcat$ can be expressed in terms of the syndrome-extraction circuits $C^{\cQ_0}$ and $C^{\cQ}$.
Each data and ancilla qubit in the syndrome-extraction circuit for $C^{\cQ}$ is now replaced with a copy of $\Qin$.
Each gate in $C^{\cQ}$ is replaced by the corresponding logical Clifford gate between Level-1 qubits.
Thus, even for constructing a quantum memory, we need to understand how to perform a restricted set of inner code logical operations in a fault-tolerant manner.
We perform error correction either after the logical gate or in an interleaved manner.
We discuss this in the context of our explicit architecture in Section~\ref{sec:surface-code}.

The ideal recovery map $\cRh$ for $\cH$ is obtained by first decoding the $n$ copies of the inner code $\Qin$ using $\cR_0$ and then decoding the outer code using $\cR_{\cQ}$.
Here $\cR_{\cQ_0}$ and $\cR_{\cQ}$ refer to the ideal recovery maps for $\Qin$ and $\Qout$ respectively.
Thus $\cRh = \cR_{\cQ} \circ (\cR_{\cQ_0})^{\otimes n}$.

We generalize the notion of location in the context of circuits.
A Level-1 location refers to a Level-1 gate, including the error correction rounds.
The location is faulty if it implements the incorrect logical operation on the Level-1 qubits in its support.
In the context of $C^{\cQ}$, a single Level-1 location in the circuit could refer to a $\swapp$ gate or an entangling gate or a preparation or measurement of a logical qubit of $\cQ_0$.

When ``coarse graining'' circuits for concatenated codes, more care is needed than the i.i.d.\ errors setting.
We illustrate using the following examples.

\textbf{Problem \# 1: Level-1 failure rates are not additive}

Consider a $n_0$-qubit code state of the inner code $\rho^{(0)}_{\mathrm{in}}$ with Level-0 errors $\ssE_{\mathrm{in}}$.
The error $\ssE_{\mathrm{in}}$ is not catastrophic---the ideal decoder $\cR_0$ can correct it.
The state is therefore correctable.

Consider the syndrome-extraction circuit $C^{\cQ_0}$ with Level-0 faulty locations $F$.
Suppose there is some error supported on $\supp(F)$ but that this error is not a logical error.
We may then be tempted to extend the notion of correctability to include circuits and declare the circuit $C^{\cQ_0}$ correctable.
However, this is misleading as a correctable circuit acting on a correctable input state need not produce a correctable output state.

Let $\rho^{(0)}_{\mathrm{out}}$ denote the output state and $\ssE^{(0)}_{\mathrm{out}}$ denote the errors on this state.
Suppose the faulty locations $F$ result in an error $\ssE_F$.
The product $\ssE_{\mathrm{in}} \cdot \ssE_F$ might not be correctable.
In addition, the errors $\ssE_{\mathrm{in}}$ and $\ssE_F$ can spread in unpredictable ways within the circuit.
We therefore cannot calculate the Level-1 output failure probability by merely knowing the input state and the faults individually resulted in correctable errors.
We need additional structure.

\textbf{Problem \# 2: Level-0 failure rate is not always sustainable}

Secondly, the thresholds are decoder dependent.
By definition, the ideal decoder $\cR_0$ has no faults; if the errors on the state $\rho^{(0)}_{\mathrm{out}}$ are correctable, then $\cR_{\cQ_0}$ is successful.
On the other hand, $C^{\cQ_0}$ can contain faults and may be unable to deal with as many errors as the ideal decoder $\cR_{\cQ_0}$.
This can result in instances where the output state $\rho^{(0)}_{\mathrm{out}}$ will be correctable by $\cR_0$; by our criteria for success, the output state is decodable.
However, the number of residual errors may be above the threshold for error correction.
In other words, as error correction is itself faulty, these faults can combine with existing errors to cause a logical failure.

In our construction, we address these problems in Section~\ref{subsec:coarse-grain-concat}.
We shall show that for sufficiently low failure rates, we can indeed ignore dealing with the syndrome-extraction circuit for the outer code assuming a failure rate that depends on the inner code.
This statement relies on the structure of LDPC codes and surface codes.
We now proceed to review these codes.

\subsection{Constant-rate LDPC codes}
\label{subsec:ldpc-review}

An $\dsl n,k,d\dsr$ code family $\{\cQ_n\}$ is said to be a low-density parity-check code if
\begin{enumerate}
  \item each stabilizer generator $\ssS_i$, $i \in [m]$, only acts non-trivially on at most a constant number $\Delta_g$ of qubits for all elements in $\{\cQ_n\}$.
  \item each qubit only participates in at most a constant number $\Delta_q$ of stabilizer generators for all elements in $\{\cQ_n\}$.
\end{enumerate}
To include the degree of stabilizer generators and qubits, we shall say that a code family $\{\cQ_n\}$ is an $\dsl n,k,d, \Delta_q, \Delta_g \dsr$ LDPC family.

We will choose the outer code to be a code with constant rate, i.e. $k = \Theta(n)$.
Constructing a constant-rate LDPC code is a non-trivial task because there is a conflict between the constraints on stabilizer generators.
On one hand, all stabilizer generators need to commute with each other to form a well-defined stabilizer code; on the other hand, the stabilizer generators need to have weight at most $\Delta_g$.
Despite these difficulties, there exists constant-rate quantum LDPC codes, i.e.\ $k(n) = \Theta(n)$, with distance $d(n) = \Theta(n^{\delta})$ for $0 < \delta \leq 1$.

LDPC codes have a threshold \cite{kovalev2013fault,gottesman2014fault} \emph{if} operations in $\cK$ are not subject to any locality constraints.
In this setting, we can construct syndrome-extraction circuits where each qubit is involved in a constant number of two-qubit gates.

Consider a family of $\dsl n,k,d, \Delta_q, \Delta_g \dsr$ quantum LDPC codes $\{\cQ_n\}$ where $k = \rho \cdot n$ for some constant $\rho > 0$ and distance $d = \Theta(n^{\delta})$ for some $\delta > 0$.
Suppose qubits are subject to the following errors:
\begin{enumerate}
    \item the input state is subject to locally decaying errors with failure rate per qubit $\pin$.
    \item the syndrome-extraction circuit is subject to locally decaying faults with failure rate per gate $\pphys$.
\end{enumerate}
We restate a result from Gottesman \cite{gottesman2014fault} (Theorem 4) which guarantees the existence of a threshold for arbitrary LDPC codes.
In this construction, we require $r = d(n)$ rounds of syndrome extraction.
After syndrome extraction, the (imperfect) syndromes are processed by a minimum-weight decoder $\dec$.
We do not describe the decoder in detail here and merely note that it exists.
For generic LDPC codes, the minimum-weight decoder is not necessarily efficient.

There exist $\qin$, $\qround$ in the interval $(0,1]$ such that when
\begin{equation}
    \pin \leq \qin~, \qquad \pround \leq \qround~,
\end{equation}
the following is true.

The minimum-weight decoder $\dec$ yields a correction such that:
\begin{enumerate}
  \item the final state is recoverable by an ideal recovery operator $\cR_{\cQ}$ with probability at least to $1-p_{\cQ}(n)$ where $ p_{\cQ}(n) := \exp[-\Theta(d(n))]$.
  To be precise, $p_{\cQ}(n)$ is the probability of failure \emph{per round of syndrome extraction}.
  \item the physical qubits have residual errors that are described by a locally decaying error model with failure rate at most $\pround$.
\end{enumerate}
The first condition guarantees that the probability of logical failure falls exponentially with the distance of the code.
It is worth noting that we declare a logical failure if \emph{any} logical qubit fails.
This is \emph{qualitatively} different from codes that only encode a constant number of qubits.

The second condition on the residual error is not what is in the theorem statement of Theorem 4 of \cite{gottesman2014fault}; however, the proof implies it.
For sufficiently low values of $\pphys$, it guarantees that we can continue to perform error correction for arbitrarily many rounds (conditioned on no logical errors).
In other words, we require $\pround < \pin$.

We highlight that this result applies to arbitrary LDPC codes, i.e.\ it is independent of the rate of the code.
In particular, it applies to the surface code.

We note that the threshold is stated in terms of $\pround$, and not directly in terms of $\pphys$.
This is for two reasons: (1) this is how Theorem 4 of \cite{gottesman2014fault} is itself stated, and (2) in our construction, the dependence of $\pround$ on $\pphys$ can change depending on the depth of the syndrome-extraction circuit.
Stating the thresholds in this manner will allow us to derive the functional dependence between $\pround$ and the depth of the syndrome-extraction circuit.
In Gottesman's construction \cite{gottesman2014fault}, the syndrome-extraction circuit is constant depth and therefore $\pround$ is also a constant.
In contrast, our construction is more complicated because of constraints on geometric locality.

It is known that codes defined by geometrically-local stabilizer generators in $2$ dimensions cannot achieve both constant rate and growing distance \cite{bravyi2009no,bravyi2010tradeoffs}.
To achieve a constant rate and distance $d = \Theta(n^{\delta})$ with fixed degrees $\Delta_q$ and $\Delta_g$, the amount of non-locality scales with the parameters $k$ and $d$ \cite{baspin2021connectivity,baspin2021quantifying}.
In other words, there exist $\Theta(n)$ stabilizer generators such that qubits in their support cannot be close to each other in the 2-dimensional lattice.
In the context of syndrome-extraction circuits, the result by Delfosse \emph{et al}.\ \cite{delfosse2021bounds} states that the depth of the syndrome-extraction circuit will grow when we only have geometrically-local gates and a limited number of ancilla qubits.
(Recall Equation~\eqref{eq:dbt}.)

In Section \ref{sec:ft-asymptotics}, we show that $\pround$ grows if the syndrome-extraction circuit $C$ is constrained by geometric locality.
In other words, it is \emph{not} constant and we need an approach different than Gottesman's to prove the existence of a threshold.
In our alternative approach using the hierarchical code, the growth of the inner code suppresses Level-1 logical errors sufficiently to ensure that the Level-2 logical failure rate drops rapidly as the outer LDPC code scales up. 

Finally, we discuss the choice of quantum LDPC code.
While the result above applies to generic quantum LDPC codes, more is known about specific constructions.
Quantum expander codes are one family of constant-rate quantum LDPC codes for which $d = \Theta(\sqrt{n})$ \cite{tillich2014quantum,leverrier2015quantum}.
It has been rigorously proven that these codes can be equipped with an \emph{efficient} decoder called $\ttt{small}$-$\ttt{set}$-$\ttt{flip}$ \cite{fawzi2018constant,fawzi2018efficient}.
Furthermore, it was shown that the decoder is \emph{single shot} meaning that it only requires a constant number of rounds of syndrome measurements for the decoder to function even when the syndrome is noisy.
Similar to Gottesman's requirements for the existence of a threshold, all that is needed in Fawzi \emph{et al}.\ \cite{fawzi2018constant} is for $\pround$ to remain constant.
However, unlike Gottesman's construction, it was shown that these codes have an efficient decoding algorithm that only require a constant number of rounds of syndrome measurement.
Thus, if we wish to implement a quantum expander code, we can use the same machinery presented in this paper to justify an efficient single-shot decoder for the outer code.

We refer to LDPC codes with distance scaling as $d = \Theta(n)$ as good codes.
For nearly 2 decades, it was unclear whether good codes even existed.
Following a series of breakthroughs, \cite{panteleev2020quantum,evra2020decodable,kaufman2021new,hastings2021fiber,breuckmann2021balanced}, this impasse was famously crossed first by Panteleev \& Kalachev \cite{panteleev2021asymptotically} and later by Leverrier \& Z\'emor \cite{leverrier2022quantum}.
Furthermore, these codes have the single-shot property (albeit with an inefficient decoder) as guaranteed by Quintavalle \emph{et al}.\ \cite{quintavalle2021single}.

In this paper, we do not place any constraints on the outer LDPC code other than it have constant rate $\rho >0$ and distance $d = \Theta(n^{\delta})$ for $1/2 \leq \delta \leq 1$.
Our construction works for all $\delta$, but we choose $\delta \geq 1/2$ to simplify some theorem statements.

\subsection{Surface codes}
\label{subsec:surface-review}

We consider the rotated surface code \cite{bravyi1998quantum,horsman2012surface}, arguably the simplest code that can be laid out on a $2$-dimensional lattice.
The surface code is an LDPC code, albeit with vanishing asymptotic rate.

An example is shown in Figure \ref{fig:surface}.
The code is implemented on a \emph{rotated} lattice, i.e.\ the points of the lattice correspond to the vertices of squares that run in $45$ degree angles relative to the $x$ and $y$ axes.
The points of the lattice are labeled $(a,b)$ where $a,b \in \mathbb{Z}/2$.
Each black dot represents a data qubit; these are located on integer points, i.e. on points $(a,b)$ where $(a,b) \in \mathbb{Z}$.
Each colored dot represents a syndrome qubit; these are located on half-integer points, i.e. on points $(a+1/2,b+1/2)$ where $(a,b) \in \mathbb{Z}$.
Corresponding to each blue face, we define an $\ssX$-type stabilizer generator that jointly measures $\ssX^{\otimes 4}$ on adjacent data qubits.
Similarly, corresponding to each red face, we define a $\ssZ$-type stabilizer generator that jointly measures $\ssZ^{\otimes 4}$ on adjacent qubits.
The semi-circles represent stabilizers that only act on two qubits in their support, i.e.\ they measure $\ssX^{\otimes 2}$ or $\ssZ^{\otimes 2}$ jointly.

The rotated surface code $\cRS_{\ell}$ encodes exactly one qubit and has distance $d_{\ell}$.
It uses $d_{\ell}^2$ data qubits and $d_{\ell}^2 -1$ syndrome qubits.
The total number of qubits is thus $\ell^2 = 2d_{\ell}^2-1$.
We use $\ell$ to parameterize the code family.
We also refer to each code as a tile.

\begin{figure}[h]
  \centering
  \includegraphics[scale=0.7]{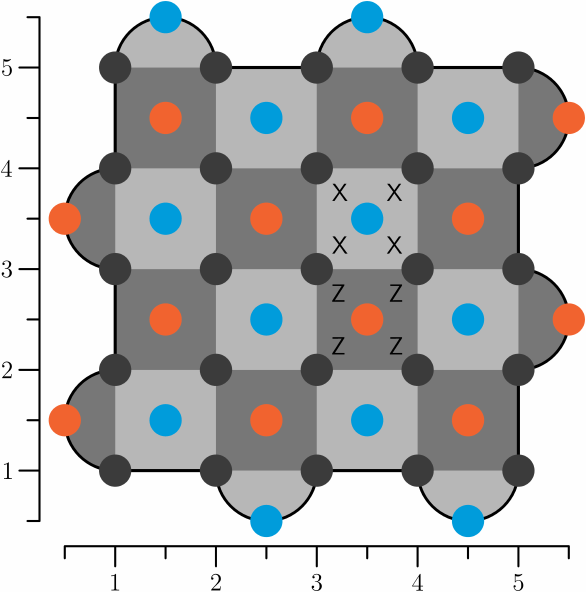}
  \caption{A surface code of distance $d_{\ell} = 5$.
  Each dark gray dot represents a data qubit.
  Light faces correspond to $\ssX$ checks and dark faces correspond to $\ssZ$ checks.
  They are measured using the qubit represented as a blue or orange dot respectively in the center of each face.
  Note that data qubits reside at integer points $\mathbb{Z}^2$ and ancilla qubits reside at the points of this lattice shifted by $(1/2, 1/2)$.}
  \label{fig:surface}
\end{figure}

Using operations $\cK$, the syndrome-extraction circuit for the surface code has depth $6$. 

\textbf{Thresholds for error correction:}
To motivate our noise model, we consider a simple setting where $n=1$, i.e.\ we have a single tile $\cRS_{\ell}$.
Suppose we are given physical qubits, each qubit in some fixed computational-basis state, and use the syndrome-measurement circuit to \emph{project} this state onto a (fixed) code state of the surface code.
If the physical qubits are subject to locally decaying errors at failure rate $\pphys^{(0)}$, we can derive the Level-1 probability of failure $p_{\cRS}^{(1)}(\ell)$ for the surface code.
The superscripts denote the noise on Level-1 and Level-0 qubits respectively.

Contrast this with the scenario where we obtain the surface code from another party.
Upon receipt, we are only informed that the tile has already failed with failure rate $\pin^{(1)}$; we do not have additional information, such as syndrome histories from prior rounds of error correction.
If the code has not \emph{already} failed, then we are guaranteed that the physical failure rate is $\pphys^{(0)}$.
Failure after error correction can thus result in two ways: either the tile fails prior to us receiving the state with probability $\pin^{(1)}$ \emph{or} conditioned on it being correct, it fails because of error correction with probability $p_{\cRS}^{(1)}(\ell) = \exp(-c_{\mathrm{EC}} \cdot \ell)$ for some positive number $c_{\mathrm{EC}}$ that does not depend on $\ell$.
The Level-1 failure rate after error correction is thus $\pin^{(1)} + p_{\cRS}^{(1)}(\ell)$.

When performing \emph{repeated} rounds of error correction, we require the Level-1 failure rate $\pin^{(1)}$ to bound the probability that the code has already failed in prior rounds.

Surface codes will form the inner code in our concatenated construction.
Consider the syndrome-extraction circuit $C$ for the constant-rate LDPC code $\cQ_n$.
Suppose $\Space = \Space(C)$ is the width of the circuit.
We require an arrangement of $\cRS_{\ell}^{\otimes \Space}$ in two dimensions.
In Section~\ref{sec:surface-code}, we introduce a \emph{bilayer} architecture for arranging tiles in two parallel layers.
We return to the explicit description of this layout in Section~\ref{sec:surface-code}.

Consider an input state of the code $\cRS_{\ell}^{\otimes \Space}$.
The errors are distributed in the following manner:
\begin{enumerate}
    \item Level-1 errors are described by $\cE^{(1)}$, a locally decaying distribution with failure rate $\pin^{(1)}$.
    \item Level-0 errors are described by $\cE^{(0)}$, a locally decaying distribution with failure rate $\pin^{(0)}$.
\end{enumerate}
Faults in the syndrome-extraction circuit are described by $\cF^{(0)}$, a Level-0 locally decaying distribution with failure rate $\pphys^{(0)}$.

The code $\cRS_{\ell}^{\otimes \Space}$ is itself an LDPC code (with vanishing rate asymptotically), and therefore, we can apply Theorem 4 from \cite{gottesman2014fault}.
We note that although the original theorem is itself is not stated in this way, the proof implies the following.

There exist thresholds $\qin^{(0)}$, $\qround^{(0)}$ on Level-0 failure rates such that, below threshold, the probability of logical failure after error correction is described by a locally decaying Level-1 error $\pin^{(1)} + p_{\cRS}^{(1)}(\ell)$ where $p_{\cRS}^{(1)}(\ell) = \exp(-c_{\mathrm{EC}} \cdot \ell)$ for some positive number $c_{\mathrm{EC}}$ that is independent of $\ell$.

In addition, the state after error correction is described by locally decaying errors with failure rate proportional to $\pround^{(0)}$.
This guarantees that if we are sufficiently below threshold, then the number of residual errors is low enough such that we can apply another round of error correction.

Unlike the case for general LDPC codes, surface codes possess a minimum weight decoder that runs in $\poly(n)$ time by mapping the decoding problem to a minimum-weight perfect matching problem.

\textbf{Logical Clifford operations:}
As highlighted in the subsection on concatenated codes, we need to implement logical Clifford operations for the surface code to be able to use it within a concatenated construction.
Extending our notation from Definition \ref{def:local-cliff}, we let $\cK_0$ denote the physical geometrically-local Clifford gates and $R$-local $\swapp$ operation on the physical qubits.
Let $C_0$ be the syndrome-extraction circuit for $\cRS_{\ell}$ constructed using $\cK_0$.

Let $\cK_1$ denote the corresponding logical operations on the surface code.
Single-tile operations in $\cK_1$ --- state preparation in a fixed stabilizer state, (destructive) measurement of logical Pauli operators and applying Pauli corrections --- can be performed using operations in $\cK_0$ regardless of how two-tile gates are implemented.
The only Clifford operations we require are two-tile operations: $\cnot$ and $R$-local $\swapp$.
These are discussed in Section~\ref{sec:surface-code}.

\section[]{Permutation routings on sparse graphs in two dimensions}
\label{sec:design-routing}

In this section, we prove Theorem~\ref{thm:permutnrouting}, restated here for convenience.
\begin{theorem*}
  For $R$ even, there is an efficient construction of a degree-12 graph $G = (V,E)$ whose vertex set $V$ is identified with an $L \times L$ lattice with edges of length at most $R$.
  Any permutation $\alpha: V \to V$ can be performed in depth $3L/R + O(\log^2 R)$.
\end{theorem*}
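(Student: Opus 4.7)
The plan is to exhibit $G$ as (essentially) the Cartesian product of a coarse $M \times M$ grid with $M = L/R$ and a fine graph $H$ on $R^2$ vertices, and then apply the Annexstein--Baumslag product-routing algorithm (Section~\ref{subsec:routing}) to combine routings on each factor additively. I would identify the vertices of $V$ with pairs $((i,j),(a,b))$ where $(i,j) \in [M]^2$ indexes a super-cell of side length $R$ and $(a,b) \in [R]^2$ indexes a position within it. The coarse factor is realized by ``long'' edges of length $R$ that connect corresponding lattice points in adjacent super-cells, and the fine factor is realized inside each $R \times R$ tile.

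First I would analyze the coarse routing. The $M \times M$ grid is itself the Cartesian product of two paths of length $M$, each of which admits a permutation routing of depth $M$ via odd-even transposition sort. Applying Annexstein--Baumslag to this two-path product gives a routing of depth $3M = 3L/R$ for the coarse grid, using only the long-range edges; each such swap corresponds to a physical $\swapp$ gate of range $R$. This contributes exactly the $3L/R$ term.

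Second, I would construct the fine graph $H$ inside an $R \times R$ tile. The target is a constant-degree graph on $R^2$ vertices, embedded in the tile with all edges of length at most $R$, that supports a permutation routing of depth $O(\log^2 R)$. A Bene\v{s}- or butterfly-style network achieves $O(\log^2 R)$ routing in general; to embed it in 2D with bounded edge length, I would recursively split the tile into four quadrants and, at each level of recursion, add a constant number of edges connecting corresponding points across quadrants. The longest edges occur at the coarsest level and have length $O(R)$, which is within the allowed range; the total degree added to $H$ is constant, and a level-by-level Bene\v{s}-style routing achieves depth $O(\log^2 R)$. Applying Annexstein--Baumslag once more to combine the coarse grid (routing depth $3L/R$) with $H$ (routing depth $O(\log^2 R)$) contributes only an additive $O(\log^2 R)$ term, giving total depth $3L/R + O(\log^2 R)$. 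Counting edges: $4$ nearest-neighbor edges, $4$ long-range edges of length $R$, and $4$ hierarchical ``shortcut'' edges in $H$ yield a total degree of $12$.

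The main obstacle is the construction of $H$: one must simultaneously guarantee constant degree, $O(\log^2 R)$ routing depth, a valid 2D embedding in an $R \times R$ tile with edges of length at most $R$, and enough translation invariance that the Annexstein--Baumslag product-routing theorem applies cleanly to the whole graph. A direct embedding of a standard butterfly in the plane violates either the bounded-degree or bounded-edge-length condition, so the bulk of the technical work is in producing a hierarchical, quadrant-based sorting network tailored to the grid layout. Once $H$ has been exhibited with the required properties, the depth bound follows immediately by two successive applications of Annexstein--Baumslag, and the degree bound follows by direct enumeration of the edge classes above.
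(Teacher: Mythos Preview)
Your high-level strategy matches the paper's: decompose the lattice as a Cartesian product of a coarse factor (routing depth $\Theta(L/R)$) and a fine factor (routing depth $O(\log^2 R)$), then invoke Annexstein--Baumslag. The paper factors slightly differently---it works one dimension at a time, writing the spanning subgraph as $H \times H$ with $H = \cE_R \times \nn_1(L/R,1)$, rather than your $\nn_2(L/R,1) \times H_{2D}$---but this is cosmetic and both orderings yield $3L/R + O(\log^2 R)$.

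The genuine gap is your construction of the fine graph $H$. You correctly identify that this is the crux, but the ``hierarchical, quadrant-based sorting network'' you sketch does not deliver constant degree: if at each of the $\Theta(\log R)$ recursive levels every vertex acquires a constant number of new shortcut edges, the total degree is $\Theta(\log R)$, not $O(1)$. Standard Bene\v{s}/butterfly networks likewise fail here, as you note, because they need $\Theta(n \log n)$ nodes to route $n$ items. Getting a bounded-degree graph on exactly $R^2$ (or $R$) vertices with $O(\log^2 R)$ permutation routing is a nontrivial fact, and your proposal does not supply it.

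The paper resolves this with spectral expanders rather than sorting networks: take $\cE_R$ to be a random $4$-regular graph on $R$ vertices, which is an $(R,4,\lambda)$-expander with high probability (Fact~\ref{fact:random_regular_graphs}), and invoke the result of Alon--Chung--Graham (Fact~\ref{fact:sparsify}) that any permutation on such a graph can be routed in depth $O(\log^2 R)$. Since $\cE_R$ lives on $[R]$, every edge has length at most $R$ automatically, and the degree is exactly $4$; the product $\cE_R \times \nn_1(L/R,1)$ then has degree $6$, and $H \times H$ has degree $12$. This replaces the bespoke network you were trying to build with an off-the-shelf expander plus a black-box routing theorem, which is both simpler and what actually makes the degree bound go through.
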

We use this result in the next section to construct syndrome-extraction circuits for quantum LDPC codes.
We shall study permutation routings on graphs and focus on $\nn_2(L,R)$, the $L \times L$ lattice in $2$ dimensions where two vertices share an edge if they are separated by a distance of at most $R$.
Based on the idea of a permutation routing on product graphs, we demonstrate that we can implement an arbitrary permutation in depth $O(L/R)$.
For the special case of the 2D lattice, we can make heavy use of sorting networks to find implementations of target permutations.

Using sorting networks to implement long-range connectivity is itself not a new idea~\cite{beals2013efficient}.
For instance, it was used in Delfosse {\it et al.} \cite{delfosse2021bounds} to construct syndrome-extraction circuits for quantum expander codes to match the bound in Equation~\eqref{eq:dbt}.
The results in this section generalize this idea to arbitrary syndrome-extraction circuits with constant spatial overhead.
To the best of our knowledge, this is the first work to construct sparse syndrome-extraction circuits when $R$ can scale as a function of $L$.

\subsection{Permutation Routing on product graphs}
\label{subsec:routing}

A \emph{permutation routing} is sometimes explained in terms of a pebble-exchange game, where pebbles are placed on the vertices of an (undirected) connected graph $G = (V,E)$.
The pebble on vertex $u \in V$ has an address $\alpha(u)$.
The addresses of all the pebbles together specify a permutation $\alpha$ on the vertices of $G$.
We are allowed to swap any two pebbles along an edge of $G$.
Formally, every vertex has a label and for every edge $(u,v) =: e \in E$, we are equipped with an edge permutation $\pi(e)$ that exchanges the labels of $u$ and $v$.
Edge permutations can be performed in parallel as long as every pebble is involved in at most one edge permutation in one time step.
We say $\beta$ is a simple permutation on $G$ if it is the product of edge permutations $\{\pi(e)\}_e$ that commute.
The objective of the pebble-exchange game is to find a minimum sequence of simple permutations so that the pebble that began at $u$ is located on the vertex $\alpha(u)$ afterwards.
In other words, we wish to find the smallest sequence of simple permutations $\beta_1,...,\beta_{\Depth(\alpha)}$ such that $\alpha = \beta_{\Depth(\alpha)}\circ \cdots \circ \beta_1$.
Here $\Depth(\alpha)$ denotes the minimum number of simple permutations required to perform $\alpha$.
We represent permutations using the one-line notation \cite{contributors2022permutation} where $\alpha = \begin{pmatrix} \alpha_1 & \alpha_2 & \cdots & \alpha_n \end{pmatrix}$ means $1$ is mapped to $\alpha_1$, $2$ is mapped to $\alpha_2$ and so on.
Given any permutation $\alpha$, the permutation $\alpha^{-1}$ can be computed efficiently by applying the permutation to a list of consecutive integers $[n]$.

\paragraph{The $R$-nearest-neighbor graph:}
The $R$-nearest-neighbor graph in $1$ dimension of length $L$ is denoted $\nn_1(L,R) = (V,E)$ where
\begin{equation}
    V = \{1,...,L\}, \qquad E = \{(u,v): |u-v|_2 \leq R \}~,
\end{equation}
where $|\cdot|_2$ represents the standard $2$-norm.
This is the graph for which the vertices are simply the positive integers up to $L$ and two vertices are connected by an edge if their difference is less than $R$.
In particular, consider the graph $\nn_1(L,1)$ which corresponds to the path graph. 

\textbf{Fact:} We can perform an arbitrary permutation $\alpha$ of pebbles placed on the vertices of the path graph $\nn_1(L,1)$ in depth $L-1$ \cite{knuth1997art}.
The explicit permutation routing algorithm $\ttt{Path}$-$\ttt{Routing}$ is presented in Algorithm \ref{alg:perm-rout-path}.

\begin{algorithm}[H]
    \begin{algorithmic}[0]
        \State \textbf{Input:} Permutation $\alpha$.
        \State \textbf{Output:} simple permutations $\beta_1,...,\beta_{L-1}$ such that $\alpha = \beta_{L-1} \circ \cdots \circ \beta_{1}$.
    \end{algorithmic}
    \begin{algorithmic}[1]
        \State $\ttt{labels} \leftarrow \{\alpha^{-1}(1),...,\alpha^{-1}(L)\}$
        \State $t = 1$.
        \While{$t \leq L-1$}
            \State $\beta_t \leftarrow \{1,\dots, L\}$
            \For{$i \in \{1,...,\floor{L/2}\}$}
                \State $a \leftarrow$ $2i-1$ if $t$ is even else $2i$
                \State $b \leftarrow$ $2i$\hspace{2ex}if $t$ is even else $2i+1$
                \If{$\ttt{label}(a) < \ttt{label}(b) $} \Comment{Swap}
                    \State $\beta_t(a) \leftarrow b$
                    \State $\beta_t(b) \leftarrow a$
                    \State Exchange $\ttt{label}(a)$ and $\ttt{label}(b)$.
                \EndIf
            \EndFor
        \State $t \leftarrow t+1$.
        \EndWhile
        \State \Return $\beta_1,...,\beta_{L-1}$.
    \end{algorithmic}
    \caption{$\ttt{Path}$-$\ttt{Routing}(\alpha)$}
    \label{alg:perm-rout-path}
\end{algorithm}

To illustrate, we consider a permutation $\alpha$ on the path graph on $8$ vertices in Figure~\ref{fig:permutation-on-paths}.
Here, $\alpha = \begin{pmatrix} 6 & 7& 2& 5& 3& 4& 8& 1 \end{pmatrix}$.

\begin{figure}[h]
  \centering
  \includegraphics[width=0.9\columnwidth]{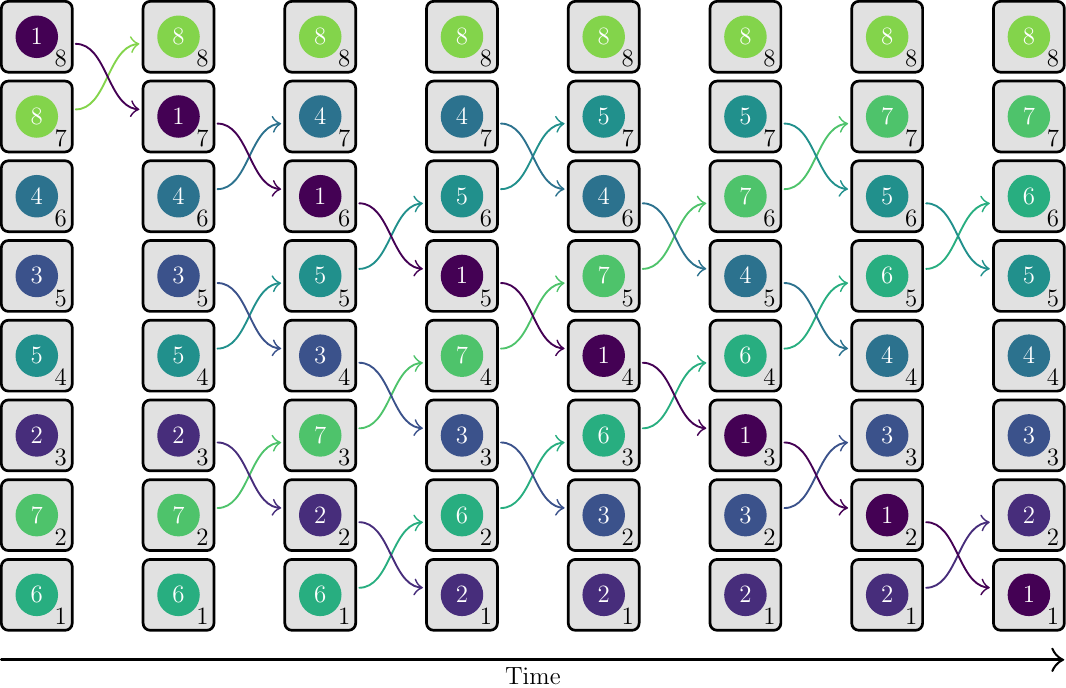}
  \caption{
    Example: implementing the permutation $\alpha = \begin{pmatrix}6 & 7 & 2 & 5 & 3 & 4 & 8 & 1\end{pmatrix}$ with nearest-neighbor swaps using Algorithm \ref{alg:perm-rout-path}.
  }
  \label{fig:permutation-on-paths}
\end{figure}

We can generalize this concept and define the $R$-nearest-neighbor graph in $2$ dimensions which we denote by $\nn_2(L,R)$.
It has vertices $\{\vec{u} : \vec{u} = (u_x,u_y) \in [L] \times [L]\}$; two vertices $\vec{u}$, $\vec{u}'$ share an edge if $|\vec{u}-\vec{u}'|_2 \leq R$.
Our objective is to build up to a routing algorithm on this graph.
Before considering this general case, we study the case where $R = 1$.
The idea used there will be used again for general $R$ in the next subsection.

\paragraph{Routing on graph products:}
The main idea we present in this subsection are techniques due to Annexstein and Baumslag \cite{annexstein1990unified} to route on Cartesian products of graphs.
They showed that we can derive routing algorithms for the Cartesian product $G_1 \times G_2$ using routing algorithms for graphs $G_1 = (V_1,E_1)$ and $G_2 = (V_2,E_2)$.

The general routing algorithm $\mathtt{Product}$-$\mathtt{Routing}$ presented in Algorithm \ref{alg:product-routing} below applies to any two graphs $G_1$ and $G_2$ for which routing routines are known.
Each $v_1 \in V_1$, defines a ``row'' $\mathscr{R}_{v_1} = \{v_1\} \times V_2$, and each $u_2 \in V_2$ defines a ``column'' $\mathscr{C}_{u_2} = V_1 \times \{u_2\}$ \footnote{This notation is inspired by thinking about the vertices arranged as a matrix of size $V_1 \times V_2$.}.
We call a permutation $\alpha$ of the vertices of $G_1\times G_2$ a \emph{row permutation} if the permutation respects a decomposition into rows i.e. for all $v_1 \in V_1$, $\alpha: \mathscr{R}_{v_1} \to \mathscr{R}_{v_1}$.
Likewise, for a \emph{column permutation}, we have that for all $u_2 \in V_2$, $\alpha: \mathscr{C}_{u_2} \to \mathscr{C}_{u_2}$.
A row or column permutation can be implemented using routing routines for $G_2$ or $G_1$ by applying the routine to each copy in the Cartesian product.

\begin{lemma}[Annexstein \& Baumslag \cite{annexstein1990unified}]
    \label{lem:routing}
    For any routing $\alpha_{12}$ on $G_1 \times G_2$, there exist column permutations $\alpha_1, \alpha_1'$ and row permutations $\alpha_2$ on $G_2$ such that:
    $\alpha_{12} = \alpha_1 \circ \alpha_2 \circ \alpha_1'$.
    These permutations can all be computed in polynomial time.
    If the permutations $\alpha_1$ and $\alpha_1'$ require depth at most $\Depth_1$ and $\alpha_{2}$ requires depth at most $\Depth_2$.
    Then $\alpha_{12}$ requires depth at most $2\Depth_1 + \Depth_2$.
\end{lemma}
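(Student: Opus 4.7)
The plan is to reduce the decomposition to a bipartite edge-coloring problem. For each pebble with source $(u_1,u_2)$ and destination $(v_1,v_2)$ under $\alpha_{12}$, I want to assign an ``intermediate row'' $u_1'(u_1,u_2) \in V_1$ such that (i) for each source column $u_2$, the map $u_1 \mapsto u_1'$ is a bijection on $V_1$, and (ii) within each intermediate row $u_1'$, the pebbles present have pairwise distinct destination columns. Property (i) defines the first column permutation $\alpha_1'$, executable in depth $\Depth_1$ by running the given $G_1$-routing inside each column in parallel. Property (ii) ensures a well-defined row permutation $\alpha_2$ delivering each pebble to its destination column in depth $\Depth_2$. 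That the final stage is a valid column permutation $\alpha_1$ is then automatic: since $\alpha_{12}$ is a permutation, the pebbles arriving in each column $v_2$ must have pairwise distinct destination rows.

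To produce the assignment $u_1'$, I will appeal to K\"onig's edge-coloring theorem. Define a bipartite multigraph $H$ with both sides indexed by $V_2$: for each pebble $(u_1,u_2)\to(v_1,v_2)$, insert an edge between the left copy of $u_2$ and the right copy of $v_2$. Each left vertex has degree $|V_1|$ (one edge per pebble in its column), and each right vertex has degree $|V_1|$ (one edge per pebble destined for its column), so $H$ is $|V_1|$-regular. K\"onig's theorem supplies a proper $|V_1|$-edge-coloring of $H$, computable in polynomial time by repeatedly extracting perfect matchings from biregular bipartite graphs via augmenting paths. Identifying the $|V_1|$ color classes with elements of $V_1$, let $u_1'$ be the color of the edge corresponding to the pebble. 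Each source column $u_2$ is incident to exactly one edge of each color, which gives (i); each color class is a perfect matching between the two copies of $V_2$, which gives (ii).

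Given the $u_1'$ assignment, $\alpha_1'$, $\alpha_2$, and $\alpha_1$ can be read off directly, and one verifies $\alpha_{12}=\alpha_1\circ\alpha_2\circ\alpha_1'$ by tracking an arbitrary pebble through the three stages: $\alpha_1'$ moves it from $(u_1,u_2)$ to $(u_1',u_2)$ within its source column, $\alpha_2$ moves it from $(u_1',u_2)$ to $(u_1',v_2)$ within row $u_1'$, and $\alpha_1$ moves it to $(v_1,v_2)$ within column $v_2$. Summing depths gives the bound $2\Depth_1+\Depth_2$. The main conceptual step is the reduction to edge-coloring on the auxiliary graph $H$; once that is in place the verification of the three properties and the depth bound is routine, and the only technical point to double-check is that the constructive form of K\"onig's theorem runs in polynomial time on $H$, which is standard.
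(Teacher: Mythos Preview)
Your proposal is correct and follows essentially the same argument as the paper: both construct the bipartite multigraph on two copies of $V_2$ with an edge per pebble from its source column to its destination column, apply K\"onig's edge-coloring theorem to obtain a $|V_1|$-coloring, and use the color as the intermediate row for the first column permutation. Your write-up is in fact slightly cleaner, noting explicitly that the auxiliary graph is $|V_1|$-regular and spelling out properties (i) and (ii) that the coloring guarantees.
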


We provide some intuition for this lemma.
At first glance, one might expect that a row permutation $\alpha_1$ followed by a column permutation $\alpha_2$ ought to suffice.
However, this will not always work---if two pebbles in a row share the same destination column, then no row permutation will be able to send both the pebbles to the correct column.

To avoid collisions, we start the procedure with an additional step.
We first send pebbles to rows in which no other pebbles shares the same destination column, so that the routing procedure performs a column permutation, a row permutation, and finally a column permutation.
This problem will be rephrased as an edge-coloring problem where each color corresponds to the intermediate row that qubits will be routed through. 

We first construct a bipartite multigraph $B$ over the vertices $(V_2 \sqcup V_2')$ with left and right vertex sets both copies of $V_2$ \footnote{A multigraph is a generalization of a graph where two vertices are allowed to share multiple edges}.
To each pebble, we associate an edge between the initial column on the left and the destination column on the right.
$B$ is bipartite and has degree at most $|V_1|$, so there exists an efficiently computable edge coloring with $|V_1|$ colors~\cite{schrijver2003combinatorial} i.e.\ a decomposition into $|\mathscr{R}|$ disjoint matchings.

To each color, $\tau \in [V_1]$, we will assign an arbitrary row.
For each pebble (edge), we will first pre-route it to the assigned row (color) before completing a final routing along the rows then columns.
In a valid coloring, no two edges (pebbles) of the same color (intermediate row) are incident to the same vertex (column).
In the first step, this means that, for every column, each pebble has a unique intermediate destination row.
Further, in the row permutation step, for every row, each pebble has a unique destination column.
Finally, in the last column permutation step, each pebble is in its destination column, so, for each column, each pebble has a unique destination row.

We assume we are given blackbox access to an efficient edge coloring algorithm for bipartite graphs~\cite{schrijver2003combinatorial} and call it via a subroutine $\mathtt{Edge}$-$\mathtt{Coloring}$ in Algorithm \ref{alg:product-routing}.

\begin{algorithm}[H]
    \begin{algorithmic}[0]
        \State \textbf{Input:} Permutation $\alpha \colon V_1 \times V_2 \to V_1 \times V_2$
        \State \textbf{Output:} Row permutations $\alpha_1$, $\alpha_1'$ and a column permutation $\alpha_2$ such that $\alpha = \alpha_1 \circ \alpha_2 \circ \alpha_1'$.
    \end{algorithmic}
        \begin{algorithmic}[1]
        \State Initialize bipartite graph $B \leftarrow (V_2 \sqcup V_2', \emptyset)$ with no edges.
        \For{Every $(v_1,u_2) \in V_1 \times V_2$}
            \State Draw an edge between $u_2 \in V_2$ and $u_2' \in V_2'$ if $\alpha(v_1,u_2) = (v_1',u_2')$.
        \EndFor
        \State $\tau \leftarrow \mathtt{Edge}$-$\mathtt{Coloring}(B)$
        
        \For{$(v_1,u_2) \in E$}
            \State $\alpha_1'(v_1, u_2) \leftarrow (\tau(e), u_2)$
            \State $\alpha_2(\tau(e), u_2) \leftarrow (\tau(e), u_2')$
            \State $\alpha_1(\tau(e), u_2') \leftarrow (v_1',  u_2')$
        \EndFor
    \end{algorithmic}
    \caption{$\mathtt{Product}$-$\mathtt{Routing}(\alpha)$}
    \label{alg:product-routing}
\end{algorithm}

To illustrate Algorithm \ref{alg:product-routing}, we describe how to obtain the permutation routing on the nearest-neighbor graph in $2$-dimensions $\nn_2(L,1)$.
Noting that $\nn_2(L,1)\cong \nn_1(L,1) \times \nn_1(L,1)$, Lemma \ref{lem:routing} implies that an arbitrary permutation on $\nn_2(L,1)$ can be implemented using a product of permutations on the components $\nn_1(L,1)$.
Recalling that any permutation on the path graph $\nn_1(L,1)$ can be done in depth $L-1$ implies the following corollary.

\begin{corollary}
\label{cor:depth-nn2}
    Any permutation $\alpha$ on $\nn_2(L,1)$ can be performed in depth $3L-3$.
\end{corollary}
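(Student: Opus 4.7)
The plan is to apply Lemma~\ref{lem:routing} directly to the Cartesian product identification $\nn_2(L,1) \cong \nn_1(L,1) \times \nn_1(L,1)$. First I would verify this identification: the vertex set of $\nn_2(L,1)$ is $[L]\times[L]$, and two vertices $(u_x,u_y)$ and $(v_x,v_y)$ satisfy $|(u_x,u_y)-(v_x,v_y)|_2 \leq 1$ iff they differ by a unit vector along exactly one axis, which matches the Cartesian product edge set of two path graphs precisely. So the routing problem on $\nn_2(L,1)$ is an instance of the product routing problem with $G_1 = G_2 = \nn_1(L,1)$.

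Next I would invoke Lemma~\ref{lem:routing} to decompose an arbitrary permutation $\alpha$ as $\alpha = \alpha_1 \circ \alpha_2 \circ \alpha_1'$, where $\alpha_1, \alpha_1'$ are column permutations and $\alpha_2$ is a row permutation. Since each column $\mathscr{C}_{u_2}$ is a copy of $\nn_1(L,1)$ and the columns are pairwise vertex-disjoint, a column permutation decomposes into one path permutation per column that can be executed \emph{in parallel} as a single simple-permutation sequence of the full graph. Applying $\mathtt{Path}$-$\mathtt{Routing}$ (Algorithm~\ref{alg:perm-rout-path}) to each column then uses depth at most $L-1$, so $\Depth_1 \leq L-1$; the identical argument for rows gives $\Depth_2 \leq L-1$.

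Substituting these two bounds into the depth estimate $2\Depth_1 + \Depth_2$ from Lemma~\ref{lem:routing} yields total depth at most $2(L-1) + (L-1) = 3L-3$, which is precisely the bound we wanted. I do not expect any real obstacle: every ingredient has been supplied. The only mild bookkeeping point is confirming that edge-permutations realized on distinct rows (or distinct columns) can be composed into a genuine simple permutation of $\nn_2(L,1)$, i.e.\ that the underlying edges are pairwise vertex-disjoint; this is immediate from the definition of rows and columns as disjoint subsets of $[L]\times[L]$.
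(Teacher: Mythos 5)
Your proposal is correct and follows essentially the same route as the paper: identifying $\nn_2(L,1)$ as the Cartesian product $\nn_1(L,1)\times\nn_1(L,1)$, routing on each path factor via Algorithm~\ref{alg:perm-rout-path} in depth $L-1$, and applying Lemma~\ref{lem:routing} to obtain $2(L-1)+(L-1)=3L-3$. The extra bookkeeping you include (disjointness of rows/columns for parallel execution) is implicit in the paper's argument and is fine.
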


\begin{proof}
    $\nn_2(L,1)\cong \nn_1(L,1) \times \nn_1(L,1)$, and we can route on $\nn_1(L,1)$ using an even-odd sorting network (Algorithm \ref{alg:perm-rout-path}).
    Using Algorithm \ref{alg:product-routing}, we have that the total number of steps is $3(L-1)$.
\end{proof}
In a different context, the above claim was also made Thompson \& Kung \cite{thompson1977sorting}.

Figure~\ref{fig:spacetime-route} shows an example of a permutation of such a lattice using Algorithm~\ref{alg:product-routing}.

\begin{figure}[h]
  \centering
  \includegraphics[width=\textwidth]{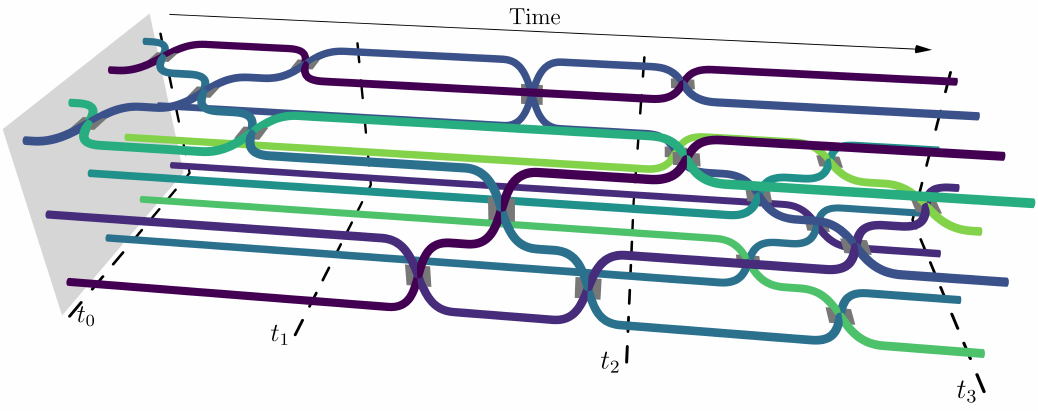}
  \caption{
    Visualizing the routing algorithm via the space-time path of individual qubits.
    For each swap location, a gray rectangle indicated the plane of the swap is drawn for visualization purposes.
    Note that in the intervals $(t_0,t_1)$ and $(t_2,t_3)$ there are only row swaps, and in the interval $(t_1,t_2)$ there are only column swaps.
    Each of the swaps within a single row or column we obtain by a 1D even-odd sorting network (Figure~\ref{fig:permutation-on-paths}).
  }
  \label{fig:spacetime-route}
\end{figure}

\subsection{Permutation routing given long-range gates}
\label{subsec:permutn-routing-long-range}

In this subsection, we show how to route on $\nn_2(L,R)$.
We do this by finding graph approximations -- subgraphs of our original graph that we can route on nearly as well.
We will approximate $\nn_2(L,R)$ using a two-step approach---first we show that the complete graph times a $2$-dimensional nearest-neighbor graph approximates $\nn_2(L,R)$ well; we then show that a sparse graph approximates the complete graph well.
Together, this will result in a circuit with sparse connectivity that exploits long-range connectivity of range $R$.

\paragraph{The complete graph \& sparse approximations:}
The complete graph $K_{m}$ is a graph on $m$ vertices with edges between every pair of vertices. 
Any permutation $\alpha$ on $K_{m}$ can trivially be accomplished in depth $2$.
Using a complete graph will simplify some of the analysis in this section.
However, $K_m$ is a dense graph; in turn, the corresponding syndrome-extraction circuit we construct from it will require that qubits are involved in a super constant number of two-qubit gates.
To avoid this problem, we replace $K_{m}$ by a sparse graph in exchange for a modest increasing in the depth of permutations.
We state some facts about sparse approximations to the complete graph $K_m$.

\begin{definition}[Spectral Expander]\label{def:expander}
Let $G$ be a $d$-regular graph on $m$ vertices where all eigenvalues of the adjacency matrix except for the largest $\{\lambda_i\}_{i=2}^m$ satisfy $\lvert \lambda_i \rvert \le \lambda < d$. $G$ is said to be an $(m, d, \lambda)$-\emph{spectral expander}.
\end{definition}

\begin{fact}[\cite{friedman2008proof}]
\label{fact:random_regular_graphs}
For even $d\ge 4$ and any $\epsilon > 0$, there is an efficient randomized\footnote{We note that the result in \cite{friedman2008proof} only shows that a regular random graph will be an expander with high probability. However, spectral expansion is efficiently checkable, so this process may be repeated until success.} algorithm that returns a random $d$-regular graph such that it is an $(m,d,\lambda)$-spectral expander for  $\lambda = 2 \sqrt{d-1} + \epsilon$.
\end{fact}

This fact establishes that a random regular graph is a good spectral expander with high probability.
For $d=4$, such a graph can be defined on any even number of vertices, so this family is extremely flexible.
For convenience, we will set $d=4$ and $m$ even.
We will call a random 4-regular graph picked in this way on $m$ vertices $\cE_m$. 

The next fact concerns routing on spectral expanders in an efficiently computable manner.

\begin{fact}[\cite{alon1993routing}]
\label{fact:sparsify}
Let $G$ be an $(m, d, \lambda)$-spectral expander.
Then, any permutation $\sigma : [m] \to [m]$ can be performed in depth $O\left(\frac{d^2}{(d-\lambda)^2}\log^2(m)\right)$.
\end{fact}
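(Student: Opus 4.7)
My plan is to implement the matching-based routing strategy of Alon--Chung--Galil, which translates spectral expansion into efficient permutation routing. The overall approach is a Valiant-style two-phase decomposition: write $\sigma = (\sigma\tau^{-1}) \circ \tau$ for a uniformly random $\tau \in S_m$. Since both $\tau$ and $\sigma \tau^{-1}$ are uniformly distributed, it suffices to bound the depth of routing a single uniformly random permutation on $G$; the answer for $\sigma$ is then twice that.

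To route a uniformly random permutation, I would simulate $T$ steps of the lazy random walk on $G$ for every packet in parallel. The $L^2$ spectral-gap bound $\lVert P^T e_u - \mathbf{1}/m \rVert_2 \leq (\lambda/d)^T$, with $P = A/d$ the normalized adjacency matrix, implies that after $T = \Theta\bigl(\tfrac{d}{d-\lambda} \log m\bigr)$ steps the endpoint distribution lies within $1/\mathrm{poly}(m)$ of uniform in total variation. This accounts for one factor of $d/(d-\lambda)$ and one factor of $\log m$ in the final bound. Each walk step is a doubly stochastic matrix supported on $E(G)$, and K\"onig's edge-coloring theorem decomposes the edges used in any single step into at most $d$ matchings. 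Thus a walk step translates to $d$ matching rounds on $G$.

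Finally I would pay the Leighton--Maggs--Rao scheduling bill: if packets route along walks of length $T$ and the maximum edge congestion is $C$, the routing completes in $O(C+T)$ matching rounds. A Chernoff estimate combined with the expander mixing lemma (to control the probability that a given edge is traversed in any single step of the walk) shows that for random $\tau$ the congestion satisfies $C = O\bigl(\tfrac{d}{d-\lambda} \log m\bigr)$ with high probability; derandomizing by selecting a good $\tau$ via the probabilistic method makes the whole construction deterministic. Multiplying the per-step matching cost by $C+T$ and summing over the two Valiant phases yields the claimed depth $O\bigl(\tfrac{d^2}{(d-\lambda)^2} \log^2 m\bigr)$.

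The step I expect to be the main obstacle is congestion control. Spectral mixing bounds only the \emph{marginal} distribution of a walked packet's endpoint, whereas scheduling requires uniform bounds on \emph{edge loads} across the walk. Tying the two together requires the expander mixing lemma applied not to the endpoint distribution but to the empirical edge-traversal distribution, combined with a union bound over the $md/2$ edges and the $T$ time steps. Once this congestion bound is in hand, the remaining accounting is routine.
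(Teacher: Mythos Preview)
The paper does not prove this statement; it is quoted as a black-box fact from \cite{alon1993routing} and used only through the statement. So there is no proof in the paper to compare against.

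On your plan itself: the high-level shape (Valiant's two-phase reduction to a random target, then expander mixing) is the right one and matches the cited source. But two points deserve attention.

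First, Leighton--Maggs--Rao scheduling lives in the store-and-forward packet model: packets may queue at vertices, and a step lets each edge carry one packet in one direction. The fact you are asked to prove is in the \emph{matching} (pebble-exchange) model: every vertex holds exactly one pebble at all times, and a step swaps the pebbles on a matching. There is no queuing, and a swap moves \emph{both} endpoints, so a packet-routing schedule does not transplant directly. The cited argument deals with this not via LMR but by repeatedly using expansion to exhibit large matchings that make measurable progress and then iterating; the conversion you gesture at (``each walk step is $d$ matchings via K\"onig'') does not by itself give a valid pebble-exchange schedule, and K\"onig only applies to bipartite graphs in any case.

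Second, your bookkeeping does not reproduce the bound. With $T,C = O\bigl(\tfrac{d}{d-\lambda}\log m\bigr)$, invoking LMR gives $O(C+T)$ steps; multiplying by a per-step cost of $d$ yields $O\bigl(\tfrac{d^2}{d-\lambda}\log m\bigr)$, not $O\bigl(\tfrac{d^2}{(d-\lambda)^2}\log^2 m\bigr)$. You are missing a factor of $\tfrac{\log m}{d-\lambda}$, which is really a symptom of the first issue: the additive $C+T$ bound is too optimistic for the matching model, and the genuine argument picks up the second $\tfrac{d}{d-\lambda}\log m$ factor from the iteration, not from congestion.
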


Take together, we can replace $K_m$ by a random 4-regular subgraph $\cE_m$ on which we can route in depth $O(\log^2(m))$.
For our purposes, we assume that the routing algorithm for these sparse graphs can be accessed in a black-box manner.

Now we are prepared to move on to implementing permutations on $\nn_2(L,R)$.
The depth we will find is nearly optimal, even when $R > 1$.
For convenience, let us assume that $L/R$ is an integer.
First, note that at distances shorter than $R$, $\nn_2(L,R)$ locally ``looks'' like a complete graph on $R$-vertices.
We can leverage this to find a spanning subgraph\footnote{A subgraph $H$ of a graph $G$ is said to be a spanning subgraph if all vertices of $G$ are contained in $H$.} of $\nn_2(L,R)$ that is a product of graphs that we know how to route on: $K_R$ and $\nn_2(L/R,1)$.

\begin{lemma}\label{lem:spanning_subgraph_KR}
    If $R$ divides $L$, $\nn_1(L/R,1)\times K_{R}$ is a spanning subgraph of $\nn_1(L,R)$.
\end{lemma}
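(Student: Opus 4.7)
The plan is to exhibit an explicit vertex bijection between $V(\nn_1(L/R,1)\times K_R)$ and $V(\nn_1(L,R))=[L]$ and then verify that every edge of the product graph corresponds, under this bijection, to an edge of $\nn_1(L,R)$. Since both graphs have the same number of vertices, ``spanning subgraph'' just requires the bijection (for the vertex sets) together with the edge containment.

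First I would partition $[L]$ into $L/R$ consecutive blocks of size $R$, writing each $i\in[L]$ uniquely as $i=(a-1)R+b$ with $a\in[L/R]$ and $b\in[R]$. The map $\varphi:(a,b)\mapsto (a-1)R+b$ is then the obvious bijection from $[L/R]\times[R]$ to $[L]$, which establishes the vertex identification and shows the product graph is spanning.

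Next I would check the two types of edges in the Cartesian product $\nn_1(L/R,1)\times K_R$.
\begin{itemize}
\item A ``horizontal'' edge has the form $\{(a,b),(a,b')\}$ with $b\neq b'$; under $\varphi$ this becomes $\{(a-1)R+b,(a-1)R+b'\}$, whose endpoints differ by $|b-b'|\le R-1\le R$, hence form an edge of $\nn_1(L,R)$.
\item A ``vertical'' edge has the form $\{(a,b),(a',b)\}$ with $|a-a'|=1$; under $\varphi$ the endpoints differ by $|a-a'|\cdot R = R$, which is again at most $R$ and hence gives an edge of $\nn_1(L,R)$.
\end{itemize}
Since every edge of the product graph maps to an edge of $\nn_1(L,R)$ and the vertex set is all of $[L]$, this realizes $\nn_1(L/R,1)\times K_R$ as a spanning subgraph.

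There is no real obstacle here beyond choosing the right vertex identification; the main content is recognizing that blocks of $R$ consecutive integers are naturally a copy of $K_R$ inside $\nn_1(L,R)$ (since they lie within distance $R-1$), while moving by one block in the same residue class corresponds to a distance exactly $R$, which is still an allowed edge. I would make sure to state explicitly the divisibility assumption $R\mid L$ so that the block decomposition is clean, and to note that the same construction generalizes to the two-dimensional setting that is used later when approximating $\nn_2(L,R)$ by $\nn_2(L/R,1)\times K_R^{\times 2}$ (or a similar product), which is what this lemma is setting up.
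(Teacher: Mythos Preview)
Your proposal is correct and follows essentially the same approach as the paper: the paper uses the identical bijection $(a,b)\mapsto (a-1)R+b$ and then checks that the two types of product-graph edges land within distance $R$ in $\nn_1(L,R)$. Your version is slightly more explicit in computing the distances, but the argument is the same.
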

\begin{proof}
    Using the coordinates $[L/R]\times [R]$ for $\nn_1(L/R,1)\times K_{R}$ and $[L]$ for $\nn_1(L,R)$, we can map the vertices of $\nn_1(L/R,1)\times K_{R}$ to those of $\nn_1(L,R)$ using the bijection $\eta \colon [L/R]\times [R] \to [L]$ 
    \begin{equation*}
        (a, b) \xrightarrow{\eta} (a - 1) R + b~.
    \end{equation*}
    Away from the boundary, the neighbors of an arbitrary vertex $(a,b)$ of $\nn_1(L/R,1)\times K_{R}$ are $(a\pm1, b)$ and $(a, [R]\setminus\{b\})$.
    A vertex $(a,b)$ at a boundary is adjacent to the vertices $(a, [R]\setminus\{b\})$ and one of $(a-1, b)$ or $(a+1, b)$; whichever is in the graph.
    All elements of these sets are at most a distance $R$ from $(a,b)$ under $\eta$, so it is a valid edge in $\nn_1(L,R)$. %
\end{proof}

Clearly, $\nn_1(L,R)\times \nn_1(L,R)$ is a spanning subgraph of $\nn_2(L,R)$ given by retaining only those edges connecting vertices within a single row or column.

\begin{corollary}\label{cor:dense_nn2_routing}
    Any permutation on $\nn_2(L,R)$ can be performed in depth $3L/R + 9$.
\end{corollary}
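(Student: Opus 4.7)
The plan is to realize $\nn_2(L,R)$ as an iterated Cartesian product from below and apply the Annexstein--Baumslag recipe (Lemma \ref{lem:routing}) twice, once to get a routing on $\nn_1(L,R)$ and then again to get a routing on the two-dimensional graph.

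First, by Lemma \ref{lem:spanning_subgraph_KR}, $\nn_1(L/R,1)\times K_R$ is a spanning subgraph of $\nn_1(L,R)$, so any routing algorithm for the product transfers directly. Routing on the complete graph $K_R$ can be done in depth $2$: every permutation in $S_R$ factors as a product of two involutions, each of which is a set of disjoint transpositions and so is realizable in a single parallel round of edge swaps on $K_R$. Routing on the path $\nn_1(L/R,1)$ takes depth $L/R-1$ by the even--odd sorting network of Algorithm \ref{alg:perm-rout-path}. Applying Lemma \ref{lem:routing} with $G_1 = K_R$ and $G_2 = \nn_1(L/R,1)$ gives depth $2\cdot 2 + (L/R-1) = L/R + 3$ for any permutation on $\nn_1(L,R)$.

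Second, $\nn_2(L,R)$ contains $\nn_1(L,R)\times \nn_1(L,R)$ as a spanning subgraph: for each row and each column of the $L\times L$ lattice, endpoints at distance at most $R$ form edges both in $\nn_2(L,R)$ and in the product, and we simply discard all other edges. Applying Lemma \ref{lem:routing} a second time with $G_1 = G_2 = \nn_1(L,R)$, each routable in depth $L/R + 3$ by the previous paragraph, gives total depth $2(L/R+3) + (L/R+3) = 3L/R + 9$, which is the desired bound.

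The only subtle point is the order of factors in each invocation of Lemma \ref{lem:routing}. The formula $2\Depth(G_1) + \Depth(G_2)$ is asymmetric: the column factor $G_1$ is paid twice, whereas the row factor $G_2$ is paid only once. At the inner step, we therefore place the cheap factor $K_R$ (depth $2$) in the $G_1$ slot and the expensive factor $\nn_1(L/R,1)$ (depth $L/R-1$) in the $G_2$ slot; the reverse assignment would give depth $2L/R$ and compound at the outer step. Since the graph Cartesian product is commutative up to isomorphism, this reordering is free. Apart from this bookkeeping, the argument is a straightforward assembly of previously established components, so I do not anticipate a substantive obstacle.
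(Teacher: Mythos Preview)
Your proposal is correct and follows essentially the same route as the paper: use Lemma~\ref{lem:spanning_subgraph_KR} plus Lemma~\ref{lem:routing} (with $G_1=K_R$, $G_2=\nn_1(L/R,1)$) to route on $\nn_1(L,R)$ in depth $L/R+3$, then apply Lemma~\ref{lem:routing} again to the product $\nn_1(L,R)\times\nn_1(L,R)\subseteq\nn_2(L,R)$. Your added remarks on the depth-$2$ routing of $K_R$ and on the asymmetry of the $2\Depth(G_1)+\Depth(G_2)$ formula are accurate elaborations but not departures from the paper's argument.
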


\begin{proof}
    Denote $\nn_1(L/R,1)\times K_{R}$ by $H$.
    By Lemma \ref{lem:spanning_subgraph_KR}, $H$ is a spanning subgraph of $\nn_1(L,R)$, and $\nn_1(L,R) \times \nn_1(L,R)$ is a spanning subgraph of $\nn_2(L,R)$.
    It follows that $H\times H$ is a spanning subgraph of $\nn_2(L,R)$, so any simple permutation for $H \times H$ is a simple permutation for $\nn_2(L,R)$.
    
    Permutations on $K_{R}$ and $G_2=\nn_1(L/R,1)$ can be implemented in depth 2 and $L/R-1$, respectively. Setting $G_1 = K_{R}$ and $G_2=\nn_1(L/R,1)$ in Lemma \ref{lem:routing}, any permutation on $H$ can therefore be implemented in depth $(L/R-1) + 2\cdot 2 = L/R + 3$. 
    Invoking Lemma \ref{lem:routing} again with $G_1=G_2=H$, we can implement any permutation on $H\times H$ and hence, $\nn_2(L,R)$ in depth $3 L/R + 9$.
\end{proof}

Note even though there are many edges that we do not use, the lowest depth routing can be no better than the graph diameter of $\nn_2(L,R)$ which is roughly $\sqrt{2} L/R$, so this is nearly optimal.
Furthermore, owing to the translation invariance of $\nn_2(L/R,1)$, the embedding of $K_{R} \times K_{R} \times \nn_2(L/R,1)$ is also translation invariant---far from the boundaries, the graph remains the same locally under translation of $R$ units in the vertical and horizontal directions.

Now, $\nn_2(L,R)$ is not sparse: the degree of each vertex grows as $R^2$.
For practical purposes, it would be convenient if only a sparse subgraph of $\nn_2(L,R)$ were used in the routing routine.
In Corollary \ref{cor:dense_nn2_routing}, we used only the edges contained in a subgraph $\nn_1(L/R,1) \times K_{R} \times \nn_1(L/R,1) \times K_{R}$.
$\nn_1(L/R,1)$ is sparse, but $K_{R}$ is not.
However, we can replace $K_{R}$ by a sparse expander graph so that the subgraph we use is sparse.
Fact \ref{fact:sparsify} supplies such a family of graphs and a depth $O(\log^2 R)$ routing subroutine.

\begin{figure}[h]
  \label{fig:lattice_embedding}
  \centering
  \includegraphics[width=\textwidth]{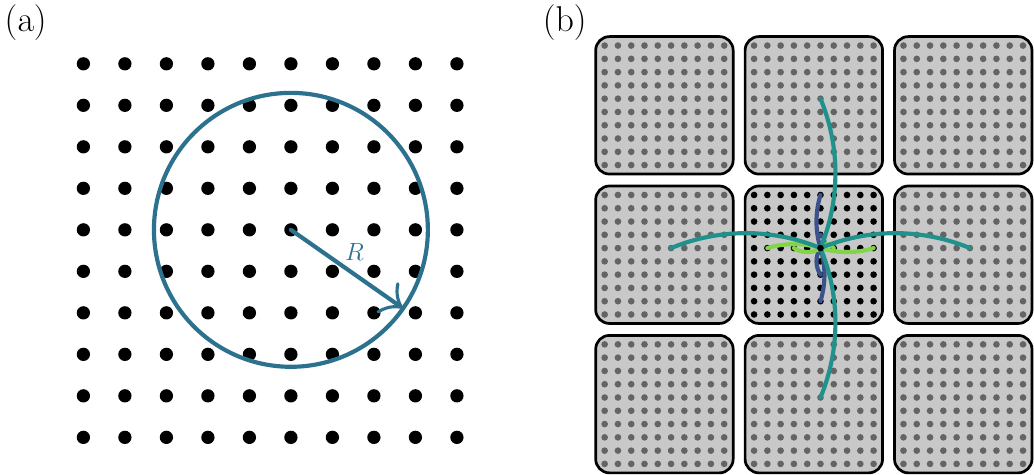}
  \caption{(a) A square lattice $\nn_2(L,R)$ with a qubit on each lattice point.
  The blue circle of radius $R$ denotes the interaction radius for a qubit in the lattice.
  Such a circle exists around each qubit; we only draw one for clarity.
  (b) Approximating the lattice using the sparse product graph $\cE_R \times \cE_R \times \nn_2(L/R, 1)$.
  Different colored edges come from different factors in the product.
  }
  \label{fig:latticelr}
\end{figure}

We now bring these ideas together formally in the following corollary.
\begin{corollary}
  \label{cor:sparse-depth-perm}
  For $R$ even, there is an efficiently constructable degree-12 spanning subgraph of $2$-dimensional $R$-nearest-neighbor lattice $\nn_2(L,R)$ on which any permutation can be performed in depth $3L/R + O(\log^2 R)$. 
\end{corollary}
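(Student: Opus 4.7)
The plan is to sparsify the construction underlying Corollary \ref{cor:dense_nn2_routing} by replacing each dense factor of $K_R$ with a constant-degree spectral expander. First, I would invoke Fact \ref{fact:random_regular_graphs} with $d=4$ (which is compatible with the parity hypothesis $R$ even) to obtain a 4-regular $(R,4,\lambda)$-spectral expander $\cE_R$ with $\lambda \le 2\sqrt{3} + \epsilon$. Since $\cE_R$ is a subgraph of $K_R$, combining this with Lemma \ref{lem:spanning_subgraph_KR} and the identification $\nn_2(L,R) = \nn_1(L,R) \times \nn_1(L,R)$ from the proof of Corollary \ref{cor:dense_nn2_routing}, I would conclude that
\[
    H := \bigl(\nn_1(L/R,1) \times \cE_R\bigr) \times \bigl(\nn_1(L/R,1) \times \cE_R\bigr)
\]
is a spanning subgraph of $\nn_2(L,R)$.

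Next I would verify the degree bound. Because degrees are additive under Cartesian products, each vertex of $H$ has degree at most $2 + 4 + 2 + 4 = 12$, which is the degree promised by the corollary. For the depth bound, I would apply Lemma \ref{lem:routing} twice. By Fact \ref{fact:sparsify}, permutations on $\cE_R$ admit a routing of depth $O(\log^2 R)$ (note that for $d=4$ this $O$-constant is independent of $R$ and $L$), and permutations on the path $\nn_1(L/R,1)$ can be done in depth $L/R - 1$ via Algorithm \ref{alg:perm-rout-path}. Applying Lemma \ref{lem:routing} with $G_1 = \cE_R$ and $G_2 = \nn_1(L/R,1)$ then gives a routing of any permutation of the intermediate factor $H' := \nn_1(L/R,1) \times \cE_R$ in depth $(L/R - 1) + 2 \cdot O(\log^2 R) = L/R + O(\log^2 R)$. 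A second application of Lemma \ref{lem:routing} with $G_1 = G_2 = H'$ yields depth $3 (L/R + O(\log^2 R)) = 3L/R + O(\log^2 R)$ for routing on $H = H' \times H'$, and every simple permutation of $H$ is automatically a simple permutation of the ambient graph $\nn_2(L,R)$.

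I do not anticipate any serious obstacle: the substance of the argument has already been assembled in Lemmas \ref{lem:spanning_subgraph_KR}--\ref{lem:routing}, Corollary \ref{cor:dense_nn2_routing}, and Facts \ref{fact:random_regular_graphs}--\ref{fact:sparsify}. The only care needed is to observe that Fact \ref{fact:random_regular_graphs} is used here with a fixed degree $d=4$, so the hidden constants in $O(\log^2 R)$ depend neither on $R$ nor on $L$, and that the parity requirement on $R$ in the statement of the corollary is exactly what is needed for the existence of a 4-regular simple graph on $R$ vertices produced by Fact \ref{fact:random_regular_graphs}. Efficiency follows from the efficient randomized construction of $\cE_R$ together with the polynomial-time edge-coloring step inside the product routing routine \texttt{Product-Routing}.
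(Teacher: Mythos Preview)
Your proposal is correct and follows essentially the same route as the paper: replace each $K_R$ factor by a 4-regular expander $\cE_R$, observe that $H' = \nn_1(L/R,1)\times \cE_R$ is a spanning subgraph of $\nn_1(L,R)$ via Lemma~\ref{lem:spanning_subgraph_KR}, apply Lemma~\ref{lem:routing} once to route on $H'$ in depth $L/R + O(\log^2 R)$ and again to route on $H'\times H'$ in depth $3L/R + O(\log^2 R)$, and note that degrees add under Cartesian products to give $2+4+2+4=12$. One small wording fix: $\nn_1(L,R)\times\nn_1(L,R)$ is only a spanning \emph{subgraph} of $\nn_2(L,R)$, not equal to it (as you implicitly acknowledge by citing the proof of Corollary~\ref{cor:dense_nn2_routing}), but this is exactly what your argument needs.
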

\begin{proof}
    We will use replace the use of the fully connected graph in Corollary~\ref{cor:dense_nn2_routing} with a sparse expander.
    Consider a 4-regular random graph $\cE_{R}$ generated according to fact~\ref{fact:random_regular_graphs}.
    By fact~\ref{fact:sparsify}, we can route on $\cE_{R}$ in depth $O(\log^2 R)$.
    
    Now consider the graph $H=\cE_{R}\times \nn_1(L/R,1)$.
    $\cE_{R}$ is a spanning subgraph of $K_R$, so it follows by lemma \ref{lem:spanning_subgraph_KR} that $H$ is a spanning subgraph of $\nn_1(L/R,1)$.
    Further, using Lemma \ref{lem:routing}, we can implement any permutation on $H$ in depth $L/R + O(\log^2 R)$, so we can also implement any permutation on $H\times H$ in depth $3L/R + O(\log^2 R)$.
    
    By an identical argument to Corollary \ref{cor:dense_nn2_routing}, we have that $H\times H$ is a spanning subgraph of $\nn_2(L,R)$.
    Further, $H\times H$ has vertex degree 12\footnote{Degree-8 can be achieved by replacing the $\cE_{R}\times\cE_{R}$ factor in the decomposition with a single expander graph after some modification of parameters.} since the max vertex degree of the product of graphs is the sum of the max vertex degrees of the factors.
\end{proof}

This subgraph is illustrated in figure \ref{fig:latticelr} (b).
Later, we will use the contents of the corollary to construct syndrome-extraction circuits with two-qubit gates of range at most $R$ and where each qubit only needs to interact with a constant number of other qubits.

\section{Bilayer implementation of hierarchical codes}
\label{sec:surface-code}

In this section, we will prove Theorem~\ref{thm:hierarchical-ppties}.
Note that \(R\) can be taken to be any number (such as 1) and the required connectivity is always sparse.
We state Theorem~\ref{thm:hierarchical-ppties} here in two parts and present the proof for each part in turn.

\begin{theorem}[Theorem~\ref{thm:hierarchical-ppties}, Part 1]
\label{thm:hierarchical-ppties-1}
     The $\dsl N,K,D \dsr$ hierarchical code $\cH_N$ is constructed by concatenating an outer code, a constant-rate $\dsl n,k,d, \Delta_q, \Delta_g\dsr$ quantum LDPC code $\cQ_n$ and an inner code, a rotated surface code $\cRS_{\ell}$ where $d_{\ell} = \Theta(\log(n))$.
    Let $\rho > 0$ and $\delta \geq 1/2$, such that $k = \rho \cdot n$ and $d = \Theta(n^{\delta})$.
    The code $\cH_N$ has parameters
    \begin{align*}
        K(N) = \Theta\left(\frac{N}{\log(N)^2} \right)~, \qquad
        D(N) = \Omega\left(N^{\delta}/ \log^{2\delta-1} \left[ \frac{N}{\log(N)} \right]\right)~.
    \end{align*}
\end{theorem}
\begin{proof}
We first define the hierarchical code family $\{\cH_N\}$ and corresponding syndrome-extraction circuits $\{\CFT_N\}$.
The element of this family indexed by $N = N(n)= n \cdot d_{\ell}^2$ is created by concatenating an outer LDPC code $\cQ_n$ with an inner surface code $\cRS_{\ell}$, where $\ell = \Theta(\log(n))$.
Recall $\ell = 2d_{\ell}^2-1$ is the total number of qubits used to construct the rotated surface code $\cRS_{\ell}$.
The justification for this choice of $\ell$ will follow in the next section.

To express $n$ in terms of $N$, the following bounds will be useful:
\begin{align}
\label{eq:bounds-n-from-N}
    n = O\left(\frac{N}{\log(N)}\right)~, \qquad n = \Omega\left(\frac{N}{\log^2(N)}\right)~.
\end{align}

It follows from the definition of a concatenated code that the number of encoded qubits is $K = k(n)$, the code distance is $D = d(n) \cdot d_{\ell}$ (See Section~\ref{subsec:concat-review}).
As $\delta \geq 1/2$, $1-2\delta \leq 0$.
Using Equation~\eqref{eq:bounds-n-from-N}, we can write
\begin{align}
    K = \Omega\left( \frac{N}{\log^2(N)} \right)~, \qquad D(N) = \Omega\left(N^{\delta}\;\log^{1-2\delta}\left[\frac{N}{\log(N)} \right]\right)~.
\end{align}
This completes the proof.
\end{proof}

The second portion of Theorem~\ref{thm:hierarchical-ppties}, stated below, guarantees the existence of a syndrome-extraction circuit for the hierarchical code constructed in Theorem~\ref{thm:hierarchical-ppties-1}.

\begin{theorem}[Theorem~\ref{thm:hierarchical-ppties}, Part 2]
\label{thm:hierarchical-ppties-2}
    There exists an explicit and efficient construction of an associated family of syndrome-extraction circuits $\CFT_N$ using only local Clifford operations and $\swapp$ gates of range $R$ such that
    \begin{align*}
        \Space(\CFT_N) = \Theta(N)~, \qquad \Depth(\CFT_N) = O\left( \frac{\sqrt{N}}{R}  \right)~.
    \end{align*}
\end{theorem}

The rest of this section is dedicated to the proof of Theorem~\ref{thm:hierarchical-ppties-2}.
We construct the syndrome-extraction circuit $\CFT_N$ for the concatenated code with the stated parameters.
This circuit is constructed in a bilayer architecture and is described in detail below.
A bilayer construction of the syndrome-extraction circuit $C_n^{\cQ}$ is described in Section~\ref{subsec:sec-from-routing}.
To obtain $\CFT_N$, each outer qubit in the syndrome-extraction circuit $C_n^{\cQ}$ is replaced by a copy of the inner code $\cRS_{\ell}$ as described in Section~\ref{subsec:concat-review}.

If $C_n^{\cQ}$ requires $\Space = \Space(C_n^{\cQ})$ qubits, then we need a layout for $\Space$ surface codes in $2$ dimensions.
In Section~\ref{subsec:inner-code-construction}, we propose an implementation of $\cRS_{\ell}^{\otimes \Space}$ using a bilayer $2$-dimensional architecture.
The advantage of this architecture is that entangling gates between codes can be performed in a transversal manner which reduces the number of extra ancilla qubits.
In Section~\ref{subsec:swap-bilayer}, we describe a novel implementation of $\swapp$ gates for this architecture.
This completes the set of logical Clifford operations $\cK_1$.
We will bring these elements together to construct the syndrome-extraction circuit $\CFT_N$ in Section~\ref{subsec:sec-hierarchical}.

Before doing so, we take a brief detour in Section~\ref{subsec:bilayer-biased} to design Level-1 qubits with noise bias.
We will return to this construction in Section~\ref{sec:numerical estimates} to deal with hook errors.

\subsection[]{Syndrome-extraction circuit $C_n^{\cQ}$ for the outer code}
\label{subsec:sec-from-routing}

In this section, we design a family of syndrome-extraction circuits $\{C_n^{\cQ}\}$ for a constant-rate $\dsl n,k,d, \Delta_q, \Delta_g\dsr$ LDPC code $\{\cQ_n\}$.
We assume $k = \rho \cdot n$ for $\rho > 0$ and that $m = n-k$ is the number of stabilizer generators.
In Section~\ref{subsec:basic_defs}, we described measurement gadgets to measure each stabilizer generator.
We now describe how these gadgets can be implemented in parallel subject to constraints on geometric locality.
We first state the existence of an \emph{ideal} circuit $(C_n^{\cQ})^{\mathrm{ideal}}$ which is not constrained by geometric locality.
While we include the proof of this construction for the sake of completeness, we note that the idea and the proof itself have been used before---for example, see \cite{delfosse2021bounds}.
For this reason, the proof is relegated to Appendix~\ref{app:proof-ideal-sec}.

Define constants $\Delta$ and $m_0$ such that
\begin{align*}
    \Delta := \max(\Delta_q, \Delta_g)~, \qquad m_0 := \max(m_{\ssX},m_{\ssZ})~.
\end{align*}
The circuit $(C_n^{\cQ})^{\mathrm{ideal}}$ is divided into two phases, where in each phase we measure either $\ssX$ or $\ssZ$ syndromes.
Each phase requires at most $(\Delta+2)$ stages.
It satisfies
\begin{align}
    \Space := \Space[(C_n^{\cQ})^{\mathrm{ideal}}] = n + m_0~, \qquad
    s := \Depth[(C_n^{\cQ})^{\mathrm{ideal}}] = 2(\Delta+2)~.
\end{align}

The first phase proceeds as follows:
\begin{enumerate}
  \item All $m_{\ssX}$ ancilla qubits are prepared in the state $\ket{+}$.
  \item In each intermediate step $1 < t \leq \Delta+1$, there is a subset $P_t$ of all $\Space$ qubits such that $P_t$ is a disjoint union of $m_{\ssX}$ pairs of qubits, where each pair contains one ancilla and one data qubit respectively. These pairs correspond to control and target qubits, respectively, for $\cnot$.
  \item All $m_{\ssX}$ ancilla qubits are measured in the $\ssX$ basis.
\end{enumerate}

The second phase is structurally similar with minor modifications:
\begin{enumerate}
  \item All $m_{\ssZ}$ ancilla qubits are prepared in the state $\ket{0}$.
  \item In each intermediate step $1 < t \leq \Delta+1$, there is a subset $P_t$ of all $\Space$ qubits such that $P_t$ is a disjoint union of $m_{\ssZ}$ pairs of qubits, where each pair contains one ancilla and one data qubit respectively. These pairs correspond to target and control qubits, respectively, for $\cnot$.
  \item All $m_{\ssZ}$ ancilla qubits are measured in the $\ssZ$ basis.
\end{enumerate}

We now use this circuit to construct the syndrome-extraction circuit $C^{\cQ}_n$ that is constrained by geometric locality.
It will have the same space footprint $\Space$, but its depth will be different.

\textbf{Setup:} Qubits are arranged in two parallel layers where each layer is a grid of dimensions $L \times L$.
We assume we have access to Clifford operations $\cK$ where, in addition to nearest-neighbor gates between two qubits in the same layer, we can perform nearest-neighbor gates between two qubits that are adjacent but in different layers.
We also assume that $\swapp$ gates of range $R > 1$ are restricted to a single layer.

\textbf{Initialization:} To accommodate all $\Space$ qubits required for $(C_n^{\cQ})^{\rm ideal}$, it is sufficient to choose the smallest integer $L$ that satisfies $2L^2 \geq \Space$.
Initially, data qubits and syndrome qubits are distributed arbitrarily.
While further optimization is likely possible, this will not affect the asymptotics, and certainly results in an upper bound on the circuit volume.

\textbf{Partition $C_n^{\cQ}$ into stages:} The circuit $C_n^{\cQ}$ will be partitioned into $s$ \emph{stages}, where in each stage, we prepare and measure ancilla qubits or simulate long-range entangling gates between pairs of qubits specified by $P_t$.
To simulate a long-range entangling gate, we use a series of $\swapp$ gates which bring each pair specified by $P_t$ close together, followed by the desired entangling gate when they are sufficiently close.
The preparation and measurement stages are straightforward.
We now describe how to perform the long-range entangling gate.

\textbf{Simulating long-range:} In each simulation stage, qubits are arranged such that each pair of $P_t$ are adjacent but in different layers.
In the first step of each stage, we apply a permutation to the ancilla qubit in each pair of $P_t$ to ensure that both qubits in the pair are not in the same layer.
Qubits that are not in $P_t$ remain stationary.
For simplicity, we only permute qubits in the top layer and keep the bottom layer stationary.
This specifies a permutation $\alpha_{t}$ on the top layer.
As this layer is an $L\times L$ lattice, we can use Algorithm \ref{alg:product-routing} to design a circuit so that $\swapp$ operations can be performed in parallel.
Using Theorem~\ref{thm:permutnrouting}, we can construct a sparse spanning subgraph of $\nn_2(L,R)$ with vertex degree 12 such that any permutation $\alpha_{t}$ can be accomplished in depth at most $3L/R + O(\log^2(R))$.
This is followed by nearest-neighbor entangling gates as specified by $P_t$.
The simulation stages have depth
\begin{align}
    3\frac{L}{R} + O(\log^2(R)) + 1~.
\end{align}

Accounting for preparation and measurement steps in each phase, the circuit $C_n^{\cQ}$ has parameters
\begin{align}
\label{eq:Cn-params}
    \Space(C_n^{\cQ}) = \Space = n+m_0~, \qquad \Depth(C_n^{\cQ}) \leq 2\Delta\left(3\frac{L}{R} + O(\log^2(R)) + 1\right) + 4~.
\end{align}
We note that if $R = o(L)$, then the depth is $O(\sqrt{n}/R)$ (as $L = O(\sqrt{n}$).
We shall assume that this is the case for the rest of the paper.
We include the bound in Equation~\eqref{eq:Cn-params} with constants (ex.\ the $3$ preceding $L/R$) and the dependence on the degree $\Delta$ to highlight that, for $R=1$, this is not merely an asymptotic result and can actually be executed in practice.

The bounds in Equation~\eqref{eq:Cn-params} represent an achievability result---any quantum LDPC code can be simulated in depth $O(\sqrt{n}/R)$ as stated in the theorem above.
However, it is not asymptotically tight for all code families (for example, consider the surface code).
We expect future versions of this bound to depend on $k$ and $d$, and how they scale as functions of $n$.

\textbf{Circuit connectivity:} In addition to providing a bound on the depth of permutations, Theorem~\ref{thm:permutnrouting} guarantees that each lattice position interacts with at most 12 other locations all within a range $R$.
This implies that the connectivity of the circuit $C_n^{\cQ}$ we have constructed can be `static'---once qubits have been connected by wires of length at most $R$, we do not change it afterwards.

\subsection[]{Implementation of the inner code}
\label{subsec:inner-code-construction}

We have shown that $C_n^{\cQ}$ is constructed using two parallel layers of qubits, where each layer is a lattice of dimensions $L \times L$.
Here $L$ is the smallest integer such that $2L^2 \geq \Space$, where $\Space$ is the number of qubits used by $C_n^{\cQ}$.
To construct the syndrome-extraction circuit $\CFT_N$, we use two parallel \emph{rotated} lattices.
Each qubit in $C_n^{\cQ}$ is replaced by a rotated surface code $\cRS_{\ell}$ where $\ell = \Theta(\log(n))$.
As each tile uses $\ell^2 = 2d_{\ell}^2-1$ physical qubits, the circuit $\CFT_N$ requires at least $2L^2 \cdot \ell^2$ physical qubits.
We also use additional physical qubits which we refer to as \emph{buffer} qubits to facilitate logical Clifford operations between tiles.
See Figure~\ref{fig:cell}.

Buffer qubits are either placed along the periphery of each lattice or between tiles:
\begin{enumerate}
\item First, we include a thin band of ``buffer'' qubits along the perimeter of each layer for reasons that we will explain shortly.
The band has thickness $(\ell+1)/2$ and therefore this adds at most $2L (\ell+1)^2$ ancilla qubits per layer.
These are denoted as transparent dots in Figure~\ref{fig:cell} (a).
\item Second, for each surface code, we have $d_{\ell}^2$ data qubits, $d_{\ell}^2 - 1$ ancilla qubits, and $1$ extra buffer qubit (light gray) for later convenience.
These are denoted using dark gray, orange/blue and light gray respectively in Figure~\ref{fig:cell} (b).
\end{enumerate}

In total, accounting for both qubits used in tiles as well as buffer qubits, we use at most $2(\ell+1)^2(L + 1)^2$ physical qubits.
These are arranged in two parallel (rotated) lattices of side length $(L+1)\cdot (\ell+1)$\footnote{Note that there are $\sqrt{2} L \ell$ qubits to a side due to the rotated lattice.}.

\begin{figure}[h]
  \centering
  \begin{tikzpicture}
    \node at (-3,2.25) {(a)};
    \node at (0,0) {\includegraphics[width=0.3\columnwidth]{pics/lattice-layer.pdf}};
    \node at (4.25,2.25) {(b)};
    \node at (7.7,0) {\includegraphics[width=0.5\columnwidth]{pics/cell.pdf}};
  \end{tikzpicture}
  \caption{
  (a) A top-down view of one layer of the physical layout.
  At any given time, only some of the qubits are active; these are denoted using dots with solid color.
  In contrast, there are qubits that are inactive that can be used for performing logical operations; these are denoted using dots that are transparent.
  (b) A small $2 \times 2 \times 2$ unit cell of the physical layout containing 8 distance-3 rotated surface code tiles.
  Physical qubits are drawn as dark gray dots.
  $\ssX$- and $\ssZ$-type stabilizer generators within a tile are indicated by a light or dark gray region with the colored dot used as an ancilla.
  Thin lines indicate gate connectivity: Each qubit has 5 neighbors, 4 in-plane and 1 out-of-plane. 
  The light gray qubit in the center is unused when the tiles are idle. Note that this layout does not contain additional ancilla qubits between tiles for lattice surgery: all operations will be performed transversally.
  }
  \label{fig:cell}
\end{figure}

Physical gates can act either on two neighboring qubits in the same layer, or on adjacent qubits in different layers.
Using only $\cK_0$ operations, we construct the necessary primitives to implement the syndrome-extraction circuits $\CFT_N$.

As described in Section~\ref{subsec:sec-from-routing}, the circuit $C_n^{\cQ}$ is divided into $s$ stages.
Implementing the syndrome-extraction circuit for the outer code requires logical Clifford operations $\cK_1$.

At the outset, both data and ancilla tiles are arranged arbitrarily.
Syndromes are measured in two phases---first we measure the $\ssX$-type syndromes and then the $\ssZ$-type syndromes.
The first and last stages of each phase correspond to single-tile logical state preparation and single-tile logical measurements.
Single-tile logical operations of state preparation and measurements can be done using only nearest-neighbor gates.
If Level-0 qubits can be prepared in $\ket{0}$ or $\ket{+}$, then we can prepare Level-1 $\ket{\overline{0}}$ and $\ket{\overline{+}}$ simply by performing the syndrome-extraction circuit which projects the state into the code space.
Similarly, we can perform destructive measurements of logical Pauli operators $\logX$ and $\logZ$ using single-qubit measurements of $\ssX$ and $\ssZ$.

For each stage where we simulate long-range entangling gates, there exists a partition of outer qubits $P_t$; Here, $P_t$ is the set of $m_0 = \max(m_{\ssX},m_{\ssZ})$ pairs, where each pair has one outer ancilla qubit and one outer data qubit respectively.
Depending on whether we are measuring $\ssX$ or $\ssZ$ syndromes, we perform the logical $\cnot$ gate using the outer data qubit as target or the outer ancilla qubit as target.
Data and syndrome tiles that are involved in entangling gates are always arranged such that the tiles are adjacent but in different layers.
As the rotated surface code is a CSS code, we can perform $\cnot$ gates between surface code blocks corresponding to data and ancilla qubits using transversal operations.
For all these operations --- single-tile preparation, single-tile measurement, and transversal $\cnot$ --- we perform surface code error correction after the operation.

To complete the description of the Level-1 syndrome extraction circuit, it only remains to explain how the logical $\swapp$ operation is implemented.
We propose a novel way to perform this gate when $R<\ell$; this is the focus of Section \ref{subsec:swap-bilayer}.
We show that an arbitrary permutation requires $O(L \cdot d_{\ell})$ steps.
Error correction for $\swapp$ gates is performed in an \emph{interleaved} manner---we perform a single round of error correction after each step as described below.

We conclude Section~\ref{subsec:inner-code-construction} by discussing connectivity requirements.
As mentioned in the introduction, we construct syndrome-extraction circuits such that once two lattice positions have been connected, this does not need to change dynamically over the course of the circuit.
Furthermore, each lattice position only ever interacts with a constant-sized set of other lattice locations.
For both error correction and logical operations, lattice positions (that store a physical qubit) will be involved in $\cnot$ gates.
It would be preferable if the pairs of lattice positions that need to interact did not change dynamically over the course of the circuit, and instead could be chosen ahead of time.

If the circuit $C_n^{\cQ}$ is implemented such that each lattice position requires only connectivity to a constant sized set of other lattice positions, then the entire syndrome-extraction circuit $\CFT_N$ for the hierarchical code $\cH_N$ will only use sparse connectivity of two-qubit physical gates.
The proof of this claim is straightforward for single-tile logical state preparation and measurement --- these are accomplished using local physical operations.
Secondly, by construction, logical entangling gates are implemented transversally and therefore the connectivity does not change.
Finally, we will show in Section~\ref{subsec:swap-bilayer} that for a given $L$ and $R$, the connectivity required to implement an arbitrary permutation of tiles will be chosen ahead of time and will not change dynamically.

\subsection[SWAP gate]{$\swapp$ Gate}
\label{subsec:swap-bilayer}
As discussed above, we can perform logical $\cnot$ transversally. 
To complete $\cK_1$, the final ingredient we need are $\swapp$ gates.
We first focus on the special case $R=1$, and then generalize the construction to arbitrary $R$.
To implement the permutation returned by Algorithm \ref{alg:product-routing}, it suffices to perform $\swapp$ gates only along one orientation of the lattice at a time, either vertically or horizontally.
This restriction is utilized to create a resource efficient $\swapp$ operation that requires no additional ancilla qubits.
The key insight is that movement of individual tiles may be accomplished by moving \emph{all} the tiles within a single layer.
By performing the transversal $\swapp$ after movement and then moving back, we can accomplish a $\swapp$ operation between two surface codes that are not directly on top of each other.

\subsubsection[Nearest-neighbor logical SWAP gates]{Nearest-neighbor logical $\swapp$ gates}
\label{subsubsec:nn-swap}

\textbf{High-level overview:}
We first provide a high-level overview of the $\swapp$ operation and refer to Figure~\ref{fig:sideview}.
Consider a two parallel rows of tiles in the bilayer architecture as shown in Figure~\ref{fig:sideview} (a).
The tiles in the top row are labeled $a_1$,..., $a_4$ and the tiles in the bottom row are labeled $b_1$,...,$b_4$.
The tiles labeled $\emptyset$ are buffer qubits along the periphery.
For ease of visualization, the picture depicts a single-tile width of buffer qubits along the top layer.
In practice, we use two half-tile width of buffer qubits in both layers.
In this example, we swap tiles $a_2$ and $a_3$; however, this process can be generalized to swap tiles in parallel.
This is accomplished in $5$ steps:
\begin{enumerate}
\item The logical $\swapp$ operation begins by exchanging alternate tiles between two rows.
In the bilayer architecture, this exchange is performed in a checkerboard pattern, i.e.\ we swap alternate tiles along both rows and columns.
This can be accomplished using what we call the \emph{staggered $\swapp$ primitive}.
\item We can then slide the entire top layer one tile width to the left.
In the bilayer architecture, this will be accomplished using what we call the \emph{walking primitive} that we describe below.
The top layer will shift a half-tile width in one direction while the bottom layer will move a half-tile width in the other.
This is the reason we use buffer qubits along the periphery---to accommodate tiles after the walk step.
\item Pairs of tiles that we wish to exchange are now adjacent in different layers.
For each pair that we wish to swap, we perform a swap operation using nearest-neighbor $\swapp$ gates between adjacent layers.
\item The last two steps are the inverse of the first two steps---we apply the walk primitive and then perform a swap operation between layers on alternate tiles.
\end{enumerate}

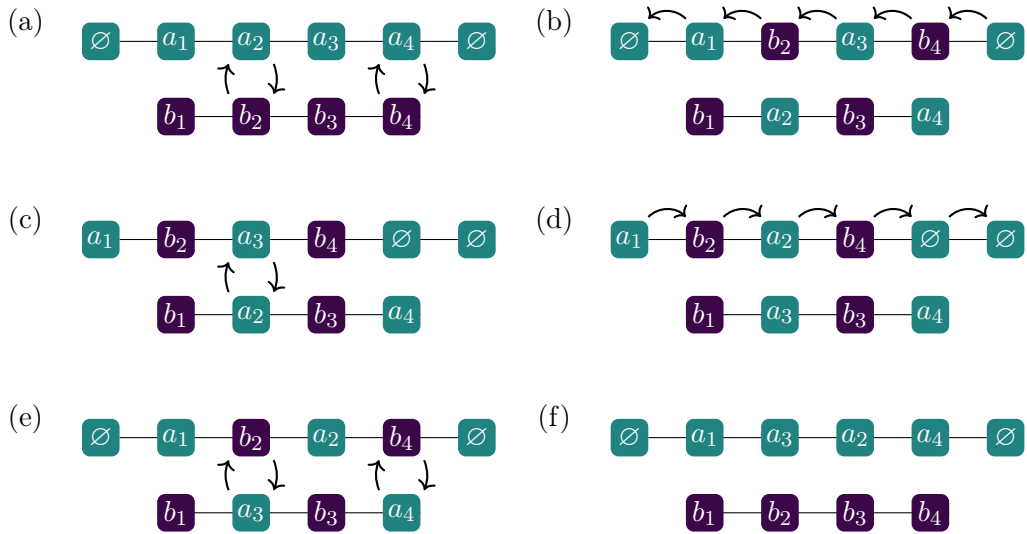
\begin{figure}[h]
    \centering
    \begin{tikzpicture}
    \node at (-0.75, 1.5) {(a)};
    \draw (1,0.25)--(4.5,0.25);
    \draw (0.5,1.25)--(5,1.25);

    \fill[tiletop,text=white,rounded corners=3pt] (0,1) rectangle node[align=center] {$\emptyset$} (0.5,1.5);
    \fill[tiletop,text=white,rounded corners=3pt] (1,1) rectangle node[align=center] {$a_{1}$} (1.5,1.5);
    \fill[tilebot,text=white,rounded corners=3pt] (1,0) rectangle node[align=center] {$b_{1}$} (1.5,0.5);
    \fill[tiletop,text=white,rounded corners=3pt] (2,1) rectangle node[align=center] {$a_{2}$} (2.5,1.5);
    \fill[tilebot,text=white,rounded corners=3pt] (2,0) rectangle node[align=center] {$b_{2}$} (2.5,0.5);
    \fill[tiletop,text=white,rounded corners=3pt] (3,1) rectangle node[align=center] {$a_{3}$} (3.5,1.5);
    \fill[tilebot,text=white,rounded corners=3pt] (3,0) rectangle node[align=center] {$b_{3}$} (3.5,0.5);
    \fill[tiletop,text=white,rounded corners=3pt] (4,1) rectangle node[align=center] {$a_{4}$} (4.5,1.5);
    \fill[tilebot,text=white,rounded corners=3pt] (4,0) rectangle node[align=center] {$b_{4}$} (4.5,0.5);
    \fill[tiletop,text=white,rounded corners=3pt] (5,1) rectangle node[align=center] {$\emptyset$} (5.5,1.5);
    
    \path[->,thick] (2.55,0.95) edge [bend left=20] (2.55,0.55);
    \path[->,thick] (1.95,0.55) edge [bend left=20] (1.95,0.95);
    \path[->,thick] (4.55,0.95) edge [bend left=20] (4.55,0.55);
    \path[->,thick] (3.95,0.55) edge [bend left=20] (3.95,0.95);
    \begin{scope}[xshift=200]
    \node at (-0.75, 1.5) {(b)};
    \foreach \x in {1,...,5} {
        \path[->,thick] ({\x},1.55) edge [bend right=45] ({\x-0.5},1.55);
    }
    \draw (1,0.25)--(4.5,0.25);
    \draw (0.5,1.25)--(5,1.25);
    \fill[tiletop,text=white,rounded corners=3pt] (0,1) rectangle node[align=center] {$\emptyset$} (0.5,1.5);
    \fill[tiletop,text=white,rounded corners=3pt] (1,1) rectangle node[align=center] {$a_{1}$} (1.5,1.5);
    \fill[tilebot,text=white,rounded corners=3pt] (1,0) rectangle node[align=center] {$b_{1}$} (1.5,0.5);
    \fill[tilebot,text=white,rounded corners=3pt] (2,1) rectangle node[align=center] {$b_{2}$} (2.5,1.5);
    \fill[tiletop,text=white,rounded corners=3pt] (2,0) rectangle node[align=center] {$a_{2}$} (2.5,0.5);
    \fill[tiletop,text=white,rounded corners=3pt] (3,1) rectangle node[align=center] {$a_{3}$} (3.5,1.5);
    \fill[tilebot,text=white,rounded corners=3pt] (3,0) rectangle node[align=center] {$b_{3}$} (3.5,0.5);
    \fill[tilebot,text=white,rounded corners=3pt] (4,1) rectangle node[align=center] {$b_{4}$} (4.5,1.5);
    \fill[tiletop,text=white,rounded corners=3pt] (4,0) rectangle node[align=center] {$a_{4}$} (4.5,0.5);
    \fill[tiletop,text=white,rounded corners=3pt] (5,1) rectangle node[align=center] {$\emptyset$} (5.5,1.5);
    \end{scope}
    \begin{scope}[yshift=-75]
    \node at (-0.75, 1.5) {(c)};
    \draw (1,0.25)--(4.5,0.25);
    \draw (0.5,1.25)--(5,1.25);

    \fill[tiletop,text=white,rounded corners=3pt] (5,1) rectangle node[align=center] {$\emptyset$} (5.5,1.5);
    \fill[tiletop,text=white,rounded corners=3pt] (0,1) rectangle node[align=center] {$a_{1}$} (0.5,1.5);
    \fill[tilebot,text=white,rounded corners=3pt] (1,1) rectangle node[align=center] {$b_{2}$} (1.5,1.5);
    \fill[tilebot,text=white,rounded corners=3pt] (1,0) rectangle node[align=center] {$b_{1}$} (1.5,0.5);
    \fill[tiletop,text=white,rounded corners=3pt] (2,1) rectangle node[align=center] {$a_{3}$} (2.5,1.5);
    \fill[tiletop,text=white,rounded corners=3pt] (2,0) rectangle node[align=center] {$a_{2}$} (2.5,0.5);
    \fill[tilebot,text=white,rounded corners=3pt] (3,1) rectangle node[align=center] {$b_{4}$} (3.5,1.5);
    \fill[tilebot,text=white,rounded corners=3pt] (3,0) rectangle node[align=center] {$b_{3}$} (3.5,0.5);
    \fill[tiletop,text=white,rounded corners=3pt] (4,1) rectangle node[align=center] {$\emptyset$} (4.5,1.5);
    \fill[tiletop,text=white,rounded corners=3pt] (4,0) rectangle node[align=center] {$a_{4}$} (4.5,0.5);

   \path[->,thick] (2.55,0.95) edge [bend left=20] (2.55,0.55);
    \path[->,thick] (1.95,0.55) edge [bend left=20] (1.95,0.95);
    \end{scope}
    \begin{scope}[xshift=200,yshift=-75]
    \node at (-0.75, 1.5) {(d)};
    \foreach \x in {0,...,4} {
        \path[->,thick] ({\x+0.5},1.55) edge [bend left=45] ({\x+1},1.55);
    }
    \draw (1,0.25)--(4.5,0.25);
    \draw (0.5,1.25)--(5,1.25);
        
    \fill[tiletop,text=white,rounded corners=3pt] (0,1) rectangle node[align=center] {$a_{1}$} (0.5,1.5);
    \fill[tilebot,text=white,rounded corners=3pt] (1,1) rectangle node[align=center] {$b_{2}$} (1.5,1.5);
    \fill[tilebot,text=white,rounded corners=3pt] (1,0) rectangle node[align=center] {$b_{1}$} (1.5,0.5);
    \fill[tiletop,text=white,rounded corners=3pt] (2,1) rectangle node[align=center] {$a_{2}$} (2.5,1.5);
    \fill[tiletop,text=white,rounded corners=3pt] (2,0) rectangle node[align=center] {$a_{3}$} (2.5,0.5);
    \fill[tilebot,text=white,rounded corners=3pt] (3,1) rectangle node[align=center] {$b_{4}$} (3.5,1.5);
    \fill[tilebot,text=white,rounded corners=3pt] (3,0) rectangle node[align=center] {$b_{3}$} (3.5,0.5);
    \fill[tiletop,text=white,rounded corners=3pt] (4,1) rectangle node[align=center] {$\emptyset$} (4.5,1.5);
    \fill[tiletop,text=white,rounded corners=3pt] (4,0) rectangle node[align=center] {$a_{4}$} (4.5,0.5);
    \fill[tiletop,text=white,rounded corners=3pt] (5,1) rectangle node[align=center] {$\emptyset$} (5.5,1.5);

    \end{scope}
    \begin{scope}[yshift=-150]
    \node at (-0.75, 1.5) {(e)};
    \draw (1,0.25)--(4.5,0.25);
    \draw (0.5,1.25)--(5,1.25);
        
    \fill[tiletop,text=white,rounded corners=3pt] (0,1) rectangle node[align=center] {$\emptyset$} (0.5,1.5);
    \fill[tiletop,text=white,rounded corners=3pt] (1,1) rectangle node[align=center] {$a_{1}$} (1.5,1.5);
    \fill[tilebot,text=white,rounded corners=3pt] (1,0) rectangle node[align=center] {$b_{1}$} (1.5,0.5);
    \fill[tilebot,text=white,rounded corners=3pt] (2,1) rectangle node[align=center] {$b_{2}$} (2.5,1.5);
    \fill[tiletop,text=white,rounded corners=3pt] (2,0) rectangle node[align=center] {$a_{3}$} (2.5,0.5);
    \fill[tiletop,text=white,rounded corners=3pt] (3,1) rectangle node[align=center] {$a_{2}$} (3.5,1.5);
    \fill[tilebot,text=white,rounded corners=3pt] (3,0) rectangle node[align=center] {$b_{3}$} (3.5,0.5);
    \fill[tilebot,text=white,rounded corners=3pt] (4,1) rectangle node[align=center] {$b_4$} (4.5,1.5);
    \fill[tiletop,text=white,rounded corners=3pt] (4,0) rectangle node[align=center] {$a_{4}$} (4.5,0.5);
    \fill[tiletop,text=white,rounded corners=3pt] (5,1) rectangle node[align=center] {$\emptyset$} (5.5,1.5);

    \path[->,thick] (2.55,0.95) edge [bend left=20] (2.55,0.55);
    \path[->,thick] (1.95,0.55) edge [bend left=20] (1.95,0.95);
    \path[->,thick] (4.55,0.95) edge [bend left=20] (4.55,0.55);
    \path[->,thick] (3.95,0.55) edge [bend left=20] (3.95,0.95);
    \end{scope}
    \begin{scope}[xshift=200,yshift=-150]
    \node at (-0.75, 1.5) {(f)};
    \draw (1,0.25)--(4.5,0.25);
    \draw (0.5,1.25)--(5,1.25);
        
    \fill[tiletop,text=white,rounded corners=3pt] (0,1) rectangle node[align=center] {$\emptyset$} (0.5,1.5);
    \fill[tiletop,text=white,rounded corners=3pt] (1,1) rectangle node[align=center] {$a_{1}$} (1.5,1.5);
    \fill[tilebot,text=white,rounded corners=3pt] (1,0) rectangle node[align=center] {$b_{1}$} (1.5,0.5);
    \fill[tiletop,text=white,rounded corners=3pt] (2,1) rectangle node[align=center] {$a_{3}$} (2.5,1.5);
    \fill[tilebot,text=white,rounded corners=3pt] (2,0) rectangle node[align=center] {$b_{2}$} (2.5,0.5);
    \fill[tiletop,text=white,rounded corners=3pt] (3,1) rectangle node[align=center] {$a_{2}$} (3.5,1.5);
    \fill[tilebot,text=white,rounded corners=3pt] (3,0) rectangle node[align=center] {$b_{3}$} (3.5,0.5);
    \fill[tiletop,text=white,rounded corners=3pt] (4,1) rectangle node[align=center] {$a_4$} (4.5,1.5);
    \fill[tilebot,text=white,rounded corners=3pt] (4,0) rectangle node[align=center] {$b_{4}$} (4.5,0.5);
    \fill[tiletop,text=white,rounded corners=3pt] (5,1) rectangle node[align=center] {$\emptyset$} (5.5,1.5);
    \end{scope}
    \end{tikzpicture}
    \caption{Consider two parallel rows of tiles $a_1,...,a_4$ and $b_1,...,b_4$ in different layers, one on top of another.
    Tiles are depicted as squares.
    The tiles with the label $\emptyset$ represent a tile width of buffer qubits on the periphery of the top layer.
    This example demonstrates how to swap two tiles $a_2$ and $a_3$.
    To begin, we swap alternate tiles in each row as shown in (a).
    We then use the walk operation to move tiles one unit as shown in (b).
    For every pair of tiles we wish to exchange, we perform a inter-layer $\swapp$ as shown in (c).
    In this example, we only wish to swap tiles $a_2$ and $a_3$ so the other tiles remain stationary.
    We then undo the transformation by reversing the walk in (d) and undoing the alternate exchange in (e).
    The final panel (f) is the desired state.
    }
    \label{fig:sideview}
\end{figure}

\textbf{Walking primitive:} By placing a $1/2$-tile wide strip of buffer qubits on the periphery of the lattice, we can ``walk'' the entire memory by swapping the physical-level ancilla qubits and surface code data qubits (Figure \ref{fig:lattice_shift}).\footnote{Not to be confused with the extra buffer qubit per tile.}
Using this walking primitive, we can move an entire layer an entire tile width in depth $2 d_{\ell}$ using only $\swapp$-gates. 
This is a global operation.
We will use this primitive in two ways: First, we implement a transversal $\swapp$ between two tiles that are in different layers in the staggered-$\swapp$ primitive (explained below).
Second, we will use this repeatedly to move an entire layer half-a-tile width in some direction.

\textbf{Staggered $\swapp$ primitive:} When possible, we would like to avoid applying gates directly between data qubits of surface code blocks as this would introduce (small) extra correlations in the logical failure probability of tiles. 
Instead of a direct transversal swap between data blocks, the vertical $\swapp$ can instead be performed between data qubits in one layer and syndrome qubits in the other layer.
This is accomplished via the staggered $\swapp$ primitive.
See Figure~\ref{fig:stag-swap}.

\begin{figure}[h]
    \centering
    \includegraphics{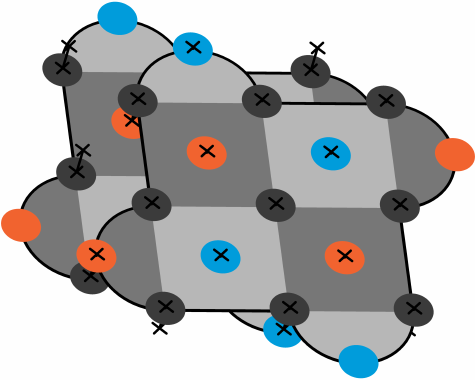}
    \caption{Performing a staggered $\swapp$ operation between layers.
    Physical qubits are arranged such that data (ancilla) qubits in the top layer are adjacent to ancilla (data) qubits in the bottom layer.
    Each qubit in the top layer is swapped with the qubit immediately below it.
    This allows us to exchange tiles between layers without two data qubits directly interacting with each other.}
    \label{fig:stag-swap}
\end{figure}

By default, the qubits in the two layers are positioned such that a data qubit in the top layer is above a data qubit in the bottom layer.
This facilitates performing logical $\cnot$ gates via transversal physical $\cnot$ gates.
We can perform a stagger operation using the walking primitive.
This positions data qubits in the top layer above ancilla qubits in the bottom layer.
We can then apply a transversal $\swapp$ between layers, and undo the stagger operation if need be.
Over the course of the Level-1 logical $\swapp$, however, we only need to undo the stagger operation at the very end.
Throughout the logical $\swapp$, the two layers remain staggered.
If syndrome qubits are reset before use in a syndrome-extraction round, the surface code blocks have undergone a somewhat complicated idle operation with no correlated errors generated between the two surface code blocks.

\begin{figure}[h]
  \centering
  \includegraphics[width=7cm]{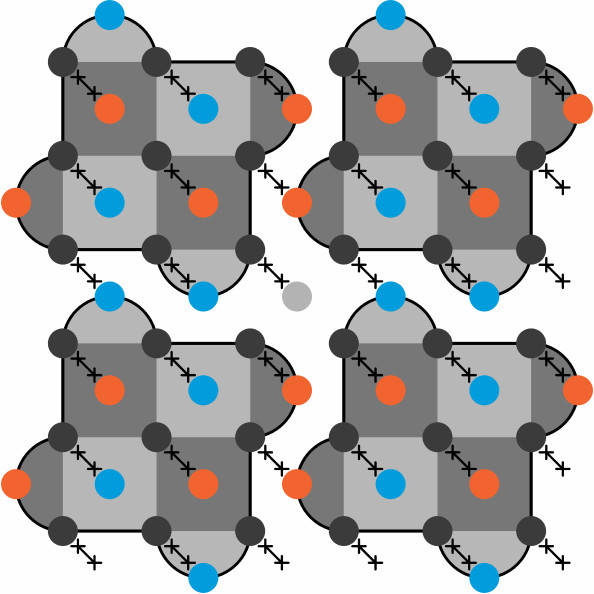}
  \caption{
    Walking primitive used in the $\swapp$ implementation with qubits outside of the $2\times 2$ unit cell not drawn.
    Swaps of Level-0 qubits are drawn as black lines with crosses. 
    Data qubits become ancilla qubits and ancilla qubits become data qubits, so that syndrome extraction is possible at every step.
    A full $1$-unit step to the right of the data qubits can be accomplished following this $1/2$ unit step by swapping each (initially) syndrome qubit with the data qubit up and to the right.
    Later, we will increase the speed at which the top and the bottom layers are shifted relative to each other by shifting the top layer in one direction and the bottom layer in the other.
    }
  \label{fig:lattice_shift}
\end{figure}

\textbf{Level-1 logical $\swapp$:} We are ready to describe the logical $\swapp$ operation.

We begin by exchanging every other tile.
This procedure is illustrated in Figure~\ref{fig:transverse_swap}; this can be compared to Figure~\ref{fig:sideview}.
Steps 1 and 3 (half step shifts) are performed globally with step 2A (vertical swap) or 2B (half step shifts) performed whether or not a logical swap or logical identity is scheduled for a given tile.
The step 2B is necessary for the logical identity gate, because step 2A would otherwise lead to data qubits of adjacent tiles directly next to each other instead of separated by an ancilla qubit.
In this way, syndrome extraction can optionally be performed after every layer of $\swapp$ gates.

\begin{figure}[]
  \centering
  \includegraphics[height={\textheight-6cm}]{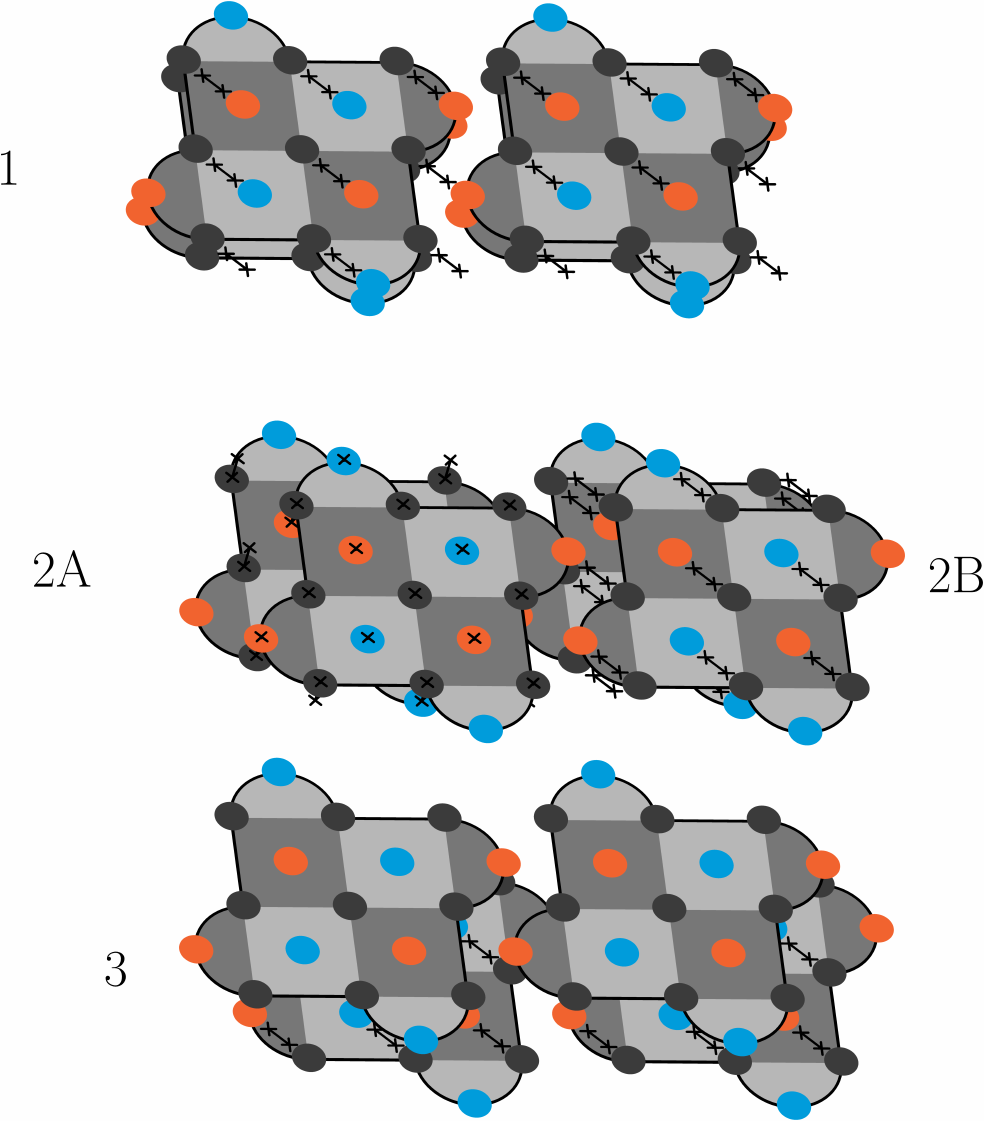}
  \caption{
    3-step transversal swap implementation between two stacked syndrome tiles that avoids directly swapping data qubits. The pair of tiles on the right undergoes an identity operation while the pair of tiles on the left are swapped.
    In step 1, the top layer is shifted by a half-unit using $\swapp$ gates on the top layer.
    In step 2, the pairs of tiles are either swapped using vertical $\swapp$ gates (2A) or shifted using horizontal $\swapp$ gates to keep alignment (2B).
    Finally, in step 3, the lower layer is shifted back by a half-unit using $\swapp$ gates in the bottom layer.
    This operation has the property that syndrome extraction can be performed in all three timesteps. 
    The perspective is inclined slightly to show both layers.
  }
  \label{fig:transverse_swap}
\end{figure}

To summarize, our $\swapp$-gate is implemented as follows where we account for the depth of each operation:
\begin{enumerate}
    \item Use the staggered $\swapp$ on every other tile in a checkerboard pattern to put them in different layers (depth-3).
    \item Use the walking primitive to translate the top and bottom layers $d_{\ell}$ lattice sites in opposite directions so that originally adjacent tiles are now stacked (depth-$d_{\ell}$).
    \item Optionally perform a staggered $\swapp$ (depth-3).
    \item Translate the top and bottom layers $d_{\ell}$ lattice sites back (depth-$d_{\ell}$)
    \item Bring the tiles back to the same layer by undoing a staggered $\swapp$ (depth-3).
\end{enumerate}

At every step, we have the necessary ancilla qubits to perform surface code syndrome extraction.
At first, we might think to perform $d_{\ell}$ rounds of error correction after each of the 5 steps above.
However, because we are working with transversal $\swapp$ operations, we only perform a \emph{single} round of error correction after each step.
Ideal $\swapp$ gates do not spread errors, and therefore, the $\swapp$ operation can be seen as a syndrome-extraction circuit on each tile with a higher failure rate.
While performing just a single round of error correction may reduce the threshold, we expect this change to be minimal.
Furthermore, it reduces the depth of the logical $\swapp$ operation, so our tile $\swapp$-gate takes $2 d_{\ell}+9$ steps of physical $\swapp$-gates.

Recall from Section~\ref{subsec:sec-from-routing} that the syndrome-extraction circuit $C_n^{\cQ}$ is split into $s = 2\Delta + 4$ stages.
Besides the preparation and measurement stages, we simulate a long-range $\cnot$ between pairs of data and ancilla qubits in each stage.
The stage begins by picking an element of each pair and ensuring that they are in different layers.
We can then use the logical $\swapp$ described here to permute tiles.

Note that the $\swapp$ operations can be performed in parallel between tiles on the same layer; the $\swapp$ operations exchange tiles in the same row or column as required by the routing algorithm presented in Algorithm \ref{alg:product-routing}.
We can use Lemma \ref{lem:routing} to show that any permutation of tiles on an $L \times L$ lattice can be accomplished in $3L-3$ steps.
This allows any desired permutation on the $L\times L \times 2$ lattice of tiles to be accomplished in depth $(2d_{\ell}+9)(3 L - 3)$.

In fact, we can optimize this further to avoid repeating redundant operations.
For all but the first and last swap operations:
1) Steps 1 and 5 can be omitted.
2) Within step 3, we may omit the walking step that offsets the upper and lower layers by a half lattice site, so the staggered $\swapp$ becomes a simple transversal swap.
Using these optimizations, any permutation on the $L\times L \times 2$ lattice of tiles can be accomplished in depth $t_\mathrm{route}$ where
\begin{equation}\label{eq:t_route_opt}
    t_\mathrm{route} := (2d_{\ell}+1)(3 L - 3) + 8~.
\end{equation}

\subsubsection[Range R SWAP gates]{Logical permutation routings}
We restrict our attention to range $R$ $\swapp$ gates in a single layer.
Interlayer operations are strictly nearest-neighbor gates, and can be accomplished using the primitives discussed in Section~\ref{subsubsec:nn-swap}.
In the following lemma, we show that an arbitrary permutation routing of tiles can be accomplished in depth $O(L \ell/R)$.

\begin{lemma}
\label{lem:range-R-swap-depth}
Consider access to physical $\swapp$ operations with range $R = o(L \cdot \ell)$.
We can implement any arbitrary permutation of Level-1 qubits in the bilayer architecture in depth
\begin{align*}
    O\left(L \ell/R \right).
\end{align*}
\end{lemma}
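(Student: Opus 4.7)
The plan is to generalize the nearest-neighbor logical $\swapp$ construction of Section~\ref{subsubsec:nn-swap} by speeding up its principal subroutine, the walking primitive, by a factor of $R$; the rest of the $R=1$ strategy then carries over with only cosmetic changes.

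First, I would construct a range-$R$ walking primitive that translates an entire layer by one tile width (approximately $d_\ell$ physical lattice sites) in depth $O(d_\ell/R)$. The $R=1$ walk uses a brick-wall schedule of disjoint nearest-neighbor $\swapp$ gates that advances the data by one physical site per step while swapping the data and ancilla roles, using $2d_\ell$ steps per tile-width shift. Replacing the nearest-neighbor $\swapp$s by $\swapp$s of range $R$ between qubits $R$ sites apart, organized into a constant number of parity classes so that swaps within each class are disjoint, lets each macro-step advance the data by $\Theta(R)$ sites; hence a full tile-width shift takes $O(d_\ell/R)$ depth. The $(\ell+1)/2$-thick peripheral buffer band from Section~\ref{subsec:inner-code-construction} provides the slack into which the tiles can shift without data loss, and the role reversal between data and ancilla at each step ensures that the rotated-surface-code layout is reconstituted after the walk so that syndrome extraction can continue to be interleaved.

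With this primitive in hand, the five-step tile-level $\swapp$ of Section~\ref{subsubsec:nn-swap} (staggered $\swapp$, walk, interlayer $\swapp$, walk back, staggered $\swapp$) has depth $O(d_\ell/R) + O(1) = O(d_\ell/R)$ per round of parallel disjoint adjacent tile swaps. Next I would invoke Product Routing (Lemma~\ref{lem:routing}) at the tile level: viewing the tile lattice as $\nn_1(L,1) \times \nn_1(L,1) \times K_2$, any permutation of Level-1 qubits decomposes into $O(1)$ row and column permutations within each layer plus interlayer transversal $\swapp$s of constant depth. Each intra-layer row or column permutation is realized by the 1D path-routing algorithm (Algorithm~\ref{alg:perm-rout-path}) in $3(L-1)$ rounds of tile-level adjacent swaps, giving $O(L) \cdot O(d_\ell/R) = O(L\ell/R)$ total, as required.

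The main obstacle is establishing the range-$R$ walking primitive. The $R=1$ walk has a transparent brick-wall structure that manifestly preserves the rotated-surface-code layout at every intermediate step; verifying the analogous property for a range-$R$ schedule requires a careful disjoint-pair coloring of the range-$R$ swaps and an explicit accounting of how the data/ancilla assignment of each physical qubit evolves, with special attention to the boundary where the buffer band meets the $\ssX$- and $\ssZ$-boundary stabilizers of each tile.
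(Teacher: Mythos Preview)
Your proposal correctly handles the regime $R \le \ell$ and matches the paper's argument there: speed up the walking primitive by a factor of $R$ so each Level-1 nearest-neighbor $\swapp$ costs $O(d_\ell/R)$, then route on the tile lattice $\nn_2(L,1)$ in $O(L)$ tile-level rounds for a total of $O(L\ell/R)$.

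However, there is a genuine gap when $R \ge \ell$. In that regime your equality ``$O(d_\ell/R)+O(1)=O(d_\ell/R)$'' fails: the staggered $\swapp$, the interlayer transversal $\swapp$, and the interleaved syndrome-extraction round each cost $\Omega(1)$, so one tile-level nearest-neighbor $\swapp$ round is $\Theta(1)$, not $O(d_\ell/R)$. Since your routing at the tile level still uses only \emph{adjacent} tile swaps via $\ttt{Path}$-$\ttt{Routing}$, you need $\Theta(L)$ such rounds, giving depth $\Theta(L)$ overall---which is not $O(L\ell/R)$ once $R=\omega(\ell)$ (e.g.\ for $R=\Theta(L\ell^{1/2})$, still $o(L\ell)$, the claimed bound is $O(\ell^{1/2})$ while your construction gives $\Theta(L)$).

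The missing idea, which the paper supplies in a separate case, is that when $R\ge\ell$ a physical range-$R$ $\swapp$ can implement a \emph{transversal} tile-level $\swapp$ of range $R_1=\lfloor R/\ell\rfloor$ directly, without any walking. One then routes on the Level-1 lattice $\nn_2(L,R_1)$ using Corollary~\ref{cor:sparse-depth-perm}, giving depth $O(L/R_1)=O(L\ell/R)$. In short: for $R<\ell$ accelerate the walk; for $R\ge\ell$ abandon the walk and swap non-adjacent tiles transversally.
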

\begin{proof}
    We proceed in two cases, $R \ge \ell$ and $R<\ell$.

    \textbf{Case 1: $R \ge \ell$}
    
    Level-0 $\swapp$ gates of range $R$ can be used to implement Level-1 transversal gates of range $R_1 = \floor{R/\ell}$.
    We can route on the Level-1 lattice $\nn_2(L, R_1)$ using Corollary~\ref{cor:sparse-depth-perm} with Level-1 tiles swapped transversally.
    This guarantees that the depth of any permutation routing of tiles is $O(L/R_1)$.
    Each transversal $\swapp$ is followed by a single round of syndrome extraction of the rotated surface code; this requires constant depth and does not affect the depth of permutation routing.

    \textbf{Case 2: $R < \ell$}
    
    The range $R$ Level-0 $\swapp$ gate can be used to speed up the walking primitive presented in Section~\ref{subsubsec:nn-swap}.
    Parallelized Level-1 nearest-neighbor $\swapp$ gates implemented in this way take time $\Theta(\ell/R)$.
    Combined with the Corollary~\ref{cor:depth-nn2}, we have that routing takes time $O(L\ell/R)$.
\end{proof}

\subsection{Biased-noise qubits}
\label{subsec:bilayer-biased}
In Section~\ref{subsec:biased-noise}, we will be interested in suppressing certain kinds of Pauli errors.
When a qubit experiences $\ssX$ or $\ssZ$ errors with an asymmetric rate, it is said to be noise-biased.
In this section, we will explain how to introduce such a noise bias on Level-1 qubits by modifying the bilayer architecture.

Let $\eta \ge 1$ be the desired noise bias of the Level-1 qubits, and suppose $\ssZ$ errors occur with a probability $p$ and are $\eta$-times more likely to occur than $\ssX$ or $\ssY$ errors.
We can introduce a noise bias $\eta > 1$ on Level-1 qubits by elongating the surface code into rectangular regions where the minimum-weight $\ssX$ logical operator is longer than the minimum weight $\ssZ$ logical operator.
See Figure~\ref{fig:biased_Level-1}.
Suppose we have a rectangular surface code patch of dimensions $d_\ssX$ by $d_\ssZ$ such that the minimum weight $\ssX$ logical operator has weight $d_\ssX$ and the minimum weight $\ssZ$ logical operator has weight $d_\ssZ$.
Considering the minimum weight logical operators, we expect a failure rate $\propto p^{\ceil{d_\ssX /2}}$ in the $\ssX$ basis and $\propto p^{\ceil{d_\ssZ/2}}$ in the $\ssZ$ basis.
By taking $d_\ssX > d_\ssZ$, we can introduce a noise bias.
\begin{figure}[h]
    \centering
    \includegraphics{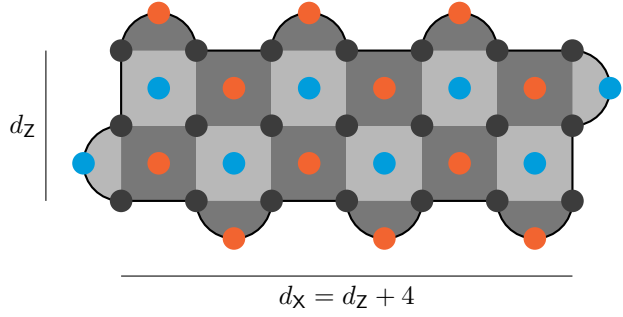}
    \caption{Creating a biased Level-1 qubit by using a rectangular surface code.
    In this picture $d_{\ssX} = 7$ and $d_{\ssZ} = 3$.
    The bias is therefore $\eta = O(p^{-2})$.
    }
    \label{fig:biased_Level-1}
\end{figure}

For simplicity, we assume $d_{\ssX} = d_{\ssZ} + \lceil \log(\eta)/\log(p) \rceil$ to guarantee a bias of \emph{at least} $\eta$.
We also assume that all Level-1 qubits, data and ancilla both, have been biased.
We update the bilayer architecture to use two $L_\ssX \times L_\ssZ$ grids of rotated surface code tiles with $\ssX$ and $\ssZ$ distances $d_\ssX$ and $d_\ssZ$ respectively.
Consider a constant-rate LDPC code $\cQ_n$ with rate $\rho$.
To accommodate $\Space$ outer qubits, we let $L$ be defined by $2L^2 = \Space$.
We then define $L_{\ssZ}$ and $L_{\ssX}$ to be the smallest integers satisfying $L_{\ssZ} \geq L \cdot \sqrt{d_{\ssX}/d_{\ssZ}}$ and $L_\ssX \geq L\cdot \sqrt{d_{\ssZ}/d_{\ssX}}$.

It is not sufficient that the qubits themselves are noise biased.
In addition, we also require that all gates preserve this bias.
We consider each Clifford operation in turn:

\begin{enumerate}
\item \textbf{Preparation \& Measurement:} Within the syndrome-extraction circuit for the outer code where all measurement ancilla qubits are prepared in the $\encket{+}$ state (Section~\ref{subsec:basic_defs}), $\ssX$ errors on the ancilla qubits are suppressed by a factor of $\eta$.
\item \textbf{Entangling gates:} In the bilayer architecture, entangling gates are performed transversally.
Transversal gates are naturally bias preserving.
\item \textbf{$\swapp$ gates:} The way we perform $\swapp$ operations does not change as all tiles have the same dimensions. The buffer region on the periphery of the lattice must be slightly increased to accommodate the elongated surface code tiles during the walking operation.
$\swapp$ operations themselves do not spread errors and are also bias preserving. \footnote{Recall that for two single qubit operators \(A\) and \(B\), \(\mathsf{SWAP} (A \otimes B) = (B \otimes A) \mathsf{SWAP}\).}
\end{enumerate}
Together, this completes the requirements for implementing logical Clifford operations $\cK_1$.

\subsection{Syndrome-extraction circuits for hierarchical codes}
\label{subsec:sec-hierarchical}

The syndrome-extraction circuit $\CFT_N$ for $\cH_N$ is the circuit $C_n^{\cQ}$ where we replace each outer qubit by a tile $\cRS_{\ell}$.
In the circuit $\CFT_N$, each gate of $C_n^{\cQ}$ from the set $\cK$ is replaced by the corresponding element in $\cK_1$ followed by surface code error correction on each outer qubit.
Recall that in Section~\ref{subsec:inner-code-construction}, we discussed how to perform preparation, measurement and logical entangling gates.
In Section~\ref{subsec:swap-bilayer}, we discussed how to perform $\swapp$ gates.

\begin{theorem}
\label{thm:synd-cct-params}
Each element $\cH_{N}$ has an associated $2$-dimensional syndrome-extraction circuit $\CFT_{N}$ with the following properties:
\begin{align*}
  \Space(\CFT_N) = \Theta(N)~, \qquad \Depth(\CFT_N) = O\left( \frac{\sqrt{N}}{R} \right)~.
\end{align*}
Further, each lattice position in $\CFT_N$ only interacts with a fixed set of other lattice positions whose size is independent of $N$.
\end{theorem}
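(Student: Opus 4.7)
The plan is to construct $\CFT_N$ explicitly by lifting the outer syndrome-extraction circuit $C_n^{\cQ}$ of Section~\ref{subsec:sec-from-routing} to the bilayer tile architecture of Section~\ref{subsec:inner-code-construction}: each of the $\Space$ outer qubits is replaced by a copy of $\cRS_{\ell}$ with $d_{\ell} = \Theta(\log n)$, and every elementary gate of $C_n^{\cQ}$ is replaced by the corresponding Level-1 logical gate from $\cK_1$ interleaved with a round of surface-code error correction. The three claims (space, depth, static connectivity) then follow by aggregating the per-stage costs already established in the previous subsections.

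For the space count, the bilayer layout uses two rotated lattices of side length roughly $(L+1)(\ell+1)$, giving a total of $2(\ell+1)^2(L+1)^2$ physical qubits. Since $L = \Theta(\sqrt{n})$ by Equation~\eqref{eq:Cn-params} and $\ell^2 = 2d_{\ell}^2-1 = \Theta(d_{\ell}^2)$, this equals $\Theta(n\, d_{\ell}^2) = \Theta(N)$. The matching lower bound is immediate because we must allocate at least the $n$ tiles storing the outer data qubits.

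For the depth, partition $\CFT_N$ exactly as $C_n^{\cQ}$ is partitioned in Section~\ref{subsec:sec-from-routing} into $s = 2(\Delta+2) = O(1)$ stages. Each preparation/measurement stage lifts to a single-tile logical operation followed by $d_{\ell}$ rounds of surface-code syndrome extraction, contributing depth $O(d_{\ell})$. Each simulation stage consists of (i) a permutation of tiles that brings every scheduled control/target pair into vertically adjacent positions in opposite layers, (ii) a transversal $\cnot$ or $\cz$, and (iii) $d_{\ell}$ rounds of surface-code EC. By Lemma~\ref{lem:range-R-swap-depth} the tile permutation is executed in depth $O(L\ell/R)$, and each intermediate $\swapp$ step carries only a single EC round because $\swapp$ gates do not propagate errors. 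Summing over $O(1)$ stages gives
\[
\Depth(\CFT_N) \;=\; O\!\left(\frac{L\ell}{R} + d_{\ell}\right) \;=\; O\!\left(\frac{\sqrt{n}\,d_{\ell}}{R}\right) \;=\; O\!\left(\frac{\sqrt{N}}{R}\right),
\]
where the final equality uses $\sqrt{N} = \sqrt{n\, d_{\ell}^2} = \sqrt{n}\,d_{\ell}$ and absorbs the additive $d_{\ell}$ into the dominant routing term in the regime $R = O(L)$.

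For static, bounded-degree connectivity, surface-code syndrome extraction uses only intra-layer nearest-neighbor gates, the transversal $\cnot$ and $\cz$ between tiles use only adjacent qubits in opposite layers, and the walking primitive of Section~\ref{subsubsec:nn-swap} uses only nearest-neighbor intra-layer $\swapp$ gates. The only long-range ingredient is the range-$R$ $\swapp$ used to route tiles; by Corollary~\ref{cor:sparse-depth-perm} these are drawn from a single, pre-computed, degree-$12$ spanning subgraph of $\nn_2(L,R)$ whose edge set is fixed independently of which permutation is being implemented. Summing the constant per-category contributions, every physical lattice site has a fixed, constant-sized neighborhood independent of $N$. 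The main obstacle is the depth bookkeeping: one has to verify that the interleaved $d_{\ell}$-round EC epochs do not multiply into the routing cost. The key observation, already given in Section~\ref{subsec:swap-bilayer}, is that between successive tile-$\swapp$ steps only a single EC round is needed, so the per-stage EC cost is an additive $O(d_{\ell})$ rather than a multiplicative one, and this term is absorbed into $\sqrt{N}/R$ in the intended regime.
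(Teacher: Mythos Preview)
Your proof is correct and follows essentially the same approach as the paper: lift $C_n^{\cQ}$ to the tile architecture, use Lemma~\ref{lem:range-R-swap-depth} for the routing depth, and invoke Corollary~\ref{cor:sparse-depth-perm} together with the transversal/nearest-neighbor nature of the remaining operations for the static connectivity. Your version is slightly more explicit about the depth bookkeeping (additive versus multiplicative EC overhead during routing), but the argument is the same.
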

\begin{proof}
    Consider the family of $\dsl n,k,d, \Delta_q, \Delta_g \dsr$ quantum LDPC codes $\{\cQ_n\}$ of constant rate $\rho > 0$.
    
    From Section~\ref{subsec:sec-from-routing}, $\Space(C_n^{\cQ}) = \Theta(n)$.
    To construct $\CFT_N$, each qubit in $C_n^{\cQ}$ is replaced by a surface code $\cRS_{\ell}$.
    It follows that
    \begin{align}
        \Space(\CFT_N) = \Theta(\ell)^2 \cdot \Theta(n) = \Theta(N)~.
    \end{align}
  
  Secondly, each $\cK_1$ operation in $\CFT_N$ requires depth $\Theta(\ell)$ for error correction.
  This is because entangling gates are implemented transversally followed by $d_{\ell}$ rounds of error correction and, per Lemma~\ref{lem:range-R-swap-depth}, the Level-1 logical $\swapp$ operation requires depth $O(L \cdot \ell/R) = O(\sqrt{N}/R)$.
  
  This implies that
  \begin{align}
      \Depth(\CFT_N) = O\left( \frac{\sqrt{N}}{R} \right)~.
  \end{align}
      
  By assumption, logical two-tile operations in $\cK_1$ can be implemented such that the set of lattice positions that interact with each other remain fixed.
  Furthermore, the construction of Section~\ref{subsec:sec-from-routing} guarantees that each of the outer positions in $\CFT_N$ only interact with a fixed and constant-sized set of other outer positions.
  Therefore, all of the physical positions of $\CFT_N$ need only interact with a fixed and constant-sized set of positions in $\CFT_N$.

    This completes the proof.
\end{proof}

\section{Overhead, threshold and asymptotics}
\label{sec:ft-asymptotics}

In this section, we prove that the hierarchical code has a threshold if we use the syndrome-extraction circuits $\CFT_N$ presented in Section~\ref{sec:surface-code}.
We present a formal version of Theorem~\ref{thm:informal-threshold}.

We recall the bound by Delfosse \emph{et al}.\ in Equation~\eqref{eq:dbt}
\begin{equation}
    \tag{\ref{eq:dbt}}
    \Depth(C_n^{\cQ}) = \Omega\left(\frac{n}{\sqrt{\Space(C_n^{\cQ})}}\right)~.
\end{equation}
According to this bound, the physical circuit $C_n^{\cQ}$ cannot have constant depth and space footprints simultaneously.
This blowup in the volume of the circuit introduces additional failure modes.
Consequently, saturating these bounds by no means ensures the existence of a threshold.
This then seems to have defeated the purpose of simulating a non-local circuit using local operations.

In this section, we present a geometrically-local construction of a circuit $\CFT_N$ that encodes a growing number of encoded qubits \emph{and} guarantees that a threshold exists.
We use code concatenation to define a family $\{\cH_N\}$ that we call \emph{hierarchical} codes.
The $N$\textsuperscript{th} element of this family is obtained by concatenating an LDPC code $\cQ_n$ and a rotated surface code $\cRS_{\ell}$ of size $\Theta(\ell^2)$.
Here $N = \Theta(n \cdot \ell^2)$.

In Section \ref{subsec:evoln-syndrome-ext}, we study the failure rate per round $\pround$ and establish its dependence on $\pphys$ for a syndrome-extraction circuit for any $\dsl n,k,d, \Delta_q, \Delta_g\dsr$ quantum LDPC code.
The circuits are themselves faulty and are described by a locally decaying faults model.
We show in Section \ref{subsec:coarse-grain-concat} that logical gates in the bilayer architecture guarantee that Level-1 logical errors in surface code blocks are suppressed exponentially following logical Clifford operations.
This allows us to deal with the Level-1 syndrome-extraction circuit for the outer code directly without having to keep track of Level-0 failure probabilities.
Permuting tiles can introduce Level-1 correlated errors among the tiles. 
In Section~\ref{subsec:cft-threshold}, we show that there exists a choice of $\ell$ such that the $\pround^{(1)}$ is an arbitrarily small constant.
We can then invoke Gottesman's threshold theorem which we discussed in Section~\ref{subsec:ldpc-review} to prove the existence of a threshold.

We conclude this overview by highlighting some features of this construction.

\begin{enumerate}
\item We show that $\ell = \Theta(\log(n))$ is sufficient to achieve a threshold for the outer code.
If the code $\cQ_n$ has constant rate, then the code $\cH_N$ has rate $O(\log(n)^{-2}) \rightarrow 0$ as $n\rightarrow \infty$.
\item Although outer and inner codes are both LDPC codes, $\cH_N$ is itself no longer an LDPC code---it uses stabilizer measurements that have weight $\log(n)$ as the side length of the inner surface codes is $\ell = \Theta(\log(n))$.
However, the logical operators of the surface code can be measured (destructively) using only single-qubit measurements.
\item There is a cost to locality---the sub-threshold scaling of the logical failure rate is qualitatively different from the typical exponential error suppression as a function of the distance.
Instead, we see a strictly sub-exponential, but still superpolynomial, suppression of the logical failure rate with the distance.
\end{enumerate}

\subsection[Evolution of errors]{Evolution of errors in syndrome-extraction circuit $C_n^{\cQ}$}
\label{subsec:evoln-syndrome-ext}

Consider an $\dsl n,k,d, \Delta_q, \Delta_g\dsr$ LDPC code family $\{\cQ_n\}$ with constant rate, i.e.\ $k = \rho\cdot n$ for some constant $\rho \in (0,1)$ and distance $d = \Theta(n^{\delta})$ for some constant $\delta \in (0,1]$.
In Section \ref{subsec:ldpc-review}, we discussed how Gottesman's proof for the existence of a threshold for LDPC codes depends on the number of errors \emph{per round} of syndrome extraction on both data and ancilla qubits.
In this section, we shall study how the probability of errors per round depends on the circuit $C_n^{\cQ}$.

Recall the circuit $C_n^{\cQ}$ described in Section~\ref{subsec:sec-from-routing}.
Qubits are arranged on two parallel lattices where each lattice has dimensions $L \times L$.
Here, $L$ is the smallest natural number that satisfies $ 2L^2 \geq \Space$.
In total the circuit has $s = 2\Delta + 4$ stages which can be broken down as follows.
The syndrome-measurement circuit $C_n^{\cQ}$ is divided into two phases, one for each of $\ssX$- and $\ssZ$-type syndrome measurements.
Each phase is further divided into $\Delta + 2$ stages, where $\Delta = \max(\Delta_q,\Delta_g)$.
In addition to one stage each to prepare and measure ancillas, there are $\Delta$ stages where we simulate long-range entangling gates.
In each such stage, we permute qubits in the lattice using $\swapp$ gates.
Given access to $\swapp$ operations of range $R$, the permutation has bounded depth $\dperm = O(L/R)$.
This is then followed by nearest-neighbor entangling gates.

Our first result is technical and allows one to compose locally decaying distributions.
\begin{lemma}
\label{lem:composition}
    Let $\Pr_1$ and $\Pr_2$ be two independent and locally decaying distributions on $[n]$ with rates $p_1$ and $p_2$.
    Consider the distribution $\widehat{\Pr}:\Pow(n) \to \bbR$ defined as
    \begin{align}
        \widehat{\Pr}(E) = \sum_{\substack{E_1, E_2 \subseteq{[n]}\\ E \subseteq E_1 \union E_2 }} \widehat{\Pr}_2(E_1) \cdot \widehat{\Pr}_1(E_2)~.
    \end{align}
    Then $\Pr$ is a locally decaying distribution with rate $p = p_1 + p_2$, i.e.\ for all $E \subseteq [n]$,
    \begin{align}
        \Pr(E) \leq (p_1 + p_2)^{|E|}~.
    \end{align}
\end{lemma}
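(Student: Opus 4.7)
The plan is to unwind the definition of $\widehat{\Pr}$ and recognize that the quantity $\Pr(E) = \sum_{E' \supseteq E} \widehat{\Pr}(E')$ is just the probability, with respect to the product measure on independent $E_1 \sim \widehat{\Pr}_1$, $E_2 \sim \widehat{\Pr}_2$, that every element of $E$ lies in $E_1 \cup E_2$. Writing things in this form turns the estimate into a clean combinatorial statement: I need to bound
\begin{equation*}
    \Pr\!\bigl[E \subseteq E_1 \cup E_2\bigr] \;\leq\; (p_1 + p_2)^{|E|}.
\end{equation*}

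The key step is a partition of the event $\{E \subseteq E_1 \cup E_2\}$ according to which part of $E$ is ``covered'' by $E_1$. For every sample $(E_1,E_2)$ in this event, set $S := E_1 \cap E$; then automatically $E \setminus S \subseteq E_2$. This gives a disjoint decomposition
\begin{equation*}
    \{E \subseteq E_1\cup E_2\} \;=\; \bigsqcup_{S \subseteq E}\, \bigl\{\, E_1 \cap E = S\ \text{and}\ E \setminus S \subseteq E_2\,\bigr\}.
\end{equation*}
Using independence of $E_1$ and $E_2$, each summand factorises, and the bounds $\Pr\!\bigl[E_1 \cap E = S\bigr] \leq \Pr\!\bigl[S \subseteq E_1\bigr] \leq p_1^{|S|}$ and $\Pr_2(E \setminus S) \leq p_2^{|E \setminus S|}$ apply directly from the hypothesis that $\Pr_1, \Pr_2$ are locally decaying.

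Combining these ingredients, I get
\begin{equation*}
    \Pr(E) \;\leq\; \sum_{S \subseteq E} p_1^{|S|}\, p_2^{|E \setminus S|} \;=\; (p_1 + p_2)^{|E|},
\end{equation*}
where the last equality is the binomial theorem. The main (minor) obstacle is bookkeeping: ensuring that the decomposition is a disjoint union so that no overcounting occurs and the inequality is genuinely tight in this combinatorial step. Once the disjointness is in hand, independence plus the binomial identity finish the proof in one line. A mild alternative, if one prefers a looser but shorter argument, is to drop the disjointness and use a union bound over all subsets $S \subseteq E$ with $S \subseteq E_1$, $E\setminus S \subseteq E_2$; this yields the same final estimate with essentially the same calculation.
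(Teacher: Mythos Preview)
Your proposal is correct and is essentially the same argument as the paper's proof: both identify $\Pr(E)$ as the probability that $E \subseteq E_1 \cup E_2$ under the product measure, decompose over subsets $S \subseteq E$ (your $S = E_1 \cap E$ is exactly the paper's disjoint partition $E = E_1 \sqcup E_2$), apply the locally-decaying bounds to each factor, and sum via the binomial theorem. Your write-up is slightly more explicit about why the decomposition is disjoint, but the underlying steps are identical.
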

\begin{proof}
    For $E \subseteq [n]$, we can write
    \begin{align}
        \Pr(E) &= \sum_{\substack{E_1, E_2 \subseteq [n] \\ E \subseteq E_1 \cup E_2}} \widehat{\Pr}_1(E_1) \widehat{\Pr}_2(E_2) \\
    &\le \sum_{\substack{E_1, E_2 \subseteq E \\ E = E_1 \sqcup E_2}} {\Pr}_1(E_1) {\Pr}_2(E_2) \\
    &\le \sum_{w=0}^{|E|} \binom{|E|}{w} p_1^w p_2^{|E|-w} \\
    &  = (p_1+p_2)^{|E|}
    \end{align}
    The result follows.
\end{proof}

\begin{lemma}
\label{lem:lse_map}
    Let $\be \in \bbF_2^n$ be a random binary vector such that $E = \supp(\be)$ is distributed according to a locally decaying distribution with rate $p$.
    Let $\bM \in \bbF_2^{m\times n}$ with row and column weight at most $\Delta$ and $\bf = \bM\cdot \be$ be a random variable induced from $\be$.
    Then $F = \supp(\bf)$ is distributed according to a $2^\Delta p^{1/\Delta}$ locally decaying distribution.
\end{lemma}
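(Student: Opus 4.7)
The plan is to bound the upper-tail probability $\Pr[F \supseteq T]$ directly for an arbitrary $T \subseteq [m]$ via a union bound over ``witness tuples,'' then control the resulting sum using both the row weight (to bound the number of tuples) and the column weight (to bound the effective support size from below). Since $f_i = (\bM\be)_i = \bigoplus_{j \in \supp(\bM_i)} e_j$, having $f_i = 1$ requires $|\supp(\bM_i) \cap E|$ to be odd, which in particular forces $\supp(\bM_i) \cap E \neq \emptyset$. Consequently, $F \supseteq T$ implies that for each $i \in T$ we may choose some witness $j_i \in \supp(\bM_i) \cap E$.

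First, I would take a union bound over all possible tuples of witnesses $(j_i)_{i \in T} \in \prod_{i \in T} \supp(\bM_i)$:
\begin{equation*}
    \Pr[F \supseteq T] \leq \sum_{(j_i)_{i \in T}} \Pr\bigl[\,\{j_i : i \in T\} \subseteq E\,\bigr].
\end{equation*}
The row-weight bound $|\supp(\bM_i)| \leq \Delta$ yields at most $\Delta^{|T|}$ tuples. For each fixed tuple, set $S = \{j_i : i \in T\}$; the locally decaying assumption on $E$ gives $\Pr[S \subseteq E] \leq p^{|S|}$.

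The key step is to use the column-weight bound to show $|S| \geq |T|/\Delta$: since any fixed $j \in [n]$ lies in $\supp(\bM_i)$ for at most $\Delta$ values of $i$, no single index can serve as the witness $j_i$ for more than $\Delta$ elements of $T$, so the set $\{j_i : i \in T\}$ has cardinality at least $|T|/\Delta$. Since $p \leq 1$, this gives $\Pr[S \subseteq E] \leq p^{|T|/\Delta}$ uniformly over tuples. Combining,
\begin{equation*}
    \Pr[F \supseteq T] \leq \Delta^{|T|}\, p^{|T|/\Delta} \leq \bigl(2^{\Delta}\bigr)^{|T|}\, p^{|T|/\Delta} = \bigl(2^{\Delta} p^{1/\Delta}\bigr)^{|T|},
\end{equation*}
using the loose bound $\Delta \leq 2^{\Delta}$ to match the stated constant.

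The argument is essentially a weighted union bound, and the only mildly subtle point is the column-weight step that pins down the lower bound on $|S|$; without it, the individual probabilities $\Pr[S \subseteq E]$ could in principle collapse to $p$ (for $|S|=1$) and the exponent in $|T|$ would be lost. Everything else is routine, so I expect this step to be the main (though small) obstacle in writing up a clean proof.
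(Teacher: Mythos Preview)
Your proof is correct and follows the same high-level strategy as the paper---use the row-weight bound to control the number of ``preimage configurations'' and the column-weight bound to lower-bound the size of each, then apply the locally decaying hypothesis---but the implementation differs. The paper enumerates \emph{minimal covering subsets} $E \in \scrI$ of the full preimage $J = \bigcup_{i \in T} \supp(\bM_i)$, bounds $|\scrI| \leq 2^{|J|} \leq 2^{\Delta |T|}$, and separately argues each such $E$ has $|E| \geq |T|/\Delta$. You instead union-bound directly over \emph{witness tuples} $(j_i)_{i \in T}$, which is cleaner and immediately yields the sharper count $\Delta^{|T|}$ rather than $2^{\Delta |T|}$ before you relax it to match the stated constant. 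Your pigeonhole step (no witness can serve more than $\Delta$ coordinates of $T$) is the same column-weight argument the paper uses, just phrased for tuples instead of sets. Both routes are short; yours has the advantage of avoiding the auxiliary notion of minimality and giving a tighter intermediate bound, while the paper's version makes the role of the preimage set $J$ slightly more explicit.
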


\begin{proof}
    For a set of bits $A \subseteq [n]$, denote its image under $\bM$ by $\bM(A)$ defined as the union of columns $\bM_i$ of $\bM$ in $A$. I.e. $\bM(A) := \bigcup_{i\in A} \supp \bM_{i}$.
    The probability that an error set $F \subseteq [m]$ on the output occurs
    \begin{align}
        \Pr(F) &= \sum_{\substack{E \subseteq [n] \\ F \subseteq \bM(E)}} \widehat{\Pr}(E)~,
    \end{align}
    i.e. in order for $F$ to have occurred, there must be a set of errors on the input such that $F$ is in the image.
    We can rewrite this sum in terms of the \emph{largest subset} of the powerset $\scrI \subseteq \Pow(n)$ such that for any single set $E\in \scrI$:
    \begin{enumerate}
        \item \label{it:I-assumption-1} We have that $F \subseteq \bM(E)$.
        \item \label{it:I-assumption-2} For all non-empty subsets $G\subset E$, $F\nsubseteq \bM(E\setminus G)$.
    \end{enumerate}
    Each element of $\scrI$ is minimal in the sense that it is a subset of no other element of $\scrI$ (Assumption~\ref{it:I-assumption-2}) while still having $F$ in its image (Assumption~\ref{it:I-assumption-1}).
    The second condition will allow us to replace the $\widehat{\Pr}(\cdot)$ with $\Pr(\cdot)$ in the sum without loosening the upper bound.
    Additionally, the column weight of $\bM$ is at most $\Delta$, so the size of each element $E \in \scrI$ is at least $|F| / \Delta$.
    Using the locally decaying distribution assumption yields
    \begin{align}
        \Pr(F) &\le \sum_{E \in \scrI} \Pr(E)\\
               &\le |\scrI| \cdot p^{|F|/\Delta}~.
    \end{align}
    
    It now remains to count the number of elements of $\scrI$:
    Let $J \subseteq [m]$ be the preimage of $F$ in the sense that for every element $e$ in $J$, the intersection of $\bM(\{e\})$ with $F$ is not empty.
    The row weight of $\bM$ is at most $\Delta$, so $J$ is no larger than $|F|\Delta$.
    Every set $E$ in $\scrI$ must satisfy $E\subseteq J$ or else there would be some element $a \in E$ such that $F \subseteq \bM(E\setminus\{a\})$ (contradicting Assumption~\ref{it:I-assumption-2} on $\scrI$).
    Finally, there are $2^{|J|}$ subsets of $J$, so $|\scrI| \le 2^{|J|} \le 2^{|F|\Delta}$.
    Continuing with the bound, we have that
    \begin{align}
        \Pr(F) &\le |\scrI| \cdot p^{|F|/\Delta}\\
               &\le 2^{|F|\Delta} \cdot p^{|F|/\Delta}\\
               &= \left( 2^{\Delta} \cdot p^{1/\Delta}\right)^{|F|} ~.
    \end{align}
    The result follows.
\end{proof}

The syndrome-extraction circuit $C_n^{\cQ}$ has a special structure---errors do not spread from one data qubit to another or from one ancilla qubit to another.
We show that this implies that $\ssD$ and $\ssA$ are distributed according to a locally decaying distribution.
Before doing so, we review the \emph{symplectic representation} formalism which we use in the following proofs.

\textbf{The symplectic representation:} For $\Space \in \bbN$, consider any Pauli operator $\ssP \in \cP_{\Space}$ and suppose it is expressed as $\ssP = \ssX(\bp_x)\ssZ(\bp_z)$ for $\bp_x, \bp_z \in \bbF_2^{\Space \times 1}$.
Clifford unitary operators $U$ map Pauli operators to Pauli operators under conjugation, i.e.\ $U\ssP U\conj$ is a Pauli operator.
Equivalently, this can be represented as a linear map on $\bp_x$ and $\bp_z$.
Corresponding to $U$, there exists a matrix $\bM \in \bbF_2^{2\Space \times 2\Space}$\footnote{The matrix $\bM$ has additional structure---it is symplectic \cite{gottesman1997stabilizer}, but this is not relevant for this proof.} such that the action of $U$ on $\ssP$ can equivalently can be expressed as
\begin{align}
     \begin{pmatrix}
            \bp_x\\
            \bp_z
        \end{pmatrix} \to
        \bM \cdot \begin{pmatrix}
            \bp_x \\
            \bp_z
        \end{pmatrix} \pmod{2}~.
\end{align}
All arithmetic on symplectic vectors is performed modulo $2$; we drop the `mod $2$' suffix in the equations that follow.

Recall that the $\Space = \Space(C_n^{\cQ})$ qubits in $C_n^{\cQ}$ are partitioned into data qubits and ancilla qubits respectively.
Controlled-$\ssP$ gates for $\ssP \in \cP$ only use the ancilla qubits as control and data qubits as target.
Let $\bd_x$, $\bd_z \in \bbF_2^{n \times 1}$ and $\ba_x$, $\ba_z \in \bbF_2^{m_0 \times 1}$ represent the Pauli operators $\ssD$ and $\ssA$ on data and ancilla qubits respectively.
For the purposes of understanding how errors accumulate over one round of syndrome measurements, we are not interested in the physical locations of the qubits.
As far as their action on $\ssD$ and $\ssA$ are concerned, we treat $\swapp$ gates as (noisy) idle gates \footnote{Colloquially, $\swapp$ gates change the locations of qubits in physical space, not in `math' space.
For instance, suppose each qubit has a label $1,...,n$ and we choose to represent the vector $\bp_x$ as $(\bp_x(1),...,\bp_x(n))$, where $\bp_x(i)$ represents the Pauli operator on the $i$\textsuperscript{th} qubit.
Then moving qubits around in physical space using $\swapp$ gates does not affect the $i$\textsuperscript{th} component $\bp_x(i)$.
For this reason, we ignore the action of $\swapp$ on $\bp_x$ and $\bp_z$.
}.

In any given time step of $C_n^{\cQ}$ where we apply entangling gates, all qubits interact with the same type of gate ($\cnot$ or $\cz$) or remain idle.
The corresponding symplectic matrices have a very special form.
We can write the joint evolution of $\ssD$ and $\ssA$ under the Clifford transformation acting on the $\ssX$- and $\ssZ$-components separately:
\begin{enumerate}
    \item If qubits are only involved in $\cnot$ operations that use ancilla qubits as control qubits and data qubits as target qubits, then there exists a matrix $\bM \in \bbF_2^{m_{\ssX} \times n}$ such that
    \begin{align}
    \label{eq:cnot-symplectic}
        \begin{pmatrix}
            \bd_x\\
            \ba_x
        \end{pmatrix} \to
        \begin{pmatrix}
            (\bM)^T \cdot \ba_x + \bd_x \\
            \ba_x
        \end{pmatrix}~,
        \qquad
        \begin{pmatrix}
            \bd_z\\
            \ba_z
        \end{pmatrix} \to
        \begin{pmatrix}
            \bd_z\\
            \ba_z + \bM \cdot \bd_z
        \end{pmatrix}~.
    \end{align}
    For every pair of qubits indexed by $i \in [m_{0}]$ and $j \in [n]$ that are the control and target of a $\cnot$, the $(i,j)$ entry of $\bM$ is $1$.
    The other entries are $0$.
    In this setting, we note that $\ba_x$ and $\bd_z$ remain invariant.
    \item If qubits are only involved in $\cz$ operations that use ancilla qubits as control qubits and data qubits as target qubits, then there exists a matrix $\bN \in \bbF_2^{m_{\ssZ} \times n}$ such that
    \begin{align}
         \begin{pmatrix}
            \bd_x\\
            \ba_x
        \end{pmatrix} \to
        \begin{pmatrix}
            \bd_x \\
            \ba_x
        \end{pmatrix}~,
        \qquad
        \begin{pmatrix}
            \bd_z\\
            \ba_z
        \end{pmatrix} \to
        \begin{pmatrix}
            \bd_z + \bN^T \cdot \ba_x\\
            \ba_z + \bN \cdot \bd_x
        \end{pmatrix}~.
    \end{align}
    For every pair of qubits indexed by $i \in [m_{\ssZ}]$ and $j \in [n]$ that are the control and target of a $\cz$, the $(i,j)$ entry of $\bN$ is $1$.
    The other entries are $0$.
    In this setting, we note that $\bd_x$ and $\ba_x$ remain invariant.
\end{enumerate}
In the symplectic representation, we can see that the structure of a syndrome-extraction circuit is special because in each phase where we measure either $\ssX$ or $\ssZ$ syndromes, there is always an invariant subspace (for example, $\bd_x$, $\ba_z$ when measuring $\ssX$-type syndromes).

\textbf{The induced error model:} In the symplectic representation, a faulty Clifford operation can be expressed as an affine map---there exists random variables $\bb_x$, $\bb_z \in \bbF_2^{\Space \times 1}$ such that the noisy operation can be expressed as
\begin{align}
    \begin{pmatrix}\bq_x\\ \bq_z\end{pmatrix} = \bM \cdot
    \begin{pmatrix}\bp_x\\ \bp_z\end{pmatrix} +
    \begin{pmatrix}\bb_x\\ \bb_z\end{pmatrix}~.
\end{align}
The errors $\bb_x$, $\bb_z$ are caused by faults.
The faults are themselves are distributed according to the locally decaying distribution $\cF$ with failure rate $\pphys$.
Let $\cX$, $\cZ$ be the induced distributions over $\bb_x$ and $\bb_z$.
For example, $\cX(\bb_x)$ represents the sum of the probabilities over all events where the error is $(\bb_x',\bb_z')$ such that $\supp(\bb_x) \subseteq \supp(\bb_x')$.
In other words, it represents the total probability that the error has a non-trivial $\ssX$ component on $\supp(\bb_x)$.

When the circuit $C$ is composed of elements from $\cK$, we can say more about the induced distributions $\cX$ and $\cZ$.
\begin{lemma}
    \label{lem:induced-dist}
    Consider a Clifford circuit of depth $1$ composed of elements from $\cK$.
    The induced total probabilities $\cX$, $\cZ$ are locally decaying distributions with failure rate $\sqrt{\pphys}$.
\end{lemma}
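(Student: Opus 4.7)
The plan is to unwind the definition of the induced distribution $\cX$ in terms of the fault distribution $\cF$ and exploit the fact that in a depth-$1$ circuit there is no time for errors to spread. Let $\widehat{\cE}(\bx',\bz')$ denote the probability that faults produce Pauli error $\ssX(\bx')\ssZ(\bz')$ on the output. Then
\begin{equation*}
    \cX(\bx) \;=\; \sum_{\bx' \supseteq \bx}\sum_{\bz'} \widehat{\cE}(\bx',\bz')
            \;=\; \sum_{F'} \widehat{\cF}(F')\cdot \Pr[\,\supp(\bx') \supseteq \bx \mid F'\,]~,
\end{equation*}
where $F'$ ranges over sets of fault locations. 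Because the circuit has depth $1$, faults occur after the single layer of gates, so the induced Pauli error is supported entirely on $\supp(F')$. In particular, $\Pr[\supp(\bx')\supseteq \bx \mid F'] = 0$ unless $\supp(F') \supseteq \bx$, and it is at most $1$ otherwise.

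Next, I would use the key structural fact that in any depth-$1$ circuit built from operations in $\cK$, every qubit lies in exactly one location, and every location involves at most two qubits. Consequently, given a set of qubits $\bx$, there is a \emph{unique} minimal set of locations $F^*(\bx)$ whose support covers $\bx$, namely the set of locations containing the qubits in $\bx$. Moreover, for any $F'$, we have $\supp(F') \supseteq \bx$ if and only if $F' \supseteq F^*(\bx)$. This lets me collapse the sum:
\begin{equation*}
    \cX(\bx) \;\leq\; \sum_{F' \supseteq F^*(\bx)} \widehat{\cF}(F') \;=\; \cF\bigl(F^*(\bx)\bigr)~.
\end{equation*}

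Applying the locally decaying assumption $\cF(F) \leq \pphys^{|F|}$ and the bound $|F^*(\bx)| \geq \lceil |\bx|/2\rceil$ (which follows because each location covers at most two qubits), I conclude
\begin{equation*}
    \cX(\bx) \;\leq\; \pphys^{\lceil |\bx|/2\rceil} \;\leq\; \bigl(\sqrt{\pphys}\bigr)^{|\bx|}~.
\end{equation*}
The identical argument, replacing $\ssX$ by $\ssZ$ everywhere, gives the bound for $\cZ$. The main conceptual step, and the place where care is needed, is the uniqueness of $F^*(\bx)$: this only works because the circuit has depth $1$ so that each qubit is unambiguously associated with a single location, and therefore the set of faulty $F'$ whose support covers $\bx$ is exactly a single upper set in the poset of fault sets. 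Without the depth-$1$ assumption one would have to enumerate several minimal fault sets (as in Lemma~\ref{lem:lse_map}) and an extra combinatorial factor would appear; the $\sqrt{\pphys}$ rate is precisely the penalty for a two-qubit location being able to produce a two-qubit error from a single fault.
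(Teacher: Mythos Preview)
Your proof is correct and follows essentially the same approach as the paper: identify the unique minimal fault set $F^*(\bx)$ covering $\bx$ (which exists because in depth $1$ each qubit lies in exactly one location), bound $\cX(\bx)\le\cF(F^*(\bx))$, and use $|F^*(\bx)|\ge |\bx|/2$ from the two-qubit-gate constraint. Your explicit justification of the ``if and only if'' relationship $\supp(F')\supseteq\bx \Leftrightarrow F'\supseteq F^*(\bx)$ is a welcome clarification of a step the paper leaves implicit.
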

\begin{proof}
    We shall prove this statement for the distribution $\cX$; the proof for the distribution $\cZ$ is identical.
    Fix an arbitrary vector $\bb_x\in \{0,1\}^{\Space}$.

    Suppose a fault $F$ results in some error $\bb_x'$ such that $\supp(\bb_x') \supseteq \supp(\bb_x)$.
    This implies that $F$ must obey $\supp(F) \supseteq \supp(\bb_x)$.
    Let $F$ be the smallest set of fault locations such that $\supp(F) \supseteq \supp(\bb_x)$.
    Because the circuit $C$ has depth $1$ and is composed entirely of only $1$- and $2$-qubit gates, this implies that $|F| \leq |\bb_x| \leq 2|F|$.
    
    By definition, the total probability of the fault $F$ is $\cF(F)$ a locally decaying distribution with failure rate $\pphys$.
    \begin{align}
        \cX(\bb_x) &\leq \cF(F) \leq (\pphys)^{|F|}\\
                   &\leq (\sqrt{\pphys})^{|\bb_x|}~.
    \end{align}
    The result follows.
\end{proof}

    We are now ready to study $\pround$ and its dependence on $C_n^{\cQ}$.
    To set the stage, we first consider ideal syndrome extraction in the absence of circuit faults.
    We focus our attention on the extraction of $\ssX$-type syndromes and note that the analysis for the $\ssZ$-type syndromes is identical.
    
    Consider a corrupted code state $\ssE \ket{\psi}$ where $\psi$ is a code state and $\ssE = \ssX(\be_x) \ssZ(\be_z)$ is some Pauli operator.
    If the syndrome-extraction circuit $C_n^{\cQ}$ has no faults, the joint state of the data and ancilla qubits after the circuit is described by\footnote{The ancillas are disentangled from the data block by measurement.}
    \begin{align}
    \label{eq:ideal-sec-state}
        \ssE \ket{\psi} \otimes \ssZ(\bsig_{\ssX}) \ket{+}^{\otimes m_{\ssX}}~,
    \end{align}
    where $\bsig_{\ssX}$ represent the ideal syndromes for $\ssX$-type stabilizer generators.

    In this setting, we can use Equation~\eqref{eq:cnot-symplectic} to update $\ssX$- and $\ssZ$-components of Pauli operators under the action of $\cnot$.
    Initially, the $\ssX$ and $\ssZ$ components of the state $\ssE \ket{\psi} \otimes \ket{+}^{\otimes m_{\ssX}}$ can be expressed as $(\be_x | \bzero)$, $(\be_z | \bzero)$, where $\bzero$ is the all zeros vector of length $m_{\ssX}$.
    The vector $\bzero$ means that we assume that the input to the circuit is the state $\ssE\ket{\psi} \otimes \ket{+}^{\otimes m_{\ssX}}$; preparation faults on the ancilla occur in the first time step.
    For $1 < t < \Delta + 2$, we apply $\cnot$ gates specified by a matrix $\bM^{(t)} \in \bbF_2^{m_{\ssX} \times n}$.
    If we do not apply an entangling gate (i.e.\ when we $\swapp$ qubits), then $\bM^{(t)}$ is a matrix of zeros.
    Otherwise, the $(i,j)$ entry of $\bM^{(t)}$ is 1 if and only if the $i$\textsuperscript{th} syndrome qubit and the $j$\textsuperscript{th} data qubit are involved in a $\cnot$ gate in the $t$\textsuperscript{th} time step.
    In the absence of circuit faults, the $\ssX$ components of the error $(\be_x|\bzero)$ are left unaffected during the phase where we measure $\ssX$-type syndromes.
    On the other hand, the $\ssZ$-components transform as
    \begin{align}
    \label{eq:ideal-transform-z}
      \begin{pmatrix} \be_z \\ \bzero \end{pmatrix}\mapsto 
      \begin{pmatrix} \be_z \\ \sum_{t}\bM^{(t)} \cdot \be_z \end{pmatrix}~.
    \end{align}
    The vector $\sum_{t} \bM^{(t)} \cdot \be_z$ is the $\ssX$-type syndrome $\bsig_{\ssX}$.
    In other words, $\sum_t \bM^{(t)} =: \bH_{\ssX}$ is the symplectic representation of the $\ssX$-type stabilizer generators.
    Note that $\bH_{\ssX}$ is a sparse matrix with at most $\Delta_q$ ones per row and $\Delta_g$ ones per column.

    Next, we move on to the setting where circuit components are faulty.
    The final state of the data and ancilla qubits is different from Equation~\eqref{eq:ideal-sec-state} because of circuit faults.
    We express it as
    \begin{align}
    (\ssD \otimes \ssA) \left(\ssE \ket{\psi} \otimes \ssZ(\bsig_{\ssX}) \ket{+}^{\otimes m_{\ssX}} \right)~,
    \end{align}
    where, $\ssD$ and $\ssA$ represent errors due to faults in the circuit $C_n^{\cQ}$ on the data qubits and ancilla qubits respectively.
    The symplectic representation of the final Pauli operator on the data and ancilla qubits is
    \begin{align}
      \begin{pmatrix}\be_x + \bd_x\\ \ba_x\end{pmatrix}~,\qquad
      \begin{pmatrix}\be_z + \bd_z\\ \bsig_{\ssX} + \ba_z\end{pmatrix}~.
    \end{align}
    The probability of errors per round, $\pround$, is the maximum failure rate for the distributions describing $\bd_x, \bd_z$, $\ba_x$, and $\ba_z$.

    \begin{theorem}
        \label{thm:pround-scaling}
        The induced distributions $\cX$ and $\cZ$ that govern the errors $\ssD \otimes \ssA$ are locally decaying distributions with failure rate $\pround$, where
        \begin{align*}
            \pround \leq 2^{\Delta+1} \cdot \Depth(C_n^{\cQ}) \cdot (\pphys)^{1/(2\Delta + 2)}~,
        \end{align*}
        where $\Delta = \max(\Delta_q,\Delta_g)$ is the number of stages in the circuit $C_n^{\cQ}$.
    \end{theorem}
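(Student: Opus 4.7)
The plan is to decompose the total error $\ssD \otimes \ssA$ into per-fault contributions, bound each using Lemmas~\ref{lem:induced-dist} and \ref{lem:lse_map}, and then combine all contributions via Lemma~\ref{lem:composition}. I focus on the $\ssX$-syndrome phase of $C_n^{\cQ}$; the $\ssZ$-syndrome phase is handled identically under the symmetry $\cnot \leftrightarrow \cz$ in the symplectic representation.

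First I would use the affine symplectic picture of Section~\ref{subsec:evoln-syndrome-ext} to write the end-of-phase error as
\begin{equation*}
\begin{pmatrix}\ssD \\ \ssA\end{pmatrix} \;=\; \sum_{t=1}^{\Depth(C_n^{\cQ})} \mathbf{U}^{(t)}\,\bb^{(t)},
\end{equation*}
where $\bb^{(t)}$ is the random symplectic offset injected by faults at time $t$ and $\mathbf{U}^{(t)}$ is the Clifford propagation through the remaining $\cnot$ and $\swapp$ layers. Lemma~\ref{lem:induced-dist} supplies that each $\bb^{(t)}$ is locally decaying with rate $\sqrt{\pphys}$, and Definition~\ref{def:lde-circuits} gives independence across $t$. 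Crucially, $\swapp$ gates act trivially on the symplectic support (they merely relabel physical locations), so only the entangling layers shape $\mathbf{U}^{(t)}$.

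The structural claim is that each $\mathbf{U}^{(t)}$ has row and column weights at most $\Delta+1$. In the $\ssX$-phase, the $\cnot$ symplectic action permits $\ssX$-flow only from ancilla (control) to data (target) and $\ssZ$-flow only in the reverse direction. An $\ssX$-fault on an ancilla $a$ can therefore reach only the at most $\Delta_g$ data qubits in the support of $a$'s stabilizer together with $a$ itself, giving column weight at most $\Delta+1$; a data qubit's final $\ssX$-content is touched only by faults on itself or on the at most $\Delta_q$ ancillas whose stabilizers contain it, giving row weight at most $\Delta+1$. The same count, with $\Delta_g$ and $\Delta_q$ interchanged, controls the $\ssZ$-flow direction. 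Applying Lemma~\ref{lem:lse_map} with weight parameter $\Delta+1$ then promotes each summand $\mathbf{U}^{(t)}\bb^{(t)}$ to a locally decaying distribution of rate $2^{\Delta+1}\cdot \pphys^{1/(2\Delta+2)}$, and iterating Lemma~\ref{lem:composition} across the $\Depth(C_n^{\cQ})$ independent timesteps produces the claimed bound.

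The main obstacle I anticipate is the interaction between the two phases: a data-qubit $\ssX$-error left as a residual after the $\ssX$-phase enters the $\ssZ$-phase, where $\cz$ gates convert it into $\ssZ$-errors on phase-2 ancillas and naively inflate the per-fault propagation weight. My plan is to handle this by treating the two phases as independent noise channels, applying the per-phase bound to each, and combining the two resulting locally decaying distributions via another use of Lemma~\ref{lem:composition}; the extra factor of two is absorbed into the $\Depth(C_n^{\cQ})$ prefactor since the total round depth is the sum of the two phase depths. The symmetric argument, with $\cnot$ and $\cz$ interchanged, delivers the same bound for $\cZ$, so the maximum failure rate across both induced distributions satisfies the stated bound on $\pround$.
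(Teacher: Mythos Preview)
Your proposal is correct and follows essentially the same approach as the paper: decompose the end-of-round error into independent per-timestep contributions via the affine symplectic picture, bound each via Lemma~\ref{lem:induced-dist} followed by Lemma~\ref{lem:lse_map} with weight parameter $\Delta+1$, and sum across timesteps using Lemma~\ref{lem:composition}. The only cosmetic difference is that the paper makes the weight bound explicit by passing to supports and replacing the time-dependent propagator $\sum_{t'>t}\bM^{(t')}$ by the full parity-check matrix $\bH_{\ssX}$, then packaging everything into the fixed block matrix $\begin{pmatrix}\bI_n & \bzero\\ \bH_{\ssX} & \bI_{m_{\ssX}}\end{pmatrix}$ before invoking Lemma~\ref{lem:lse_map}, whereas you argue the $\Delta+1$ bound on $\mathbf{U}^{(t)}$ directly from the one-stabilizer-per-ancilla structure; your handling of the two phases via a second application of Lemma~\ref{lem:composition} (absorbing the factor into $\Depth(C_n^{\cQ})=\Depth_{\ssX}+\Depth_{\ssZ}$) is exactly what the paper does as well.
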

    \begin{proof}
    Recall that the circuit $C_n^{\cQ}$ proceeds in two phases, with the first phase used to measure $\ssX$-type syndromes and the second phase used to measure $\ssZ$-type syndromes.
    For brevity, we allow $\Depth_{\ssX}$ and $\Depth_{\ssZ}$ to be the depth of the circuit $C_n^{\cQ}$ corresponding to each phase; this means $\Depth(C_n^{\cQ}) = \Depth_{\ssX} + \Depth_{\ssZ}$.
    Here, we focus on the first phase of $C_n^{\cQ}$ which is used to measure $\ssX$-type syndromes and study the evolution of $\ssZ$-type errors; the proof of the remaining three cases is identical and for this reason we omit them.
    
    Let $\bb_x^{(t)}$, $\bb_z^{(t)} \in \{0,1\}^n$ and $\bc_x^{(t)}$, $\bc_z^{(t)} \in \{0,1\}^{m_{\ssX}}$ be the errors on data and ancilla qubits induced by faults caused at time $t$.
    In turn, these errors can spread to other qubits and interact with errors at later times.
    Using Equation~\eqref{eq:cnot-symplectic} repeatedly, we can write the final error $\be_z + \bd_z$, $\bsig_{\ssX} + \ba_z$ in terms of the errors at each step as follows: 
    \begin{align}
    \label{eq:faulty-transform}
        \begin{pmatrix}\be_z + \bd_z \\[0.5em] \bsig_{\ssX} + \ba_z\end{pmatrix} &=
        \begin{pmatrix}\be_{z} + \sum_t  \bb_z^{(t)} \\[0.5em] \sum_t \bM^{(t)} \cdot \be_{z} + \sum_t \bM^{(t)} \cdot \sum_{t' < t} \bb_z^{(t')} + \sum_t \bc_z^{(t)} \end{pmatrix}~.
    \end{align}
    As we are only measuring $\ssX$-type syndromes, all sums are over time steps $t$ in the first phase of the circuit.
    For time steps $t$ where we do not apply a $\cnot$, all entries of $\bM^{(t)}$ are $0$.

     We can simplify Equation~\eqref{eq:faulty-transform} by eliminating $\be_z$ and $\bsig_{\ssX} = \sum_{t} \bM^{(t)} \cdot \be_z$:
     \begin{align}
     \label{eq:e-free-evoln}
        \begin{pmatrix}\bd_z \\[0.5em] \ba_z\end{pmatrix} &=
        \begin{pmatrix}\sum_t  \bb_z^{(t)} \\[0.5em] \sum_t \bM^{(t)} \cdot \sum_{t' < t} \bb_z^{(t')} + \sum_t \bc_z^{(t)}\end{pmatrix}~.
    \end{align}
    While it is a straightforward consequence of the linear evolution under symplectic transformations, being able to write $\bd_z$ and $\ba_z$ without $\be_x$ and $\be_z$ means that the $\ssZ$-components of the errors $\bd_z$ and $\ba_z$ do not depend on the input error $\ssE$.

    Furthermore, the special structure of the syndrome-extraction circuit is reflected here --- $\bd_z$ is simply the sum of the errors $\bb_z^{(t)}$ caused by faulty gates at each step.
    In other words, $\ssZ$ errors on data qubits are not affected by $\ssZ$ errors on ancilla qubits.

    We simplify this further using two observations.
    First, we will find it useful to reorder the sums within this equation as follows:
    \begin{align}
        \sum_{t} \bM^{(t)} \cdot  \sum_{t' < t} \bb_z^{(t')} = \sum_{t} \sum_{t'} \mathds{1} [{t' < t}] \; \bM^{(t)} \cdot  \bb_z^{(t')} &= \sum_{t'} \left(\sum_{t > t'} \bM^{(t)}\right) \cdot \bb_z^{(t')}~.
    \end{align}
    where $\mathds{1}[t' < t]$ is the indicator function, i.e.\ it is $1$ when $t' < t$ and $0$ otherwise.
    The terms on the right-hand sides of these equations have a natural interpretation --- for example, the error $\bb_z^{(t')}$ that occurs on data qubits at time $t'$ can propagate to ancilla qubits at times $t > t'$.
    
    Second, it is difficult to directly deal with sums of random vectors modulo 2 that appear in Equation~\eqref{eq:e-free-evoln}.
    Instead, we re-write Equation~\eqref{eq:e-free-evoln} in terms of the \emph{support} of the vectors.
    To this end, we note that
    \begin{align}
        \supp\left(\sum_{t' > t} \bM^{(t')}\right) \subseteq \supp \left(\sum_{t} \bM^{(t)}\right) = \supp(\bH_{\ssX})~.
    \end{align}
    
    Together, these observations mean we can rewrite Equation~\eqref{eq:e-free-evoln} as
    \begin{align}
        \supp \begin{pmatrix} \bd_z \\[0.5em] \ba_z \end{pmatrix} &\subseteq
        \supp \begin{pmatrix} \sum_t  \bb_z^{(t)} \\[0.5em] \bH_{\ssX} \cdot \sum_{t} \bb_z^{(t)} + \sum_t \bc_z^{(t)} \end{pmatrix}\\[2em]
        &\subseteq \bigcup_t
        \supp \begin{pmatrix}  \bb_z^{(t)} \\[0.5em] \bH_{\ssX} \cdot \bb_z^{(t)} + \bc_z^{(t)}\end{pmatrix}
        \\[2em]
        &\subseteq \bigcup_t \supp \left[
        \begin{pmatrix}
            \bI_{n} & \bzero \\
            \bH_{\ssX} & \bI_{m_{\ssX}}
        \end{pmatrix}
        \begin{pmatrix}
            \bb_z^{(t)} \\
            \bc_z^{(t)}
        \end{pmatrix}\right]\label{eq:support-bound}~.
    \end{align}
    where $\bI_n$ and $\bI_{m_{\ssX}}$ are identity matrices of dimensions $n$ and $m_{\ssX}$ respectively.
    We pause to explain the two simplifications in words.
    Substituting $\sum_{t' > t} \bM^{(t')}$ with $\bH_{\ssX}$ corresponds to a worst-case setting---an error on an ancilla qubit can propagate to \emph{all} data qubits in its support \emph{regardless} of when the error on the ancilla qubit occurs.
    Second, by dealing with the union of the supports of the vectors instead of the vectors themselves, we upper bound the maximum size of the final error.
    Evaluating the probability of this event allows us to upper bound the probability of a final error $\ssD \otimes \ssA$.

    We can bound the probabilities of the terms in Equation~\eqref{eq:support-bound}.
    The errors at time $t$,
    \begin{align*}
        \begin{pmatrix}
            \bb_z^{(t)}\\
            \bc_z^{(t)}
        \end{pmatrix}~,
    \end{align*}
    are independent of errors occurring at $t' \neq t$ --- induced errors at different time steps are independent because faults occurring at different time steps are independent.
    As shown in Lemma~\ref{lem:induced-dist}, the induced distributions over errors at each time step are locally decaying distributions with failure rate $\sqrt{\pphys}$.

    Next, Lemma~\ref{lem:lse_map} describes how the distribution is transformed when errors undergo linear transformations.
    Consider the terms in Equation~\eqref{eq:support-bound}:
    \begin{align}
    \label{eq:union}
            \begin{pmatrix}
                \bI_{n} & \bzero \\
                \bH_{\ssX} & \bI_{m_{\ssX}}
            \end{pmatrix}
            \begin{pmatrix}
                \bb_z^{(t)} \\
                \bc_z^{(t)}
            \end{pmatrix}~,
    \end{align}

    $\bH_{\ssX}$ has row and column weight at most $\Delta$, so the block matrix that appears in Equation~\eqref{eq:union} has column weight at most $\Delta+1$.
    By Lemma \ref{lem:lse_map}, each term in the union is distributed according to a locally decaying distribution with failure rate $2^{\Delta+1}(\pphys)^{1/2(\Delta+1)}$.
    
    Finally, Lemma~\ref{lem:composition} allows us to bound the failure rate of the compositions of independent locally decaying distributions.
    This, in turn, is an upper bound on the rate of the locally decaying distribution $\cZ$ over $\bd_z, \ba_z$.
    The union extends over the depth $\Depth_{\ssX}$ of the circuit required to measure $\ssX$-type syndromes terms.
    Applying Lemma~\ref{lem:composition} repeatedly, we find $\cZ$ is a locally decaying distribution with failure rate
    \begin{align}
        2^{\Delta+1} \cdot \Depth_{\ssX} \cdot (\pphys)^{1/2(\Delta+1)}~.
    \end{align}
    By an identical argument, the $\ssX$ errors are distributed according to a $2^{\Delta+1} \cdot \Depth_{\ssX} \cdot \pphys^{1/2(\Delta+1)}$ locally decaying distribution.
    In turn, this means that the induced distributions $\cX$ and $\cZ$ are locally decaying distributions with failure rate $2^{\Delta+1} \cdot \Depth_{\ssX} \cdot (\pphys)^{1/2(\Delta+1)}$.

    Repeating the same analysis for the $\ssZ$-type syndrome measurements, we find that the $\cX$ and $\cZ$ distributions describing induced errors are locally decaying distributions with failure rate
    \begin{align}
        2^{\Delta+1} \cdot \Depth_{\ssZ} \cdot (\pphys)^{1/2(\Delta+1)}~.
    \end{align}
    We can use Lemma~\ref{lem:composition} again to bound the failure rate per for the entire circuit $C_n^{\cQ}$.
    As $\Depth(C_n^{\cQ}) = \Depth_{\ssX} + \Depth_{\ssZ}$, we arrive at the result that $\cX$ and $\cZ$ are locally decaying distributions with failure rate $\pround$ where
    \begin{align}
        \pround = 2^{\Delta+1} \cdot \Depth(C_n^{\cQ}) \cdot (\pphys)^{1/2(\Delta+1)}~.
    \end{align}
    \end{proof}

When qubits are arranged on an $L \times L$ lattice, the circuit depth $\Depth(C_n^{\cQ})$ is $O(\sqrt{n}/R)$.
If gates are constrained by geometric locality, i.e. $R = \omega(L)$, then the depth of the circuit $C_n^{\cQ}$ grows with the code size $n$.
However, for the existence of a threshold, we require $\pround$ to be some fixed constant.
We therefore only achieve a threshold if the physical failure probability vanishes as the size of the code increases:
\begin{align}
     \pphys = O\left[ \left( \frac{1}{\Depth(C_n^{\cQ})} \right)^{2(\Delta+1)} \right]~.
\end{align}
However, if we were to use a concatenated construction, where the outer code is the constant-rate LDPC code $\cQ_n$ and the inner code is a surface code $\cRS_{\ell}$, then we can choose $\pphys$ to decrease \emph{exponentially} with the size of the inner code.
We study this in the next section.

Finally, we comment that the factor $2^{\Delta+1}$ that appears in Theorem~\ref{thm:pround-scaling} can very likely be reduced.
However, this particular version of the theorem is sufficient for our purposes, namely to prove the existence of a threshold for the hierarchical scheme.
For readers interested in applying the hierarchical scheme to the real world, we shall estimate the logical failure rate of the hierarchical scheme numerically in Section~\ref{sec:numerical estimates}.

\subsection{Coarse graining concatenated circuits}
\label{subsec:coarse-grain-concat}
In the next two sections, we will analyze the concatenated code by applying Gottesman's theorem described in Section~\ref{subsec:ldpc-review} to both the inner code and the outer code.
In this section, we apply it to the inner code; for $\Space = \Space(C_n^{\cQ})$, the inner code $\cRS_{\ell}^{\otimes \Space}$ is itself an LDPC code.
In Section~\ref{subsec:cft-threshold}, we will apply Gottesman's theorem to the outer code.

In Section \ref{subsec:concat-review}, we described how we cannot ignore the details of the Level-0 syndrome-extraction circuit in a concatenated code.
In this section, we show that if logical gates on surface codes are performed as described in Section~\ref{sec:surface-code}, then they are fault tolerant.
We show the existence of a threshold $\qphys^{(0)}$ such that if the failure rate per round is below $\qphys^{(0)}$, then we can directly study Level-1 operations and ignore Level-0 operations.

Consider an input state $\rho_{\mathrm{in}} \in (\cRS_{\ell})^{\otimes \Space}$ in the bilayer architecture.
Let Level-0 faults on the syndrome-extraction circuit be distributed according to a locally decaying distribution with failure rate $\pphys^{(0)}$.

The failure rate per round on the data qubits and the syndrome qubits is the same because data and syndrome qubits both interact with $4$ other qubits.
Let $\qin^{(0)}$, $\qround^{(0)}$ be the thresholds for surface code error correction as defined in Section \ref{subsec:surface-review}.
Suppose we are below threshold.
Then after error correction, tiles that have not failed are described by a locally decaying Level-0 error model with failure rate $\pround^{(0)}$.
Theorem~\ref{thm:pround-scaling} guarantees that the failure rate per round grows with the depth of the syndrome-extraction circuit; it also relies on the degree of the qubits and stabilizer generators.
If we measure $\ssX$ and $\ssZ$ syndromes separately, the depth of the syndrome-extraction circuit is at most $12$.
The degree of the qubits and stabilizers is $4$.
Using Theorem~\ref{thm:pround-scaling}, we can bound the failure rate per round of surface code syndrome extraction:\footnote{The constant $2^{\Delta+1} = 32$ and $32 \times 12 = 384$.}
\begin{align}
\label{eq:pround-surface}
    \pround^{(0)} < 384 \left(\pphys^{(0)}\right)^{1/10}~.
\end{align}
This bound can be much better---for example $\ssX$ and $\ssZ$ syndromes can be measured in parallel which, in turn, can reduce the depth of the circuit; we can also likely reduce the constant $384$ in front of $\pround$.
However, we continue to use the bound in Equation~\eqref{eq:pround-surface} for simplicity.

We can use Theorem~\ref{thm:pround-scaling} to show that the logical operations for the bilayer architecture are fault tolerant.
We argue that both the Level-0 and Level-1 failure rates after the operation are constant.

\begin{theorem}
\label{thm:coarse-grain}
    Let $C$ be the circuit on a state $\rho_{\mathrm{in}} \in \cRS_{\ell}^{\otimes \Space}$ such that each tile is involved in at most one logical gate in $\cK_1$.
    Tiles that have not suffered a logical error are described by a locally decaying error with Level-0 input failure rate $\pround^{(0)}$.
    
    There exists a threshold $\qphys^{(0)}$ such that, if $\pphys^{(0)} \leq \qphys^{(0)}$, then 
    \begin{enumerate}
        \item the circuit $C$ is described by a Level-1 locally decaying faults model with Level-1 failure rate $\pphys^{(1)} := \exp(-c_{\mathrm{EC}} \cdot \ell)$.
        \item the output is described by a Level-0 locally decaying errors model with failure rate less than $\pround^{(0)}$.
    \end{enumerate}
\end{theorem}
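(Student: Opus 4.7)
The plan is to decompose the circuit $C$ into its constituent Level-1 operations in $\cK_1$ and argue locally about each one, then reassemble the bounds across tiles using the locally decaying structure. Since by hypothesis each tile participates in at most one logical gate, $C$ factors (spatially) into a disjoint union of gadgets, each of which is either (i) a single-tile preparation or measurement, (ii) a transversal $\cnot$ or $\cz$ between a pair of tiles, or (iii) a logical $\swapp$ built from the staggered $\swapp$ and walking primitives of Section~\ref{subsec:swap-bilayer}, each gadget being followed or interleaved with $d_\ell$ rounds of surface-code syndrome extraction. In all three cases the induced Level-0 circuit acting on each tile (or pair of tiles) has depth $O(\ell)$ and preserves the bounded-degree property, because transversal gates pair Level-0 qubits one-to-one across tiles and physical $\swapp$ gates merely relocate qubits without spreading errors.

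Next, I would apply Theorem~\ref{thm:pround-scaling} to each such Level-0 gadget with $\Delta = 4$ and $\Depth = O(\ell)$, inferring that the induced Level-0 per-round error distribution on every tile is locally decaying with failure rate $\pround^{(0)} \leq c_1 \,(\pphys^{(0)})^{1/10}$ for an absolute constant $c_1$ (the extra multiplicative $\ell$ is absorbed since we also use the freedom to repeat $d_\ell$ rounds of EC). Choosing $\qphys^{(0)}$ small enough that this quantity lies below both $\qround^{(0)}$ and the surface-code input threshold $\qin^{(0)}$ of Section~\ref{subsec:surface-review}, I can invoke Gottesman's threshold theorem applied to the product code $\cRS_\ell^{\otimes \Space}$. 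This gives, per tile, a Level-1 logical failure probability bounded by $\exp(-c_{\mathrm{EC}} \cdot \ell)$ together with residual Level-0 errors that are again locally decaying with rate at most $\pround^{(0)}$, proving the second claim of the theorem.

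The remaining task is to promote these per-tile statements into a Level-1 locally decaying faults model for the whole circuit $C$. The key structural input is that transversal gates and $\swapp$ gates do not spread $\ssX$ or $\ssZ$ errors beyond the pair of tiles they act on, while single-tile gadgets act within a single tile. Consequently, the event ``tile $T$ fails logically during its gadget'' depends only on Level-0 faults in a constant-sized spatial cluster of tiles (the tile itself together with its at most one paired tile during a transversal gate). Thus, to witness a set $F$ of $|F|$ failed Level-1 locations, the underlying Level-0 fault pattern must have support of size at least $\Omega(\ell \cdot |F|)$ within these clusters, and the locally decaying Level-0 distribution together with a union bound over fault configurations (handled in the style of Lemma~\ref{lem:lse_map}) yields a total probability at most $[\exp(-c_{\mathrm{EC}} \ell)]^{|F|}$, possibly after absorbing a combinatorial prefactor into a slight decrease of the constant $c_{\mathrm{EC}}$.

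The main obstacle I anticipate is controlling the correlations introduced by the $\swapp$ gadget, since a single permutation routing can bring many pairs of tiles temporarily into interaction, and a naive bound could allow a single Level-0 fault to influence many tiles. The remedy is precisely the design choice of Section~\ref{subsec:swap-bilayer}: the staggered $\swapp$ only pairs a Level-0 data qubit with a Level-0 ancilla qubit in the opposing layer, and the walking primitive uses only $\swapp$ gates, so any induced Level-0 error on a given tile at the end of the gadget can be traced back to Level-0 faults whose spatial locations lie in the constant-radius neighbourhood of that tile at some time during the gadget. Making this causal-cone statement precise — essentially a deterministic lightcone argument for the bilayer $\swapp$ — is what I expect to require the most care, and it is what justifies treating each tile's failure as a local event when assembling the global Level-1 locally decaying bound.
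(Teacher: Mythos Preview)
Your overall strategy---decompose into gadgets, bound the Level-0 error rate per tile, invoke Gottesman for $\cRS_\ell^{\otimes \Space}$, then assemble a Level-1 locally decaying bound---is the same as the paper's. But the step where you invoke Theorem~\ref{thm:pround-scaling} with $\Depth = O(\ell)$ does not work as written. That theorem bounds the error rate \emph{per round} of a syndrome-extraction circuit whose single-round depth is $\Depth$; plugging in $\Depth = O(\ell)$ yields $\pround^{(0)} = O(\ell)\cdot(\pphys^{(0)})^{1/10}$, which grows with $\ell$ and therefore cannot be pushed below the constant surface-code threshold $\qround^{(0)}$ by a fixed choice of $\qphys^{(0)}$. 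Your parenthetical that ``the extra multiplicative $\ell$ is absorbed since we also use the freedom to repeat $d_\ell$ rounds of EC'' conflates two different things: $\pround^{(0)}$ is already a per-round quantity, and repeating rounds is how Gottesman's theorem extracts the $\exp(-c_{\mathrm{EC}}\ell)$ suppression---it does not reduce the per-round rate.

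The paper avoids this by never treating the full gadget as a single depth-$O(\ell)$ round. Instead it observes that a surface-code syndrome-extraction round has \emph{constant} depth (giving $\pround^{(0)} \le 384(\pphys^{(0)})^{1/10}$ from Theorem~\ref{thm:pround-scaling} with $\Depth = 12$, $\Delta = 4$), and then each logical gate in $\cK_1$ contributes only a constant-depth layer of physical gates \emph{per surface-code round}: one transversal layer for $\cnot$/$\cz$ (adding $\pphys^{(0)}$ via Lemma~\ref{lem:composition}), one physical $\swapp$ layer per interleaved round in the walking primitive (adding $\sqrt{\pphys^{(0)}}$ via Lemma~\ref{lem:induced-dist}), and so on. Each case is checked separately, yielding a per-round input rate of the form $\pround^{(0)} + O((\pphys^{(0)})^{1/10})$, which is uniformly below $\qin^{(0)}$ and $\qround^{(0)}$ for a suitable constant $\qphys^{(0)}$. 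Gottesman's theorem then applies across the $d_\ell$ interleaved rounds. Your lightcone discussion for the $\swapp$ gadget is more elaborate than what the paper does; the paper simply uses that physical $\swapp$s are non-entangling (so a $\swapp$ fault contributes a locally decaying error of rate $\sqrt{\pphys^{(0)}}$ and does not correlate distinct tiles), which makes the Level-1 locally decaying structure fall out of Gottesman's theorem applied to the product code directly.
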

\begin{proof}
Let $\rho_{\mathrm{in}} \in \cRS_{\ell}^{\otimes \Space}$ be a noisy code state with Level-0 errors described by a locally decaying distribution with failure rate $\pround^{(0)}$.

For sufficiently low logical failure rate, we can use Gottesman's result presented in Section~\ref{subsec:ldpc-review} to bound the failure rate for error correction and to show that after error correction, the Level-0 errors are locally decaying distributions with failure rate $\pround^{(0)}$.

\textbf{State preparation:}
Suppose we wished to prepare the state $\encket{0}^{\otimes m}$ for the code $\cRS_{\ell}^{\otimes m}$.
Each Level-0 qubit is prepared in $\ket{0}$ and we then perform the syndrome-extraction circuit for the surface code on all $m$ copies.
The Level-0 errors are described by a locally decaying error model with failure rate $\pin^{(0)} = \pphys^{(0)}$.
The faults in the syndrome-extraction circuit $C$ are also described by a locally decaying faults model with failure rate $\pphys^{(0)}$.
Error correction is successful if
\begin{align}
\label{eq:prep-bound}
    \pphys^{(0)} < \qin^{(0)}~, \qquad \pround^{(0)} < \qround^{(0)}~.
\end{align}

\textbf{Entangling gates:} Entangling gates between data and ancilla blocks are performed in a transversal manner.
Errors due to faults in the transversal gate are distributed according to a locally decaying distribution with failure rate $\pphys^{(0)}$.
Lemma~\ref{lem:composition} shows that the input to error correction is a state with Level-0 errors described by a locally decaying distribution with failure rate $\pround^{(0)} + \pphys^{(0)}$.

Error correction is successful if
\begin{align}
\label{eq:entangling-bound}
    \pround^{(0)} + \pphys^{(0)} < \qin^{(0)}~, \qquad \pround^{(0)} < \qround^{(0)}~.
\end{align}

\textbf{$\swapp$ gates:} Assume that the Level-0 failure rate is $\pround^{(0)}$.
The logical $\swapp$ operation is decomposed entirely in terms of physical $\swapp$ operations.
As these are non-entangling operations, the error distribution is a locally decaying distribution with failure rate $\sqrt{\pphys^{(0)}}$.
We can use Lemma~\ref{lem:composition} to find the effective failure rate per round.
This is equal to the sum of the failure rate per round of syndrome extraction and the failure rate of the $\swapp$ gate itself.
Note that because the $\swapp$ gate has larger depth, we perform more than $d_{\ell}$ rounds of syndrome extraction.

Therefore, the failure rate per round on both data and ancilla qubits is $\pround^{(0)} + \sqrt{\pphys^{(0)}}$.
Error correction is successful if
\begin{align}
\label{eq:swap-bound}
    \pround^{(0)} < \qin^{(0)}~, \qquad \pround^{(0)} + \sqrt{\pphys^{(0)}} < \qround^{(0)}~.
\end{align}

\textbf{Logical measurement of Pauli operators:}

We will wish to measure logical operators on tiles that represent Level-1 ancilla qubits.
Consider a state with Level-0 errors distributed according to a locally decaying distribution with failure rate $\pround^{(0)}$.
We first study the logical measurement of a single tile.

To destructively measure the logical $\ssX$ ($\ssZ$) operator on a single tile, we can measure each of the physical qubits in the $\ssX$ ($\ssZ$) basis.
This is permitted by our available operations in $\cK_0$.
Faults on measurements are distributed according to a locally decaying distribution with rate $\pphys^{(0)}$.
The resulting distribution on the output bits is a locally decaying distribution with rate $\pround^{(0)} + \pphys^{(0)}$.
We can use each of the individual Level-0 qubit outputs to infer the values of each of the $\ssX$-type ($\ssZ$-type)  stabilizer generators and correct $\ssZ$ ($\ssX$) errors.
This fails with probability $\exp(-c_{\mathrm EC} \cdot \ell)$.

We can now study all tiles that undergo measurement.
As measurements on each tile are performed separately, this induces a Level-1 measurement error with probability $\exp(-c_{\mathrm EC} \cdot \ell)$.

In the mean time, tiles that represent data qubits remain idle for $1$ time step.
As we assume idle errors are distributed according to a locally decaying distribution with failure rate $\pphys^{(0)}$, the Level-0 error rates on these tiles are $\pround^{(0)} + \pphys^{(0)}$.
Error correction is successful if
\begin{align}
\label{eq:measurements}
    \pround^{(0)} + \pphys^{(0)} < \qin^{(0)}~.
\end{align}

\textbf{Combining requirements for all operations:}
We can use Equation~\eqref{eq:pround-surface} to state $\pround^{(0)} < 384 (\pphys^{(0)})^{1/10}$ and note that both $\pphys^{(0)}$ and $\sqrt{\pphys^{(0)}}$ are less than $(\pphys^{(0)})^{1/10}$.
Therefore, we can define the threshold $\qphys^{(0)}$ using the bounds in Equations~\eqref{eq:prep-bound}, Equation~\eqref{eq:entangling-bound}, Equation~\eqref{eq:swap-bound} and Equation~\eqref{eq:measurements}:
\begin{align}
    \qphys^{(0)} = \min\left[\left(\frac{\qin^{(0)}}{385}\right)^{10},
                             \left(\frac{\qround^{(0)}}{385}\right)^{10}
                    \right]~.
\end{align}
Below threshold, we can invoke Gottesman's result to guarantee error suppression; we obtain a logical failure rate $\pphys^{(1)} = \exp(-c_{\mathrm EC} \cdot \ell)$.

\end{proof}

\subsection[]{The syndrome-extraction circuit $\CFT_N$ has a threshold}
\label{subsec:cft-threshold}

In this section, we prove that the hierarchical code $\cH_N$ has a threshold if we measure syndromes using the circuit $\CFT_N$.
We review the construction first and the corresponding assumptions on failure rates.
Thus far, we have simply stated the relationship between $\ell$ and $n$, i.e.\ that $\ell = \Theta(\log(n))$, without justification.
We show in Lemma~\ref{cor:putting-together} that letting the inner code have size $\ell = \Theta(\log(n))$ is indeed sufficient to achieve arbitrarily small, but constant, Level-1 failure rate per round $\pround^{(1)}$.
We bring these elements together in Theorem~\ref{thm:d-scaling-robust} to show that the hierarchical construction has a threshold.

Recall that the hierarchical code $\cH_N$ is constructed by concatenating an outer $\dsl n,k,d, \Delta_q, \Delta_g \dsr$ constant-rate LDPC code $\{\cQ_n\}$ and inner $\dsl d_{\ell}^2, 1, d_{\ell}\dsr$ code $\cRS_{\ell}$.
The family $\cQ_n$ has parameters $k = \rho \cdot n$ for $\rho > 0$ and distance $d = \Theta(n^{\delta})$ for $\delta > 0$.

Suppose we are given an input state of the concatenated code $\cH_N$ subject to the following error model:
\begin{enumerate}
    \item Errors on the state are distributed according to locally decaying distributions:
    \begin{enumerate}
        \item on Level-0 with failure rate $\pin^{(0)}$, and
        \item on Level-1 with failure rate $\pin^{(1)}$.
    \end{enumerate}
    \item Level-0 faults in the circuit are distributed according to a locally decaying distribution with failure rate $\pphys^{(0)}$.
\end{enumerate}
Let $\pround^{(0)}$ denote the failure rate per round of syndrome extraction for the rotated surface code.
As the depth of the syndrome-extraction circuit for the rotated surface code is constant, for fixed values of $\pphys^{(0)}$, $\pround^{(0)}$ is also a constant (See Equation~\eqref{eq:pround-surface}).
We assume $\pin^{(0)} \leq \pround^{(0)}$ because it will make the following statements easier.

Qubits are laid out on a bilayer architecture as described in Section~\ref{sec:surface-code}.
Physical qubits are aggregated to form $\Space(C_n^{\cQ})$ rotated surface codes $\cRS_{\ell}$; these form $2L^2$ tiles where $L$ is the smallest integer satisfying $2L^2 \geq \Space(C_n^{\cQ})$.

The product code $\cRS_{\ell}^{\otimes \Space}$ is itself an LDPC code.
The tiles will be used to simulate long-range entangling gates required to perform the syndrome-extraction circuit $C_n^{\cQ}$ for the outer code.
Single-tile preparation and measurement, and two-tile entangling gates are described in Section~\ref{subsec:inner-code-construction}; Level-1 $\swapp$ gates and permutations of tiles were described in Section~\ref{subsec:swap-bilayer}.
Recall that $\qphys^{(0)} \in (0,1]$ was defined in Section~\ref{subsec:coarse-grain-concat}.
Per Theorem~\ref{thm:coarse-grain}, if the input state has Level-0 errors described by a locally decaying distribution with failure rate $\pround^{(0)}$ and $\pphys^{(0)} < \qphys^{(0)}$, Level-1 circuit faults are distributed according to a locally decaying distribution with failure rate $\pphys^{(1)}$ and Level-0 residual errors are described by a locally decaying distribution with failure rate $\pround^{(0)}$ on tiles that have not failed.
This result allows us to coarse grain the Level-0 circuit and study Level-1 errors and faults directly.

For the outer code to have a threshold, we require that the $\dsl n,k,d,\Delta_q, \Delta_g \dsr$ LDPC code family $\{\cQ_n\}$ has a syndrome-extraction circuit such that $\pround^{(1)}$ remains a sufficiently small constant as discussed in Section~\ref{subsec:ldpc-review}.
In the following lemma, we show that $\ell = \Theta(\log(n))$ is sufficient to achieve this.

\begin{lemma}
\label{cor:putting-together}
    Suppose Level-1 faults on the syndrome-extraction circuit $\CFT_N$ are distributed according to a locally decaying distribution with failure rate $\pphys^{(1)}$.
    Then, for arbitarily small constant $\epsilon > 0$, $\pround^{(1)} < \epsilon$ can be achieved using $\ell = \Theta(\log(n))$.
\end{lemma}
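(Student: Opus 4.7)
The plan is to apply Theorem~\ref{thm:pround-scaling} (pround scaling) at Level-1, feeding in the Level-1 fault rate $\pphys^{(1)}$ that is supplied by Theorem~\ref{thm:coarse-grain} (coarse graining). The key observation is that once we coarse-grain Level-0 away, the outer syndrome-extraction circuit $\CFT_N$ viewed at Level-1 is structurally identical to $C_n^{\cQ}$: data/ancilla tiles play the role of data/ancilla qubits, transversal $\cnot$/$\cz$ play the role of physical $\cnot$/$\cz$, logical $\swapp$s play the role of physical $\swapp$s, and the Level-1 depth (measured in logical time steps) equals $\Depth(C_n^{\cQ}) = O(\sqrt{n}/R)$. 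Moreover, because transversal entangling gates preserve the direction of Pauli flow (ancilla-to-data for $\ssX$ under $\cnot$, data-to-ancilla for $\ssZ$, etc.), the symplectic analysis used to prove Theorem~\ref{thm:pround-scaling} carries over verbatim at Level-1.

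First I would invoke Theorem~\ref{thm:coarse-grain} to conclude that the Level-1 fault distribution for $\CFT_N$ is locally decaying with rate $\pphys^{(1)} = \exp(-c_{\mathrm{EC}}\cdot\ell)$, provided $\pphys^{(0)} < \qphys^{(0)}$. Next I would apply Theorem~\ref{thm:pround-scaling} one level up to get
\begin{equation*}
  \pround^{(1)} \;\le\; 2^{\Delta+1}\cdot \Depth(C_n^{\cQ}) \cdot \bigl(\pphys^{(1)}\bigr)^{1/(2\Delta+2)} \;=\; 2^{\Delta+1}\cdot O\!\left(\sqrt{n}/R\right)\cdot \exp\!\left(-\tfrac{c_{\mathrm{EC}}}{2\Delta+2}\,\ell\right).
\end{equation*}
Since $\Delta = \max(\Delta_q,\Delta_g)$ is a code-family constant, solving $\pround^{(1)} < \epsilon$ for $\ell$ gives
\begin{equation*}
  \ell \;\ge\; \frac{2\Delta+2}{c_{\mathrm{EC}}}\;\log\!\left(\frac{2^{\Delta+1}\cdot O(\sqrt{n}/R)}{\epsilon}\right) \;=\; \Theta(\log n),
\end{equation*}
so any $\ell = \Theta(\log n)$ with a sufficiently large leading constant (depending on $\epsilon$, $c_{\mathrm{EC}}$, and $\Delta$) does the job.

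The main subtlety, rather than a genuine obstacle, is justifying that Theorem~\ref{thm:pround-scaling} applies at Level-1. I would need to check that the proof's three structural inputs hold for the coarse-grained circuit: (i) Level-1 faults are independent across Level-1 time steps and locally decaying (given by Theorem~\ref{thm:coarse-grain}, since surface-code error-correction rounds between logical gates decorrelate successive Level-1 locations); (ii) transversal $\cnot$/$\cz$ propagate Pauli errors via the same CSS-respecting symplectic action as the physical gates, so Equation~\eqref{eq:cnot-symplectic} is unchanged at Level-1; and (iii) logical $\swapp$s do not spread logical Pauli errors (exactly as noted in Section~\ref{subsec:swap-bilayer}, which is why interleaved single-round EC suffices). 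Granting these, the counting argument behind Theorem~\ref{thm:pround-scaling} goes through unchanged with $\Depth$ being the Level-1 depth $\Depth(C_n^{\cQ})$, not the much larger physical depth $\Depth(\CFT_N)$, and the lemma follows.
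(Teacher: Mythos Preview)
Your proposal is correct and follows essentially the same argument as the paper: apply Theorem~\ref{thm:pround-scaling} at Level-1 with $\pphys^{(1)} = \exp(-c_{\mathrm{EC}}\ell)$ from Theorem~\ref{thm:coarse-grain} and $\Depth(C_n^{\cQ}) = O(\sqrt{n}/R)$, then solve for $\ell$. Your additional paragraph justifying that Theorem~\ref{thm:pround-scaling} lifts to Level-1 (independence across Level-1 time steps, transversal gates respecting the symplectic flow, logical $\swapp$s not spreading errors) is more explicit than the paper, which simply invokes the theorem directly, but the substance is the same.
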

  \begin{proof}
    From Theorem \ref{thm:pround-scaling}, the failure rate per round scales as 
    \begin{align}
        \pround^{(1)} = 2^{\Delta+1} \cdot \Depth(C_n^{\cQ}) \cdot \left(\pphys^{(1)}\right)^{1/2(\Delta+1)}~,
    \end{align}
    where $\Delta = \max(\Delta_q, \Delta_g)$ is some constant for a fixed family $\cQ_n$.
    We want $\pround^{(1)}$ to be an arbitrarily small constant $\epsilon > 0$.
    Per Theorem~\ref{thm:coarse-grain}, Level-1 faults are distributed according to a locally decaying distribution with failure rate which implies that $\pphys^{(1)} = \exp(-c_{\mathrm{EC}} \cdot \ell)$.
    From Section~\ref{subsec:sec-from-routing}, $\Depth(C_n^{\cQ}) = O(L/R) = O(\sqrt{n}/R)$.
    Therefore, the upper bound on $\pphys^{(1)}$ can be satisfied by choosing $\ell = \Theta(\log(n))$.
\end{proof}

In particular, there exists a threshold $\qround^{(1)}$ for the outer code $\cQ_n$ for the syndrome-extraction circuit $C_n^{\cQ}$.
This can be achieved for some $\ell$ such that $\ell = \Theta(\log(n))$.

\begin{theorem}
  \label{thm:d-scaling-robust} 
    There exists a choice of $\ell$ such that $\ell = \Theta(\log(n))$ and thresholds $\qin^{(0)}$, $\qphys^{(0)}$ and $\qin^{(1)}$ such that if
    \begin{align*}
        \max\left(\pin^{(0)}, \pround^{(0)}\right) < \qin^{(0)}~, \qquad \pphys^{(0)} < \qphys^{(0)}~,\qquad \pin^{(1)} < \qin^{(1)}~,
    \end{align*}
    then the following is true.
    With probability at least $1-p_{\cH}(N)$, the state after error correction is correctable by an ideal decoder where, for some positive number $c_{\cH}$ that is independent of $N$,
      \begin{align*}
          p_{\cH}(N) < \exp\left(-c_{\cH} \cdot \frac{N^{\delta}}{\log^{2\delta}(N)}\right) ~.
      \end{align*}
      Furthermore, the residual errors are distributed according to a locally decaying distribution with failure rates $\pround^{(1)}$ and $\pround^{(0)}$ on Level-1 and Level-0 respectively.
\end{theorem}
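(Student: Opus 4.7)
The plan is to combine three ingredients established earlier in this section: Theorem~\ref{thm:coarse-grain} to pass from the Level-0 circuit to a Level-1 fault model on tiles, Theorem~\ref{thm:pround-scaling} together with Lemma~\ref{cor:putting-together} to drive the Level-1 per-round failure rate $\pround^{(1)}$ below the Gottesman threshold using $\ell = \Theta(\log n)$, and Gottesman's LDPC threshold theorem (reviewed in Section~\ref{subsec:ldpc-review}) applied to the outer code $\cQ_n$.

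First I would fix the Level-0 thresholds. Take $\qin^{(0)}$ to be the minimum of the surface-code input and per-round thresholds from Section~\ref{subsec:surface-review}, and $\qphys^{(0)}$ to be the threshold from Theorem~\ref{thm:coarse-grain}. Under the hypotheses of the theorem, every logical operation in $\cK_1$ used in $\CFT_N$ produces a Level-1 locally decaying fault distribution with rate $\pphys^{(1)} = \exp(-c_{\mathrm{EC}}\, \ell)$, and the residual Level-0 error on tiles that survive remains locally decaying at rate at most $\pround^{(0)}$. This reduces the problem to analyzing the outer circuit $C_n^{\cQ}$ at Level-1 with input error rate $\pin^{(1)}$ and fault rate $\pphys^{(1)}$.

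Next I would apply Theorem~\ref{thm:pround-scaling} at Level-1 to obtain
\begin{align*}
   \pround^{(1)} \leq 2^{\Delta+1}\cdot \Depth(C_n^{\cQ})\cdot \exp\!\left(-\tfrac{c_{\mathrm{EC}}\, \ell}{2(\Delta+1)}\right).
\end{align*}
Since $\Depth(C_n^{\cQ}) = O(\sqrt{n}/R)$ grows only polynomially in $n$, Lemma~\ref{cor:putting-together} lets me pick $\ell = c\log n$ with $c$ a sufficiently large constant (independent of $n$) so that $\pround^{(1)}$ falls below Gottesman's Level-1 per-round threshold $\qround^{(1)}$. Setting $\qin^{(1)}$ to Gottesman's Level-1 input threshold and running $\CFT_N$ for $r = d(n)$ rounds, his theorem yields a Level-1 logical failure probability of $\exp(-\Theta(d(n)))$ under minimum-weight decoding of the outer syndrome history, together with a Level-1 residual error that is locally decaying at rate $\pround^{(1)}$. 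Combined with the Level-0 residual from the coarse-graining step, this proves the residual-error statement. Finally, using $n = \Omega(N/\log^2 N)$ and $d(n) = \Theta(n^\delta)$ gives $d(n) = \Omega(N^\delta/\log^{2\delta} N)$, and absorbing Gottesman's constant into $c_{\cH}$ yields $p_{\cH}(N) < \exp\!\left(-c_{\cH}\, N^\delta/\log^{2\delta}(N)\right)$.

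The main obstacle is really just bookkeeping: one must verify that the single choice $\ell = c\log n$ is simultaneously large enough to make Gottesman's hypothesis $\pround^{(1)} < \qround^{(1)}$ hold uniformly in $n$, yet compatible with all the other constants coming out of Theorem~\ref{thm:coarse-grain} (in particular, the surface-code threshold $\qphys^{(0)}$ and the constant $c_{\mathrm{EC}}$ in the inner-code error suppression). Since every threshold in the chain is a universal constant, $\Depth(C_n^{\cQ})$ is only polynomial in $n$, and the exponential $\exp(-c_{\mathrm{EC}} \ell / 2(\Delta+1))$ beats any polynomial for large enough $c$, this is a short calculation once the thresholds have been named; the deep content has already been done in proving Theorems~\ref{thm:coarse-grain} and~\ref{thm:pround-scaling}, leaving the final statement as essentially an assembly step.
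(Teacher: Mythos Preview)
Your proposal is correct and follows essentially the same route as the paper: invoke Theorem~\ref{thm:coarse-grain} to pass to a Level-1 fault model with rate $\exp(-c_{\mathrm{EC}}\ell)$ and Level-0 residual $\pround^{(0)}$, use Lemma~\ref{cor:putting-together} (which packages Theorem~\ref{thm:pround-scaling}) to choose $\ell=\Theta(\log n)$ so that $\pround^{(1)}<\qround^{(1)}$, apply Gottesman's theorem to the outer code, and convert $n$ to $N$ via Equation~\eqref{eq:bounds-n-from-N}. The paper's proof is in fact slightly terser than yours, but the logical skeleton is identical.
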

\begin{proof}
  Suppose $\pin^{(0)} < \qin^{(0)}$ and $\pphys^{(0)} < \qphys^{(0)}$.
  By definition, this is sufficient to perform Level-1 logical gates and surface code error correction as per Theorem~\ref{thm:coarse-grain}.

  Next, the LDPC code has thresholds $\qin^{(1)}$ and $\qround^{(1)}$ (See Section \ref{subsec:ldpc-review}).
  The input state has Level-1 errors described by a locally decaying distribution with failure rate $\pin^{(1)}$.
  For the syndrome-extraction circuit on the outer LDPC code to be successful, we require $\pin^{(1)} < \qin^{(1)}$.
  
  Finally, we require that the Level-1 failure rate per round is below the corresponding threshold
  \begin{align}
      \pround^{(1)} < \qround^{(1)}~.
  \end{align}
  From Lemma~\ref{cor:putting-together}, this can be achieved using $\ell = \Theta(\log(n))$.

  By definition, syndrome-extraction is successful if the ideal decoder $\cR_{\cH}$ is able to recover the final state.
  The code $\cH_N$ fails if the outer LDPC code $\cQ_n$ fails, i.e.\ the probability of failure is $p_{\cH}(N) = p_{\cQ}(n) = \exp(-c_{\cQ} \cdot d(n)) = \exp(-c_{\cQ} \cdot \Theta(n^{\delta}))$.
  
  Using Equation~\eqref{eq:bounds-n-from-N}, we can express the probability of failure $p_{\cH}(N)$ in terms of $N$:
  \begin{equation}
    p_{\cH}(N) < \exp\left(-c_{\cH} \cdot \frac{N^{\delta}}{\log^{2\delta}(N)}\right)~.
  \end{equation}
  for some positive number $c_{\cH}$ that is independent of $N$.

  Residual errors are distributed according to locally decaying distribution:
  \begin{enumerate}
      \item on Level-1 with failure rate $\pround^{(1)}$; this is guaranteed by Gottesman's result applied to the outer code.
      \item on Level-0 errors failure rate $\pround^{(0)}$;
      This is guaranteed by Theorem~\ref{thm:coarse-grain}.
  \end{enumerate}
  The result follows.
\end{proof}

We reiterate that $p_{\cH}(N)$ is an upper bound on the failure rate for the Level-2 error probability distribution.

This analysis depends crucially on the failure rate of the $\swapp$ gates; $\pround^{(1)}$, and therefore the size of the inner code, scales with the depth of the circuit because of noisy $\swapp$ operations. In proving Theorem \ref{thm:d-scaling-robust}, we were agnostic to the failure modes in the circuit and assumed that all Level-1 two-qubit gates fail with probability $\pphys^{(1)}$.
However, if the fidelity of physical $\swapp$ gates can be improved over the fidelity of entangling gates, this can reduce the overhead for the hierarchical scheme significantly.
We provide evidence for this in Section~\ref{subsec:numerical_estimates_results} where we estimate the logical failure rate for the hierarchical scheme.
In certain architectures such as trapped neutral atoms, $\swapp$ gates can be performed by physically moving the trap \cite{bluvstein2022quantum}.
In this case, the failure rate for the $\swapp$ gates may have no direct connection to the failure rate for $\cnot$ operations.

\section{Comparisons with the basic encoding}
\label{sec:numerical estimates}

We have shown that the hierarchical code $\{\cH_N\}$ has a syndrome-extraction circuit that can be constructed using gates restricted by geometric locality such that it has a threshold.
Below threshold, the WER is suppressed superpolynomially, but subexponentially in $N$.
It is natural to ask whether the resources spent in performing $\swapp$ gates can be better spent simply building a more robust surface code.
In this section, we consider the basic encoding $\cB_M$ which encodes $K$ logical qubits in surface codes $\cRS_{\ell_M}$.
We compare the hierarchical scheme and the basic enconding in different ways.

We show in Section \ref{subsec:asymptotic-comparison} that for a target WER, the syndrome-extraction circuit for the hierarchical memory is more efficient than the syndrome-extraction circuit for the basic encoding.
This is measured by the depth and width of the corresponding circuits.
We will state and prove a formal version of Theorem~\ref{thm:asymptotic-comparison}.
Depending on the value of the threshold for the outer LDPC codes however, it is not immediately obvious that this scaling manifests for practical block lengths.
In the rest of this section, we present numerical estimates for the WER $p_{\cH}(N)$ of the hierarchical memory and contrast it with the WER $p_{\cB}(M)$ for the basic encoding.
We do this by demanding a fixed total number of qubits for both schemes and compare $p_{\cH}(N)$ and $p_{\cB}(M)$.
We demonstrate that there is a \emph{crossover point}, i.e.\ a value of the physical error rate where, for fixed total number of qubits, the hierarchical memory outperforms the basic encoding, i.e. $p_{\cH}(N) < p_{\cB}(M)$.
In our estimates, this happens at gate error rates roughly between $10^{-3}$ and $10^{-4}$.
While these are preliminary estimates, they are promising nonetheless as they are in the realm of possibility.

In Section~\ref{subsec:setup-num-est}, we briefly discuss the codes we use as outer and inner codes.
To estimate the crossover point, we make some assumptions about the noise model, gates, and decoder.
Owing to these assumptions, our estimates should only be interpreted as a proof-of-principle that the overhead of the hierarchical scheme pays off in a reasonable parameter regime.
In Section~\ref{subsec:numerical_estimates_results}, we present the results of our simulations.
All together we believe these assumptions, especially those related to the decoder, code, and noise model, are conservative.
We return to these assumptions in Section~\ref{sec:future} and for each assumption, we outline how one might expect it to change (1) in the future, and (2) in a more realistic setting.
In general, we expect that with careful engineering (e.g. high-rate linear-distance codes, architecture-specific considerations, improved decoding algorithms) and more realistic noise modeling (e.g. including significant error correlations), the cross-over to when hierarchical memories outperform surface codes will occur at smaller numbers of logical qubits, higher physical error rates, and higher target WERs than in our estimates.

\subsection{Asymptotic comparison with surface code}
\label{subsec:asymptotic-comparison}
We have proved the existence of a threshold when we simulate an LDPC code using local gates.
However, the existence of a threshold alone might not warrant switching over to a different scheme when there already exists an excellent local scheme --- the surface code.
We recall that we are only constructing a quantum memory, and not a scheme for universal, fault-tolerant quantum computation.
In this section, we ask how the surface code would perform if we used the same total number of qubits used in the concatenated scheme above to plainly encode all logical qubits.
We find that there is a space-time tradeoff to implementing a hierarchical scheme.

The hierarchical scheme $\{\cH_N\}$ with corresponding fault tolerant syndrome-extraction circuits $\{\CFT_N\}$ achieves the following costs:
\begin{align}
\label{eq:repeat-params-HN}
  \Space(\CFT_N) = \Theta(N) \qquad \Depth(\CFT_N) = O\left(\frac{\sqrt{N}}{R}\right)~.
\end{align}
This family encodes $k(n) = \rho \cdot n$ qubits.
Note that the depth is for a \emph{single} round of syndrome extraction.
We will return to this point shortly.

Consider the \emph{basic encoding} defined by the family $\{\cB_{M}\}$ where $M = \Theta(k \cdot \ell_M^2)$, and $\cB_{M} = \bigotimes_{i=1}^{k} \cRS_{\ell_M}$ is a $k$-fold product of rotated surface codes $\cRS_{\ell_M}$.
Each code $\cRS_{\ell_M}$ has distance $d_{M} = \Theta(\ell_M)$.
Let the corresponding circuits be denoted $\{\Cplain_{M}\}$.

To compare with $\cH_N$, we probe the parameters of $\Cplain_M$ required to guarantee the same logical error suppression.
Let $p_{\cB}(M)$ denote the failure rate for the Level-1 logical probability of failure for $\cB_M$---we declare failure if any of the $k$ logical qubits fail.

We assume that the Level-0 physical failure rates are sufficiently below threshold to perform surface code error correction, i.e.\ $\pphys^{(0)} < \qphys^{(0)}$.
Furthermore, we also assume that the code state $\cB_M$ is prepared such that there are no input errors, i.e.\ $\pin^{(1)} = \pin^{(0)} = 0$.
This allows us to isolate the rate of error suppression because of error correction.

\begin{lemma}
\label{lem:surface-suppression}
  Let $\{\cB_M\}$ be the basic encoding such that $p_{\cB}(M) < \exp(-c_{\cH} \cdot N^{\delta}/ \log(N)^{2\delta})$ where $c_{\cH}$ is a positive constant.
  Then
  \begin{align*}
    \Space(\Cplain_{M}) = \Omega\left[\left(\frac{N}{\log(N)} \right)^{1+2\delta}\right]~, \qquad \Depth(\Cplain_M) = \Omega\left[\left(\frac{N}{\log^2(N)}\right)^{\delta}\right]~.
  \end{align*}
\end{lemma}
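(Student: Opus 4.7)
The strategy is to first extract a lower bound on the per-block code distance $d_M$ from the target word-error-rate constraint $p_\cB(M) < \exp(-c_\cH N^\delta/\log^{2\delta} N)$, and then convert this distance requirement into lower bounds on $\Space(\Cplain_M)$ and $\Depth(\Cplain_M)$. Recall that $\cB_M = \cRS_{\ell_M}^{\otimes K}$ encodes the same number of logical qubits as $\cH_N$, namely $K = \Theta(N/\log^2 N)$, in independent rotated surface codes of distance $d_M = \Theta(\ell_M)$, and that failure is declared whenever \emph{any} block fails.

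First, I would invoke the standard sub-threshold scaling of the rotated surface code under minimum-weight matching: below threshold there exist constants $c_1 > 0$ and $c_2 > 0$ (depending only on the gap between the physical error rate and the threshold) such that each block has logical failure probability $p_1 = \Theta(\exp(-c_2 d_M))$, with the matching lower bound $p_1 \geq c_1 \exp(-c_2 d_M)$ coming from counting minimum-weight logical representatives. Because the $K$ blocks are independent, inclusion--exclusion gives $p_\cB(M) \geq \tfrac{1}{2} K p_1$ whenever $K p_1 \leq 1$. Imposing the target WER then yields $\tfrac{1}{2} K c_1 \exp(-c_2 d_M) \leq p_\cB(M) < \exp(-c_\cH N^\delta/\log^{2\delta} N)$, and taking logarithms gives
\[
c_2 d_M \;>\; c_\cH \frac{N^\delta}{\log^{2\delta} N} + \log\!\frac{K c_1}{2}~.
\]
Since $\log K = O(\log N)$ is subpolynomial while the first term on the right grows as $N^\delta$ up to a polylogarithmic factor, the first term dominates for any $\delta > 0$ and I conclude $d_M = \Omega\!\left((N/\log^2 N)^{\delta}\right)$.

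The second step converts this distance bound into circuit resources. A rotated surface code of distance $d_M$ uses $\Theta(d_M^2)$ physical qubits (data plus ancillas), so $\Space(\Cplain_M) = \Theta(K d_M^2)$; substituting the bounds for $K$ and $d_M$,
\[
\Space(\Cplain_M) \;=\; \Omega\!\left(\frac{N}{\log^2 N} \cdot \frac{N^{2\delta}}{\log^{4\delta} N}\right) \;=\; \Omega\!\left(\left(\frac{N}{\log N}\right)^{1+2\delta}\right),
\]
where the last equality reorganises polylogarithmic factors into the form stated in the lemma. For depth, each round of surface-code syndrome extraction uses a constant-depth gadget, but reliable syndrome extraction in the presence of measurement faults requires $r = \Omega(d_M)$ repeated rounds so that the effective space-time distance matches $d_M$. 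Consequently $\Depth(\Cplain_M) = \Omega(d_M) = \Omega\!\left((N/\log^2 N)^{\delta}\right)$, matching the lemma's claim.

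The main technical obstacle is justifying the \emph{lower} bound $p_1 \geq c_1 \exp(-c_2 d_M)$ on the single-block failure probability, since Gottesman-style results (and the surface-code threshold statement in Section~\ref{subsec:surface-review}) typically supply only an upper bound. I would address this by appealing to the standard analysis of the minimum-weight matching decoder, for which the leading contribution to the logical error rate is the $\binom{d_M}{\lceil d_M/2\rceil}(p/p_{\rm th})^{\lceil d_M/2\rceil}$ term arising from minimum-weight logical representatives; this delivers both upper and lower bounds of the asserted exponential form for any $p$ bounded below threshold. A secondary subtlety, easily handled, is checking that $\log K$ is truly negligible compared with $N^\delta/\log^{2\delta} N$ for all $\delta \geq 1/2$ in the regime of interest.
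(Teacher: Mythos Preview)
Your proposal is correct and follows essentially the same approach as the paper. The only difference is cosmetic: the paper uses the trivial lower bound $p_{\cB}(M)\ge p_{\cRS}(\ell_M)$ (the probability that \emph{some} block fails is at least the probability that a fixed block fails), rather than your inclusion--exclusion bound $p_{\cB}(M)\ge \tfrac12 K p_1$; since you then discard the $\log K$ term as subpolynomial, both routes yield the same $d_M=\Omega\bigl((N/\log^2 N)^\delta\bigr)$ and the rest of the argument is identical. Your flagging of the need for a matching \emph{lower} bound on the single-block failure probability is a point the paper glosses over by writing $p_{\cRS}(\ell_M)=\exp(-c_{\mathrm{EC}}\ell_M)$ as an equality; your justification via the leading $\binom{d_M}{\lceil d_M/2\rceil}(p/p_{\mathrm{th}})^{\lceil d_M/2\rceil}$ contribution is the standard one.
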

\begin{proof}
By assumption, $\pin^{(1)} = \pin^{(0)} = 0$ and $\pphys^{(0)} < \qphys^{(0)}$ and therefore we can perform error correction.
The Level-1 logical failure probability for the code $\cRS_{\ell_M}^{\otimes k}$ is described by a locally-decaying error model with failure rate $p_{\cRS}(\ell_M)$ (See Section \ref{subsec:surface-review}), where
\begin{align}
  p_{\cRS}(\ell_M) &= \exp(-c_{\mathrm{EC}} \cdot \ell_M)~.
\end{align}
We declare failure if any of the $k$ tiles of $\cB_M$ fails, which implies that
\begin{align}
\label{eq:obs1}
\begin{matrix}
    p_{\cRS}(\ell)    &\leq& &p_{\cB}(M) &\leq& 1-(1-p_{\cRS}(\ell))^{k}\\
\exp(-c_{\mathrm{EC}} \cdot \ell_M) &\leq& &p_{\cB}(M) &\leq& n \cdot \exp(-c_{\mathrm{EC}} \cdot \ell_M)
\end{matrix}~.
\end{align}
To guarantee that the error rate $p_{\cB}(M)$ is lower than $p_{\cH}(N)$, we at least require that
\begin{align}
    \exp(-c_{\mathrm{EC}}\cdot \ell_M) \leq \exp\left(-c_{\cH} \cdot \frac{N^{\delta}}{\log(N)^{2\delta}}\right)~.
\end{align}
This implies that $\ell_M = \Omega(N^{\delta}/\log(N)^{2\delta})$.

We can now compute the space and depth requirements for $\Cplain_M$.
The space cost $\Space(\Cplain_M)$ is $\Theta(k \cdot \ell_M^2)$.
The hierarchical memory $\cH_N$ uses a constant-rate $\dsl n,k,d, \Delta_q, \Delta_g \dsr$ quantum LDPC code $\cQ_n$ where $k(n) = \rho \cdot n$ and $d(n) = \Theta(n^{\delta})$.
This implies that
\begin{align}
    \Space(\Cplain_M) = \Omega\left[\left(\frac{N}{\log(N)}\right)^{1+2\delta}\right]~.
\end{align}

Furthermore, each tile requires $\ell_M$ rounds of error correction; syndrome-extraction circuits on separate tiles can be run in parallel.
Therefore
\begin{align}
    \Depth(\Cplain_M) = \Theta(\ell_M) = \Omega(N^{\delta}/\log^{2\delta}(N))~.
\end{align}
This completes the proof.
\end{proof}

Comparing with Equation~\eqref{eq:repeat-params-HN}, the basic encoding requires a larger space overhead for all $\delta > 0$:
\begin{align}
    \frac{\Space(\Cplain_M)}{\Space(\CFT_N)} = \Omega\left[ \frac{N^{2\delta}}{\log^{1+2\delta}(N)} \right]~.
\end{align}

As stated, however, the time overhead is worse.
Although the depth of the syndrome-extraction circuit $\CFT_N$ is $O(\sqrt{N}/R)$, we will need to perform $d(n)$ rounds of syndrome extraction to be fault tolerant.
However, this is not a fundamental requirement; it is due to the nature of Gottesman's proposal in \cite{gottesman2014fault} which uses an inefficient minimum-weight decoder.
There exist constant-rate LDPC codes that possess efficient, single-shot decoding algorithms, i.e.\ syndrome extraction only needs to be performed a constant number of times for the decoding algorithm to work \cite{leverrier2015quantum,fawzi2018constant,fawzi2018efficient}; furthermore the algorithm requires $O(N)$ time.
For such codes, we can compare the depth of the syndrome-extraction circuit $\Depth(\CFT_N)$ and that of the basic encoding $\Cplain_M$.
In addition to the width blowup, the basic encoding also requires a larger time overhead when $\delta > 1/2$.
\begin{align}
    \frac{\Depth(\Cplain_M)}{\Depth(\CFT_N)} = \Omega\left[ \frac{N^{\delta-1/2}}{R \cdot \log^{2\delta}(N)} \right]~.
\end{align}
Using LDPC codes with single-shot decoding algorithms, the hierarchical memory is a more efficient way to achieve a low logical error rate in terms of both circuit depth and width.

Having said this, it is not clear if this advantage manifests for practical block lengths.

For small codes and high error rates, it may well be that it is still optimal to use the basic encoding.
We expect to see a \emph{crossover point}---a value of physical error rate where the hierarchical scheme $\{\cH_N\}$ has a lower logical failure rate than the basic encoding $\{\cB_M\}$ with the same overhead.
Where exactly this crossover happens will depend on a number of parameters that are specific to the implementation, including the choice of the outer code, its threshold and our choice of decoders.
In the rest of this section, we attempt to estimate where this happens.

\subsection{Setup for numerical estimates}
\label{subsec:setup-num-est}

\textbf{Outer code:}
We choose a quantum expander code as our outer code \cite{tillich2014quantum, leverrier2015quantum}.
We do not utilize any of the structure of the code, so any LDPC code with constant rate and polynomially scaling distance will suffice.
For these reasons, we only briefly discuss the code construction.
We pick a classical code by sampling the check matrix from the ensemble of $m \times n$ matrices with $5$ ones in each column and $8$ ones in each row.

In particular, we pick a classical code with length $896$ and $336$ encoded bits.
Work by Litsyn and Shevelev \cite{litsyn2002ensembles} computes the asymptotic weight distribution of codewords --- with high probability, this code has distance $119$.
The resulting quantum code has parameters
\begin{align}
    N = 1\,116\,416, \quad K = 112\,896,\quad D = 119,\quad \Delta_q = 16,\quad \Delta_g = 13~.
\end{align}
We choose this code as it has a high rate which is necessary to reduce the amount of overhead in the scheme. 
The large block length we consider here is a consequence of using code families with sub-linear distance scaling.
However, the full trade-off for rate, distance, check weight, etc for linear-distance codes has not yet been explored.
We note that even at a relative distance $d/n$ of $10^{-3}$, a linear-distance code of such large block length would achieve a distance of roughly $10^3$.

While the hierarchical memory construction has good asymptotic performance guarantees, if the overhead is too high then the hierarchical memory wins only at an extremely low WER.\footnote{
  1 year is $\approx 10^{16}$ nanoseconds, 1 Hubble time is $\approx 10^{10}$ years or $\approx 10^{26}$ nanoseconds.
  For any practical purpose %
  a WER of $10^{-25}$ per gate time should suffice. 
}

\textbf{Inner code:} As in the earlier sections, we consider the square rotated surface codes $\cRS_{\ell}$ for the inner code.
In our estimates, we allow for $d_\ell = 3,5, 9, 15, 21, 27$.

We make some assumptions about errors on both the logical and physical levels.
We present these assumptions together below and discuss justifications for some assumptions in what follows.

\textbf{Level-0 noise model:}
We assume circuit-level Pauli noise on each physical qubit.
We treat $\swapp$ gates and other Clifford operations separately.
\begin{enumerate}
\item \label{it:circuit-level-noise} 
Each $t$-qubit gate (except $\swapp$ gates) at the physical level fails with a probability $p$ and leaves behind one of the $4^t-1$ non-trivial $t$-qubit Pauli operators picked uniformly at random.
We assume that qubit reset completely removes all traces of the original state.
However, it may reset to the wrong computational basis state with probability $p$.
\item The failure probability of the physical $\swapp$-gate is $r_{\swapp} \cdot p$, where $r_{\swapp} = 1, 10^{-1}, 10^{-2}$.
In this setting, the surface code syndrome-extraction circuit is performed every $1/r_\swapp$ $\swapp$-gates, so that at the physical level the circuit-level noise model remains relatively unchanged for different values of $r_\swapp$.
This assumption will be discussed in detail in Section~\ref{subsec:gates}.
\end{enumerate}

\textbf{Level-1 noise model:}
We assume that the surface code fails at a probability $p_{\cRS}^{(1)}(d_\ell)$, and that the effective noise witnessed by the outer code because of all the $\swapp$ gates is $p^{(1)}$.

\begin{enumerate}
\item The logical error rate $\psurf(d_\ell)$ of each $\ell \times \ell$ rotated surface code tile is \cite{wang2011surface, fowler2012surface}
\begin{equation}\label{eq:fmmc}
  \psurf(d_\ell)\approx 0.1 \left(\frac{p}{10^{-2}}\right)^{\ceil{d_\ell/2}}
\end{equation}
per $d_\ell$ physical level timesteps where one physical level timestep is one round of syndrome extraction plus one (optional) transversal gate which totals roughly 6 gates.
This assumption is discussed in Section~\ref{subsec:decoder_perf_surf}.
\item The effective error rate per long-range $\cnot$ gate is $p^{(1)} = 1-(1-\psurf(d_\ell))^{t_\mathrm{route} + 1}$.
It is analogous to the two-qubit gate error rate in the model with long-range gates.
$t_{\mathrm{route}}$ is the time required for permutation routing presented in Equation~\eqref{eq:t_route_opt}.
\end{enumerate}

\textbf{Level-2 noise model:} Finally, we assume that the logical failure rate for the LDPC code, $p_{\cQ}(n)$, is consistent with a minimum-weight decoder.
\begin{enumerate}
\item \label{it:ldpc-wer} 
For our LDPC code, we assume that the WER under circuit-level Pauli noise using long-range gates is
\begin{equation}
    p_{\cQ} = \left(\frac{p^{(1)}}{10^{-3}}\right)^{10}
\end{equation}
per cycle of syndrome extraction.
The threshold of the code is assumed to be about $10^{-3}$ under circuit noise.
The exponent is $10$ rather than half the distance which is $\sim 55$ because of hook errors.
This assumption is discussed in Section~\ref{subsec:decoder_perf_ldpc}.
\end{enumerate}
If desired, readers can skip ahead to the numerical estimates in Section \ref{subsec:numerical_estimates_results} and return to the justification of the noise model later.

\subsubsection{Decoder performance for the inner code}
\label{subsec:decoder_perf_surf}
Consider Equation~\eqref{eq:fmmc} for the scaling of the logical failure rate for a surface code of distance $d_{\ell}$. 
We assumed a surface code threshold of $10^{-2}$.

This equation neglects:
\begin{enumerate}
\item finite-size effects present at very small code distances.
\item the slight reduction in threshold from inserting a layer of gates failing with rate $p$ between syndrome extraction cycles\footnote{In general, the precise value of the circuit-level threshold already requires some assumptions about what gates are native in the device: The optimal syndrome extraction circuit with our physical layer layout requires 5 to 8 gates depending on these assumptions, so the insertion of an additional gate is relatively unimportant.}.
Recall that this is necessary to implement a logical $\swapp$ operation in the bilayer architecture as discussed in Section \ref{subsec:swap-bilayer}.
\item the distinction between rotated and standard surface codes.
Owing to this, the expression for the logical error rate is an order of magnitude estimate.
We expect our conclusions should be somewhat insensitive to the precise form of the logical error rate and also apply to more general locally decaying error models. %
For calculational convenience, we assume that the failure rate $q$ after $T$ syndrome extraction rounds is given by $1-q = \left(1-\psurf(d_\ell)\right)^{T/d}$.
\end{enumerate}

\subsubsection{Physical $\swapp$ fidelity}
\label{subsec:gates}
Recall that simulating a long-range $\cnot$ via $\swapp$ gates results in a $\cnot$ failure rate of $p^{(1)}$.
We assume that the effective failure rate witnessed by the outer code is
\begin{align}\label{eq:rswap_failure_rate}
    p^{(1)} = 1-(1-\psurf(d_\ell))^{\frac{r_{\swapp}t_\mathrm{route}}{d_\ell} + 1}~.
\end{align}
The parameter $t_{\mathrm{route}}$ is the time required to perform a permutation routing in the bilayer architecture as specified in Equation~\eqref{eq:t_route_opt}.
For convenience, we restate it here
\begin{equation}
    \tag{\ref{eq:t_route_opt}}
    t_{\mathrm{route}} = (2d_{\ell}+1)(3L-3) + 8~.
\end{equation}
The parameter $r_{\swapp}$ bounds the (in)fidelity of the $\swapp$ operation in terms of the $\cnot$ gate (in)fidelity as we now explain.

In the previous sections, we assumed that all gates failed at the same rate.
As noted in Section~\ref{subsec:cft-threshold}, the main source of noise in the hierarchical model stems from the $\swapp$ gates.
This worst-case model was convenient for a proof of the existence of a threshold.
Furthermore, in many devices the $\swapp$ gate is implemented using the same mechanism as the two-qubit entangling gates and so the noise rates are comparable.
However, this is not the only way to implement $\swapp$ gates.

In platforms where the qubits can be physically moved, we can effectively ``rewire'' the connectivity of the device at runtime.
Physically swapping qubits does not require the qubit degree of freedom to be coupled to, and so one might expect that it is an easier task to perform with higher fidelity or speed.
Such techniques have been demonstrated in some experimental platforms: Rearrangable tweezers in Rydberg platforms~\cite{endres2016atom, barredo2016atom, bluvstein2022quantum} and ion shuttling in trapped ion platforms~\cite{hensinger2006t, kaufmann2017fast}.
In this setting, it is possible that the $\swapp$ gate has much higher fidelity than $\cnot$ gates. 

Accordingly, in our model, we assign a constant $r_\swapp$ which specifies the ratio of $\swapp$-gate and idle noise to $\cnot$-gate noise.
With a less noisy $r_\swapp$, we perform $1/r_\swapp$ level-0 $\swapp$ operations per round of surface code syndrome extraction such that the physical error rate in the surface code syndrome extraction circuit remains constant with respect to $r_\swapp$.
Utilizing this optimization, an entire permutation takes $r_{\swapp}t_\mathrm{route}$ rounds of syndrome extraction.
Equation~\eqref{eq:rswap_failure_rate} is in terms of the surface code cycle ($d_\ell$ rounds of syndrome extraction), and the total number of surface code cycles is $\frac{r_{\swapp}t_\mathrm{route}}{d_\ell}$ for a permutation and $1$ for an entangling gate.
We have omitted floor and ceiling functions in this discussion for simplicity.

For example, in a neutral atom system \cite{bluvstein2022quantum}, an array of qubits with a coherence time of seconds was rearranged with an average rearrangement speed of several microseconds per lattice site moved.
If the dominant source of errors in rearrangement is due to idle errors, then we should assign an infidelity to the $\swapp$ gate of roughly $10^{-5}$ whereas the two-qubit gate possessed an infidelity of about $10^{-2}$ i.e. $r_\swapp \approx 10^{-3}-10^{-2}$.
Routing does not require generating entanglement, so the qubit can remain encoded in well-isolated degrees of freedom.
Owing to this, we consider three scenarios: where $r_{\swapp}$ is $10^0$, $10^{-1}$, or $10^{-2}$.

We note that it is a simplification to consider the rearrangement primitive in each platform (tweezers, ion shuttling, etc.) as simply $\swapp$-gates: Frequently there are effects like accumulated motional heating, recooling, acceleration speed limits, etc, but we expect the basic routing ideas and qualitative conclusions remain the same even in the more complicated setting.

\subsubsection{Hook errors}
\label{subsec:hook_errors}

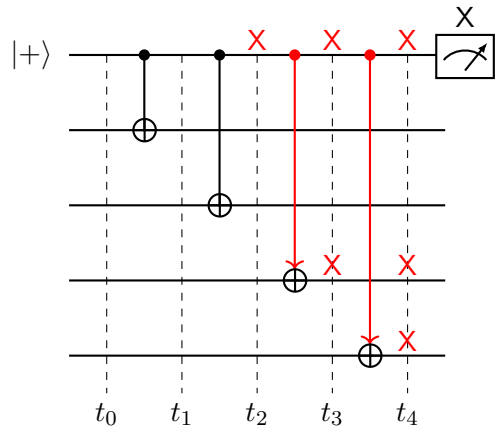
\begin{figure}[h]
  \centering
  \begin{tikzpicture}
      \node at (-0.5,5) {$\ket{+}$};
      \draw[dashed] (0.5,5)--(0.5,0.5) node[below] {$t_0$};
      \foreach \y in {1,...,5} {
         \draw[thick] (0,\y)--(5,\y);
      }
      \foreach \x in {1,2} {
          \draw[dashed] ({\x+0.5},5)--({\x+0.5},0.5) node[below] {$t_{\x}$};
         \fill[black] (\x,5) circle (0.075);
         \draw[thick] (\x,5)--(\x,{5-\x});
         \node at ({\x-0.06},{5-\x}) {\usebox\cnottarget};
      }
      \foreach \x in {3,4} {
        \draw[dashed] ({\x+0.5},5)--({\x+0.5},0.5) node[below] {$t_{\x}$};
         \fill[red] (\x,5) circle (0.075);
         \draw[color=red,thick,->] (\x,5)--(\x,{5.15-\x});
         \node at ({\x-0.06},{5-\x}) {\usebox\cnottarget};
      }
      \node at (5.2,5.1) {\usebox\mmt};
      \node at (2.5,5.2) {\color{red} $\ssX$};
      \node at (3.5,5.2) {\color{red} $\ssX$};
      \node at (3.5,2.2) {\color{red} $\ssX$};
      \node at (4.5,5.2) {\color{red} $\ssX$};
      \node at (4.5,2.2) {\color{red} $\ssX$};
      \node at (4.5,1.2) {\color{red} $\ssX$};
  \end{tikzpicture}
  \caption{Hook errors: errors flowing onto data qubits.
  In this case, an $\ssX$ error appears in the midst of a syndrome-extraction circuit.
  This then propagates to the data qubits.}
  \label{fig:err-flow}
\end{figure}

Current decoder technology for LDPC codes is relatively immature, so we assume a WER scaling consistent with a minimum-weight decoder.
At physical failure rate $p$, we assume that the logical failure rate of an $\dsl n,k,d, \Delta_q, \Delta_g\dsr$ LDPC code $\cQ_n$ below threshold is dominated by a term proportional to $p^t$ where $t$ is the smallest number of fault locations that is uncorrectable.
If our intuition is informed by an i.i.d.\ error model on qubits, we may expect $t \approx d/2$.
However, this is not true in the context of syndrome-extraction circuits as corrupted syndrome qubits can spread errors to many data qubits.

These errors, called hook errors \cite{dennis2002topological}, are harmful errors that can dominate the lower error rate performance of the quantum code.
By a rough estimate, they can reduce the distance of a $\dsl n,k,d, \Delta_q, \Delta_g\dsr$ LDPC code by a factor of $\Delta_g/2$.
To explain how they work, consider the example measurement circuit for an $\ssX$-type stabilizer generator as shown in Figure~\ref{fig:syndrome-extraction}.
An $\ssX$ error on the ancilla can propagate to a much larger data error. 

In theory, the hook error is $O(1)$-sized and the syndrome extraction circuit is fault-tolerant.
However, addressing these errors can significantly reduce the size of the LDPC code required to achieve a target logical failure rate.

Suppose ancilla qubits fail with probability $p$.
A measurement circuit for a weight $w$ operator can create hook errors with weight ranging from $1$ up to $w$.
If the circuit is measuring the checks of a code, the weight of the hook error can be reduced by using the measured stabilizer generator giving a maximum reduced weight of $\floor{w/2}$.
An error is uncorrectable if it has weight at least $\ceil{d/2}$.
If we assume that each ancilla failure results in an error of weight $\lceil \Delta_g/2\rceil$, then we only need $t$ failures to cause an uncorrectable error, where $t$ satisfies
\begin{align}
    t \cdot \floor{\Delta_g/2} \ge \ceil{d/2}~.
\end{align}
Then the probability of logical failure is $p^{t}$ where $t \approx d/\Delta_g$. 

This assumption is conservative --- hook errors depend on the choice of syndrome-extraction circuit and may be minimized by particular choices of gate scheduling.
For example, in the rotated surface code, there is a two-qubit gate schedule for the syndrome-extraction circuit such that the hook error has intersection-1 with a logical operator \cite{tomita2014low, chamberland2018flag}.
Using such a schedule, the below-threshold scaling is $\propto p^{\ceil{d/2}}$ as one would expect from a depolarizing noise model.
For general LDPC codes, the existence of measurement schedules that reduce the effects of hook errors is not yet clear.

While there exist many methods for %
suppressing hook errors such as Shor, Steane or Knill error correction \cite{nielsen2002quantum}, nearly all require more ancilla qubits.
This presents a trade-off where, for a given number of qubits, either a larger block length code with correspondingly better parameters or a more resource-intensive syndrome-extraction circuit could be used.
In the setting of a constant-rate LDPC code, larger distances come with more logical qubits, so the lowest overhead solution is to use the naive syndrome-extraction circuit with as large a code as possible\footnote{For very resource-constrained settings, it may still be worthwhile to use more sophisticated syndrome extraction circuits for a better effective relative distance.}.
Later, in Section~\ref{subsec:biased-noise}, we will propose a method to mitigate the effects of hook errors outside of the asymptotic regime.

\subsubsection{Decoder performance for the outer code}
\label{subsec:decoder_perf_ldpc}
Hook errors discussed in the previous section need to be considered in the context of the concatenated scheme --- the probability that an ancilla qubit fails is $p^{(1)}$.

Following the discussion in Section~\ref{subsec:hook_errors} on hook errors, $\ceil{\ceil{d/2}/\floor{\Delta_g/2}} = 10$ for the code selected in Section \ref{subsec:setup-num-est}.
Assuming a threshold of about $10^{-3}$ under circuit noise, the WER under circuit-level Pauli noise using long-range gates goes as
\begin{equation}\label{eq:ldpc_noise_model}
    \left(\frac{p^{(1)}}{10^{-3}}\right)^{10}~.
\end{equation}
per cycle of syndrome extraction.

For context, a slightly better threshold of about $3\times 10^{-3}$ has been observed for $(3,4)$ hypergraph product codes using efficient decoders~\cite{tremblay2021constant} under circuit-level noise for syndrome-extraction circuits with long-range gates.

In practice, more information is available to the decoder owing to the concatenated structure: A decoder using this extra information about individual qubit reliability is likely to have a better threshold.
We return to this subject in Section~\ref{subsec:concat-decoder}.

\subsection{Results}
\label{subsec:numerical_estimates_results}

\begin{figure}[h]
\includegraphics[width=\textwidth]{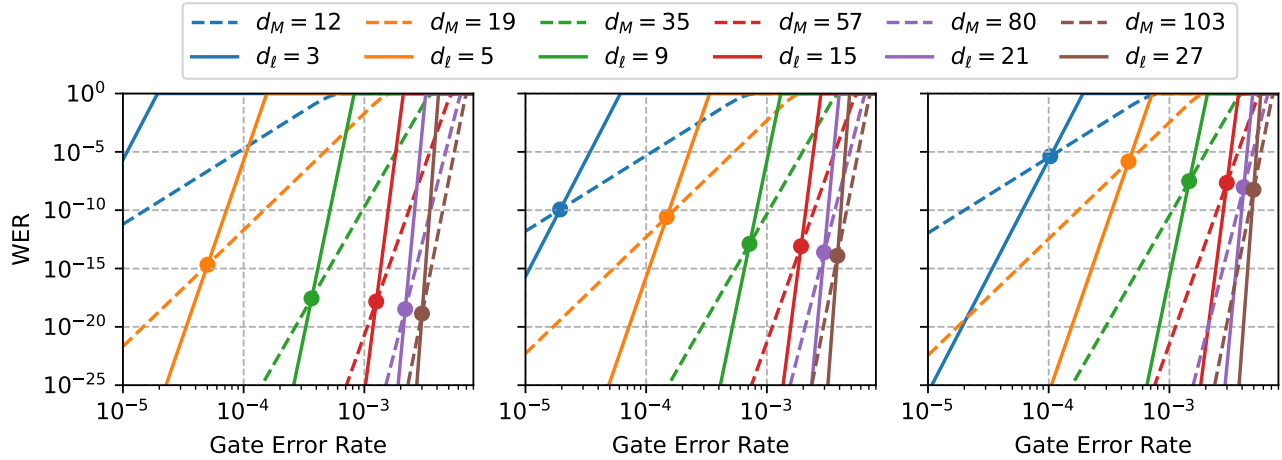}
\caption{\label{fig:comparison_swap_speed} Comparison of a hierarchical memory (solid line) using a $(5,8)$ quantum expander code with parameters 
    $\dsl 1\,116\,416, 112\,896, 119\dsr$ and inner code distance $d_{\ell}$, and a surface code (dashed line) with distance $d_M$.
    Lines of the same color use roughly the same number of physical qubits including all necessary ancilla qubits.
    All memories store $112\,896$ logical qubits.
    The $3$ plots correspond to different values of $r_{\swapp}$ equal to $10^0$, $10^{-1}$, and $10^{-2}$ (left to right) under the decoder performance assumptions made in Section \ref{subsec:setup-num-est}.
    The surface code distance is rounded up, so it always uses slightly more qubits.
    The WER is with respect to the hierarchical memory syndrome extraction cycle.
   }
\end{figure}

Using the duration of the hierarchical code syndrome extraction cycle as the unit of time that defines the WER, the results of the estimates are shown in Figure~\ref{fig:comparison_swap_speed} for $r_{\swapp} =$ $10^0$, $10^{-1}$, or $10^{-2}$ and several sizes of inner rotated surface code.
We can see the better scaling with gate error rate that the hierarchical memory achieves.
While the LDPC code distance is fairly large, the ``effective'' distance has been reduced immensely by the weight-6 hook errors (potentially arising from the measurement of weight-13 check operators); because the outer code has distance $d=119$, under our pessimistic assumptions just 10 fault locations are sufficient to cause an uncorrectable error. 
We expect future LDPC codes with better distance and better understanding of hook errors in syndrome extraction circuit gate scheduling will improve the WER scaling.

\begin{figure}[h]
\begin{center}
    \includegraphics{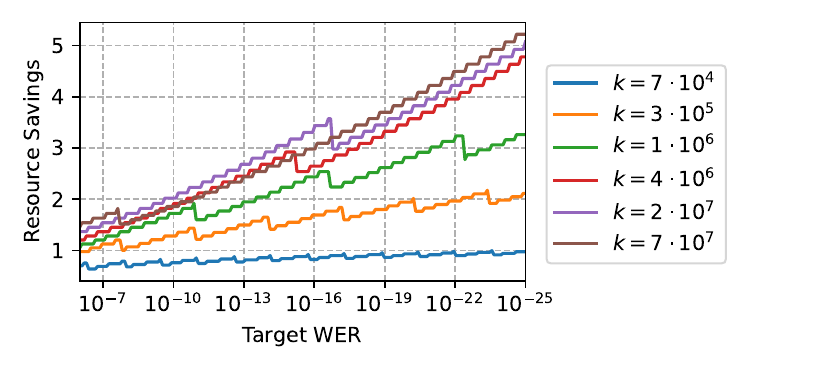}
\end{center}
\caption{\label{fig:wer_resource_comparison} 
    Estimated resource savings over surface codes for a hierarchical memory for $r_{\swapp}=10^{-1}$ and a gate error rate of $3\cdot 10^{-3}$ under the performance assumption of Section \ref{subsec:setup-num-est}.
    The resource here refers to the total space footprint of the circuits; the $y$-axis represents the ratio $\Space(\Cplain_M)/\Space(\CFT_N)$.
    We plot the resource savings for the $(4,8)$ family of quantum expander codes with input code block lengths $512\cdot 2^m$ for $m\in \{0,1,2,3,4,5\}$.
    The number of logical qubits is indicated in the legend.
    Discontinuities in the plot are due to discretization of the surface code distance.
    Rare noise sources that create high weight errors may provide further resources savings over surface codes.
   }
\end{figure}

Under a standard circuit-level noise model, below a gate error rate of around $10^{-3}$, and for a target WER of $10^{-20}$ to $10^{-10}$, the hierarchical scheme may realize significant resource savings.
Especially so if $\swapp$ gates have much lower gate error rates than $\cnot$ gates. 
We plot such a comparison in Figure~\ref{fig:wer_resource_comparison} with a gate error rate of $3\times 10^{-3}$ ($99.7\%$ gate fidelity) and $r_{\swapp} = 0.1$
With further engineering and more careful modeling, we believe the overhead of the hierarchical scheme can be reduced much more, so that the crossover point occurs at a practically relevant gate error rate and target WER.
In the next section we will outline two ideas that will improve the performance of the hierarchical scheme: The decoder for the outer code is given far more information about the level-1 qubit reliabilities than in a circuit level noise model, and in the presence of noise bias, the syndrome extraction circuit can be tailored to reduce the effects of hook errors.

Another reason we expect this estimate is conservative is that we have assumed that the noise model is independent circuit noise which creates only 2-body correlated errors in the underling surface codes.
In the setting of large, long-lived quantum memories, we expect it will become necessary to address noise sources that affect large patches of the system.
Sources of such noise could include cosmic rays (superconducting qubits), large deviations in global control devices such as lasers (AMO systems), lightning strikes, power supply ripples, etc. %
For large memories, different parts of the memory may rely on systems operating independently (ex. lasers, fridges, power supplies, etc) which would make such ``global'' noise large on the scale of any reasonable surface code patch, but small on the scale of the full hierarchical memory.
Concatenation of surface codes with constant length outer codes \cite{xu2022distributed} has previously been considered in order to address such issues.
It may be practical\footnote{Physics is local, so a very large surface code is likely sufficient, but it may be impractically large.} to protect against such noise sources with a hierarchical scheme without additional overhead.

\subsection{Future Performance Improvements}
\label{sec:future}
Having concluded a rough estimate of what the performance of the hierarchical memory might look like, we outline some ideas that could further improve the performance of the hierarchical memory relative to surface codes.
In this section, we re-examine the WER for LDPC codes using biased-noise qubits and message-passing decoders.

\subsubsection{Noise-Bias Tailored Syndrome Extraction}
\label{subsec:biased-noise}
As discussed in Section~\ref{subsec:hook_errors}, hook errors can be very damaging for general LDPC codes.
In this section, we estimate the failure rate for hierarchical codes by making further assumptions on the dependence of logical failure given $\eta$-biased qubits.
In particular, Equation~\eqref{eq:ldpc_noise_model_biased} presented below is an ansatz for the logical failure probability $p_{\cQ}$ of the outer code.
However, we expect that this estimate can be considerably improved in the future by investigating in more detail how $p_{\cQ}$ and depends on the bias $\eta$.

$\ssX$ errors on the ancilla qubit will propagate to an $\ssX$ or $\ssZ$ error on the data while $\ssZ$ errors on the ancilla qubit will simply flip the measurement outcome without propagating to a higher weight data error.
If $\ssX$ errors can be suppressed on the ancilla qubits, then hook errors become much less likely.
In many platforms, such noise is common or can be engineered into the experiment~\cite{grimm2020stabilization,lescanne2020exponential,cong2022hardware}.
Noise bias has been exploited in the past by tailoring the quantum error correction scheme \cite{aliferis2008fault,webster2015reducing,puri2020bias,bonilla2021xzzx,roffe2022bias} to the noise.

In Section~\ref{subsec:bilayer-biased}, we introduced a technique to modify the bilayer architecture such that Level-1 qubits are noise biased.
We can use this noise bias to suppress errors on the ancilla ($\ssX$) that propagate to higher weight data errors.
We modify the assumptions of Subsection~\ref{subsec:decoder_perf_ldpc} and Equation~\eqref{eq:ldpc_noise_model} in a way that attempts to capture this behavior.
Further study will be needed to make more precise estimates of logical error rates in this modified architecture. 

The modified bilayer architecture uses elongated Level-1 qubits. If we choose the $\ssX$ distance to be larger than the $\ssZ$ distance according to $d_{\ssX}= d_{\ssZ} + \lceil  2\log(\eta)/\log(1/p)\rceil$, then the logical $\ssX$ error rate of the inner code is suppressed relative to the logical $\ssZ$ error rate by the bias factor $1/\eta$. If the accuracy threshold of the outer code is still $10^{-3}$ as we assumed for the case without noise bias, then for the modified architecture our estimate 
for the Level-2 WER becomes 
\begin{equation}\label{eq:ldpc_noise_model_biased}
    p_{\cQ} =
    \left(\frac{p^{(1)}}{10^{-3}}\right)^{\ceil{d/2}} +
    \left(\frac{p^{(1)}/\eta}{10^{-3}}\right)^{\ceil{\ceil{d/2}/\floor{\Delta_g/2}}}~.
\end{equation}
The first term is the contribution from Level-1 logical $\ssZ$ errors; these do not propagate from ancilla to data, so that  $\ceil{d/2}$ Level-1 errors are needed to cause a logical error at level 2. The second term arises from Level-1 logical $\ssX$ errors. These can propagate from ancilla to data, but they occur at a rate suppressed by the bias factor $1/\eta$.

Since surface codes are CSS codes, the $\ssX$ and $\ssZ$ noise can be corrected independently, so the $\ssX$ and $\ssZ$ logical failure rates can be examined independently up to small correlations introduced by $\ssY$-errors.
Ignoring theses correlations and assuming that Equation~\eqref{eq:fmmc} still holds with $d$ replaced by $d_\ssX$ or $d_\ssZ$, 

\begin{figure}[h]
\includegraphics[width=\textwidth]{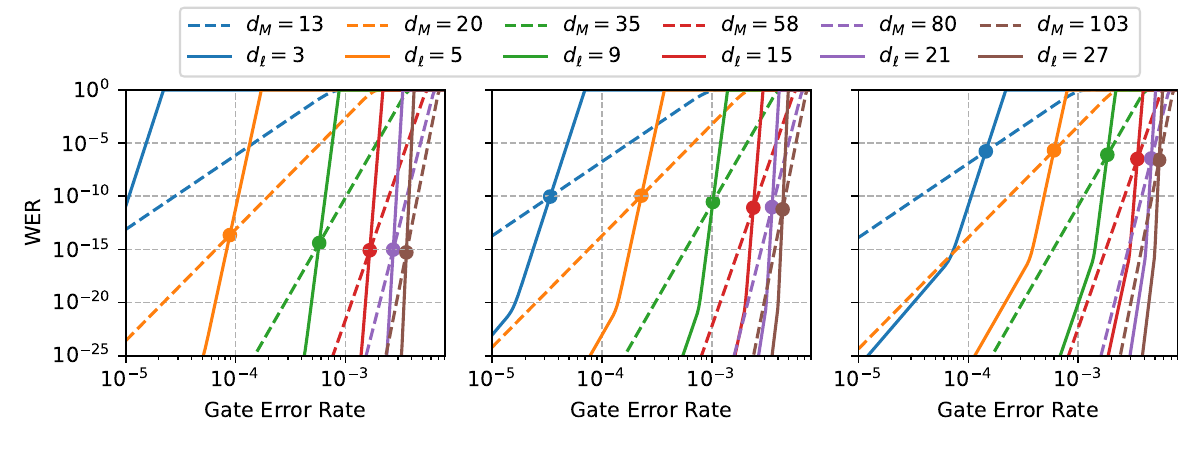}
\caption{\label{fig:comparison_bias}
    Comparison of a hierarchical memory (solid line) using a $(4,8)$ quantum expander code with parameters 
    $\dsl 327\, 680, 65\, 536, 32\dsr$ and inner code distance $d_{\ssZ} = d_{\ell}$, and a surface code (dashed line) with distance $\ell_M$. 
    Lines of the same color use roughly the same number of physical qubits including all necessary ancilla qubits.
    The noise bias permits a smaller block length, so all memories store $65\, 536$ logical qubits.
    The $3$ plots correspond to different values of $r_{\swapp}$ equal to $10^0$, $10^{-1}$, and $10^{-2}$ (left to right) under the modified decoder performance assumptions made in subsection \ref{subsec:biased-noise}. 
    The surface codes underlying the hierarchical memory are rectangular with $d_\ssX = 2d_{\ell}+1$
    The WER is for one round of the hierarchical memory's syndrome-extraction cycle.
   }
\end{figure}

We plot a similar comparison to Figure~\ref{fig:comparison_swap_speed} with $d_\ssX = 2d_\ssZ + 1$, so that $\ceil{d_\ssX/2} = 2 \ceil{d_\ssZ/2}$\footnote{This choice is somewhat arbitrary. For a given WER target and gate error rate, the optimal aspect ratio is likely to be such that the target is at the ``kink'' of the WER in Figure \ref{fig:comparison_bias}. } (Figure \ref{fig:comparison_bias}).
Using the greater resilience to hook errors, we also pick a smaller code with higher rate %
with parameters $\dsl 327\, 680, 65\, 536, 32\dsr$.
Notice the effect of the bias is to increase the rate at which the WER falls with the level-0 gate error rate.
The increased slope only persists until the two terms in Equation~\eqref{eq:ldpc_noise_model_biased} become equal.
One can see that using the bias, the hook errors are greatly suppressed leading to a better logical failure rate scaling in practically relevant regimes and a crossover point at a larger physical gate error rate.

\subsubsection{Decoders that use the concatenated structure}
\label{subsec:concat-decoder}

Our asymptotic analysis used the underlying surface codes in a black-box manner---when decoding the outer LDPC code $\{\cQ_n\}$, the tiles had either failed or succeeded.
In contrast to this ``hard information'', much more information is available to the decoder for the outer code in the hierarchical setting.
We may have access to ``soft information'', i.e.\ information about how reliable individual surface code patches are, which can then be passed to the outer code decoder.
It is known that maximum-likelihood decoding on each level of a concatenated code, together with message passing between levels is an optimal decoding algorithm \cite{poulin2006optimal}.

\textbf{Choice of outer code decoder:} Soft information can be used in the quantum setting using Belief Propagation (BP), a class of iterative algorithms.
Broadly, in each iteration, BP makes a series of graph-local decisions---qubits that are in the support of a stabilizer generator exchange information and update their beliefs about whether they have been corrupted.
As there are only a constant number of qubits in the support of each stabilizer generator, the decision requires a constant-sized computation.
Although it is very successful in the classical setting, BP faces difficulties when applied to quantum codes.
In the classical setting, BP converges to an distribution over bits that corresponds to the most likely error.
In the quantum setting, it was pointed out early on \cite{poulin2008iterative} that degeneracy is a major issue for BP---there are many errors that are equivalent as they differ only by a stabilizer generator.
However, BP is unable to tell the difference and gets stuck in a local minimum.
One simple way to get around this issue would be if more information were available about the qubits.
If each qubit were known to fail with a different probability -- even if that difference is small --- it can help BP avoid local minima.

Since then, many ideas have been developed to use soft information in the quantum setting that overcome the shortcomings of BP \cite{panteleev2021degenerate,roffe2020decoding,quintavalle2021single,grospellier2021combining,kuo2022exploiting,liu2019neural,du2022stabilizer}.
We now discuss ways to obtain soft information from the surface code.

\textbf{Choice of inner code decoder:} The tensor network decoders \cite{bravyi2014efficient,chubb2021general,tuckett2018ultrahigh,bonilla2021xzzx,tuckett2019tailoring} are one class of surface code decoders that yield such soft information.
The decoder outputs the probability of different (coset) logical failures for each tile.
Unfortunately, it is unclear how to implement these algorithms in the fault-tolerant setting where syndrome information is unreliable.
This setting requires growing bond dimension which makes implementing the decoder quite challenging.

More recently, BP decoders have been implemented for surface codes \cite{roffe2020decoding,old2022generalized,acharya2022suppressing}.
It is conceivable that such an algorithm could serve as a soft decoder for the surface codes as well.

A natural question is whether standard decoders such as Min-Weight Perfect Matching (MWPM) \cite{dennis2002topological} or the Union-Find Decoder (UFD) \cite{delfosse2017almost} could be modified to yield soft information.
For simplicity, consider bit flip noise at a rate of $p$.
We define the \emph{decoding graph} given by associating a vertex with each measured stabilizer generator.
We add a special \emph{boundary vertex} to which we associate the total parity of all measured stabilizer generators.
Including the boundary vertex, each single qubit error is detected in exactly two places.
For each error, an edge is added between the vertices where it is detected.
To each edge, assign the weight $-\log \left( \frac{p}{1-p} \right)$ which is the log-likelihood of an error.
The most likely error given the syndrome is then a subset of edges with minimal weight that produces the syndrome and can be computed efficiently by mapping onto the minimum-weight perfect matching problem.

On average, the expected weight of an error (and correction) will be linear in the block length.
This is asymptotically larger than the distance of a surface code, so the most important feature of the correction is its \emph{shape}.
The Union-Find Decoder operates in two steps: First, it identifies clusters such that a valid correction is contained within the support of the clusters.
Then, it treats the identified clusters as an erasure and runs an erasure correction decoder which produces a valid correction contained within the erasure. %

One such way to obtain soft information from this process is to compute the log-likelihood of the minimum weight error that would lead to a logical fault when combined with the erasure.
This can be computed efficiently by setting the edge weights within the erasure to $0$ and computing the minimal weight path between inequivalent boundaries.
Call this quantity $\phi$.
We note that when no errors are detected, $\phi = - d \log \left( \frac{p}{1-p}\right)$, and when the cluster spans the system, $\phi = 0$.
In the first case, it is extremely unlikely $(\propto p^d)$ for a logical fault to have occurred while in the latter case, there is a 50\% probability for a logical fault to have occurred.
When passed to an outer-level decoder, $\phi$ or a monotonic function of $\phi$ may yield sufficient information to improve the logical failure rate dramatically.

\section{Conclusions}

We have constructed a quantum memory with a threshold using geometrically local gates to simulate long-range connectivity.
We did so by constructing a code family $\{\cH_N\}$ that we refer to as the hierarchical code.
$N$ indexes the size of the code; the $N$\textsuperscript{th} element $\cH_N$ of this code is obtained by concatenating a constant-rate quantum LDPC code $\cQ_{n}$ (the $n$-qubit outer code) and the surface code $\cRS_{\ell}$ (the inner code).
The outer code has a number of encoded qubits $k(n) = \rho \cdot n$ and distance $d(n) = \Theta(n^{\delta})$ for positive constants $\rho, \delta$.
Our construction builds on Gottesman's proof of the existence of a threshold using quantum LDPC codes (Theorem 4 from \cite{gottesman2014fault}).
The central idea in Gottesman's construction is that if the failure rate \emph{per round} of syndrome extraction, denoted $\pround$, is a sufficiently small constant, then logical errors can be suppressed exponentially in the distance of the code. We showed that the requisite constant error rate per round can be achieved using geometrically-local gates if the inner code has suitable properties. 

Although $\cH_N$ is no longer an LDPC code, local operations suffice for extracting the error syndrome.
In Section~\ref{sec:surface-code}, we presented an explicit family of syndrome-extraction circuits $\{\CFT_N\}$ for $\cH_N$.
This circuit has width $\Space(\CFT_N) = \Theta(N)$ and depth $\Depth(\CFT_N) = O(\sqrt{N}/R)$, where $R$ denotes the range of physical $\swapp$ gates.
To describe this circuit for the hierarchical code, we first presented a construction of the syndrome-extraction circuit $C_n$ for the outer LDPC code $\cQ_n$ in Section~\ref{subsec:sec-from-routing}. 
This circuit is based on a bilayer architecture --- physical qubits are laid out in two layers in $2$ dimensions.
In our concatenated construction, the outer qubits of $\cQ_n$ are replaced by rotated surface codes referred to as tiles.
In Section~\ref{subsec:swap-bilayer}, we demonstrated how to perform Level-1 logical Clifford operations on tiles using physical nearest-neighbor gates, including a novel technique for performing nearest-neighbor logical $\swapp$ gates.
We also discussed how to perform logical $\swapp$ operations on tiles with range $R_1$ using physical $\swapp$ operations with range $R_0$.

In Section~\ref{sec:ft-asymptotics}, we showed that for fixed values of the physical failure rate $\pphys$, the error rate per round of syndrome-extraction, $\pround$, is a polynomial function of the depth $\Depth(\CFT_N)$.
Using an inner surface code with linear size $\ell$, which can suppress errors exponentially in $\ell$, we can guarantee that the Level-1 error rate per round is a constant by choosing $\ell = \Theta(\log(n))$.
The resulting concatenated code $\cH_N$ encodes a number of encoded qubits $K = \Omega(N/\log(N)^2)$.
Furthermore, if the distance of the LDPC code $\cQ_n$ is $d(n) = \Theta(n^{\delta})$, then $\cH_N$ can suppress errors superpolynomially; the Word Error Rate (WER) satisfies $p_{\cH}(N) < \exp(-\Theta[N^{\delta}/\log^{2\delta}(N)])$.
Given access to physical $\swapp$ operations of range $R$, the syndrome-extraction circuit $\CFT_N$ has depth $O(\sqrt{N}/R)$.

Using this architecture we made numerical estimates of the WER $p_{\cH}(N)$ in Section~\ref{sec:numerical estimates}.
We contrasted this with the WER $p_{\cB}(M)$ of the \emph{basic encoding} $\cB_M$, where all logical qubits are encoded using only the surface code.
We first made comparisons in the asymptotic regime, showing in Section~\ref{subsec:asymptotic-comparison} that if the outer constant-rate LDPC code has an efficient single-shot decoder, then a target logical error rate can be achieved more efficiently using the hierarchical encoding rather than the basic encoding.

We then proceeded with numerical estimates probing whether this advantage holds for practical code sizes and noise parameters.
For this purpose, we compared the WERs of the basic encoding and hierarchical encoding when both schemes use the same total number of physical qubits. 
We found that the physical error rate has a \emph{crossover point}; when the physical error rate is below this value, the hierarchical code outperforms the basic encoding.
To perform these estimates, we made assumptions about the noise model and about the WER for surface codes and LDPC codes, and we assessed the impact of these assumptions on our conclusions.
We also discussed some ways to reduce the WER of hierarchical codes by modifying the syndrome-extraction circuit, improving the fidelity of $\swapp$ operations, and using more sophisticated decoding algorithms. 
\begin{enumerate}
\item We made the conservative assumption that propagation of error from Level-1 ancilla qubits to Level-1 data qubits reduces the effective distance of the outer code by a factor of $\Delta_g$, the degree of the outer-code stabilizer generators. This error propagation can be mitigated if the noise in Level-1 qubits is highly biased, with $\ssX$ errors occurring much less frequently than $\ssZ$ errors. Even if the noise afflicting the physical qubits is unbiased, this Level-1 noise bias can be enforced by using an asymmetric surface code as the inner code of the hierarchical scheme. 
\item The failure rate of the outer code grows in proportion to the depth of the permutation routing, and hence is sensitive to the error rate of Level-1 $\swapp$ operations.
By improving the error rate of physical $\swapp$ gates we can improve the performance of the hierarchical code significantly.
\item We assumed that the decoding algorithm for the outer code makes no use of the syndrome information from the inner code blocks. We expect that a much better decoding scheme for the hierarchical code can be achieved by exploiting such information from the inner code when decoding the outer code. 

\end{enumerate}
Finally, we also highlighted that a hierarchical architecture might deal effectively with ``burst'' errors that damage a large cluster of physical qubits simultaneously. A severe burst error could corrupt several of the inner-code tiles, but the resulting Level-1 erasure errors can be adequately addressed by the decoder for the outer code.

\section{Acknowledgements}

AK is supported by the Bloch Postdoctoral Fellowship from Stanford University. AK acknowledges funding from NSF award CCF-1844628.
CAP acknowledges funding from the Air Force Office of Scientific Research (AFOSR), FA9550-19-1-0360.
JP acknowledges funding from  the U.S. Department of Energy Office of Science, Office of Advanced Scientific Computing Research, (DE-NA0003525, DE-SC0020290), the U.S. Department of Energy QuantISED program ({DE}-{SC0018407}), the U.S. Department of Energy Quantum Systems Accelerator, the Air Force Office of Scientific Research ({FA9550}-{19}-{1}-{0360}), and the National Science Foundation ({PHY}-{1733907}). The Institute for Quantum Information and Matter is an NSF Physics Frontiers Center. 
We thank Nicolas Delfosse, Mary Wootters, Anthony Leverrier, Nou\'edyn Baspin, Bailey Gu, Alex Kubica, David Schuster, Manuel Endres, Michael Vasmer, and Pavel Panteleev for helpful conversations.

\printbibliography

\pagebreak

\appendix 

\section{Glossary}
\label{app:glossary}

\begin{enumerate}
    \item Set notation: for natural numbers $n \in \bbN$, $[n] = \{1,...,n\}$.
    \item Sums over sets: For a set $S$ and a subset $A \subseteq S$, the sum $\sum_{B \supseteq A} f(B)$ is taken over all subsets $B\subseteq S$ such that $A \subseteq B$.
    \item Asymptotics: for functions $f, g: \bbN \to \bbR$, we say
    \begin{enumerate}
        \item $f(n) = O(g(n))$ if there exists an $n_0 \in \bbN$ and a positive number $c$ independent of $n$ such that for all $n > n_0$, $f(n) \leq g(n)$.
        \item $f(n) = \Omega(g(n))$ if $g(n) = O(f(g))$.
        \item $f(n) = \Theta(g(n))$ if there exists an $n_0 \in \bbN$ and positive numbers $a$, $b$ independent of $n$ such that $a \cdot g(n) \leq f(n) \leq b \cdot g(n)$.
    \end{enumerate}
    We may use $O_p(\cdot)$, $\Omega_p(\cdot)$ and $\Theta_p(\cdot)$ to indicate that the numbers $a$, $b$ and $c$ may depend on some parameter $p$ pertinent to the problem at hand.
    \item (Circuit) Step: A single timestep in which each qubit may participate in only one gate.
    \item (Circuit) Stage: This refers to the time interval in the circuit $C_n^{\cQ}$ required to simulate one entangling gate.
    One stage has at most $\dperm$ steps.
    \item (Measurement) Round: A complete measurement of all the stabilizer generators of the code producing one outcome for each stabilizer generator.
    \item $\cK$ is the set of Clifford operations we use to construct syndrome-extraction circuits in $2$ dimensions.
    It includes the following elements.
    \begin{enumerate}
      \item Initialization of new qubits in state $\ket{0}$ or $\ket{+}$,
      \item Single-qubit Pauli gates,
      \item Two-qubit Clifford gates $\cnot$ and $\cz$ between nearest-neighbor qubits,
      \item Single-qubit Pauli $\ssX$ and $\ssZ$ measurements,
      \item Physical $\swapp$ operation with range $R$.
    \end{enumerate}
    \item In the context of concatenated codes, $\cK$ carries subscripts $\cK_0$, $\cK_1$ to refer to Level-0 (physical) and Level-1 (logical) Clifford operations.
\end{enumerate}

\section[]{Constructing the ideal syndrome-extraction circuit $(C_n^{\cQ})^{\mathrm{ideal}}$}
\label{app:proof-ideal-sec}

In this section, we return to the claim in Section~\ref{subsec:sec-from-routing}.
We prove that the syndrome-extraction circuit $(C_n^{\cQ})^{\mathrm{ideal}}$ for a $\dsl n,k,d, \Delta_q,\Delta_g \dsr$ code can be constructed such that its depth is at most $s := 2\Delta + 4$, where $\Delta = 2\max(\Delta_q,\Delta_g)$.
\begin{proof}
  By definition, each qubit participates in at most $\Delta_{q}$ stabilizer generators and each stabilizer generator contains at most $\Delta_{g}$ qubits in its support.
  We use the \emph{Tanner graph} $\cT(\cQ_n) = (V \union C^{\ssX} \union C^{\ssZ}, E)$, a tripartite graph corresponding to the code $\cQ_n$ where:
    \begin{enumerate}
      \item There is a vertex $v \in V$ for each qubit in the code. $|V| = n$.
      \item There is a vertex $u^{\ssX}_i \in $ for each $\ssX$-type generator $\ssS^{\ssX}_i$. $|C^{\ssX}| = m_{\ssX}$.
      \item There is a vertex $w^{\ssZ}_j$ for each $\ssZ$-type generator $\ssS^{\ssZ}_j$. $|C^{\ssZ}| = m_{\ssZ}$.
    \end{enumerate}

  Consider the bipartite Tanner graph $\cT^{\ssX} = (V \union C^{\ssX}, E)$ that corresponds to the $\ssX$-type generators of the code $\cQ$.
  
  In each step, each qubit can be involved in at most one gate.
  This can be phrased as a graph coloring problem: we color the edges of $\cT^{\ssX}$ such that no two edges incident to a vertex have the same color.
  Since $\cT^{\ssX}$ is bipartite, such an edge coloring can be computed efficiently using $\max(\Delta_q,\Delta_g)$ colors \cite{schrijver2003combinatorial}.

  To measure the $\ssX$-type syndromes, the first phase of the circuit $(C_n^{\cQ})^{\mathrm{ideal}}$ is partitioned into $\max(\Delta_q,\Delta_g)$ steps.
  In the $t$\textsuperscript{th} step, we perform the two-qubit gates corresponding to the edge color $t$.

  Once completed, the same process is repeated for the $\ssZ$-type syndromes.
  Following a similar line of reasoning, this requires $\Delta = \max(\Delta_q,\Delta_g)$ applications of two-qubit gates.

  The circuit thus has two phases: first the $\ssX$-type syndromes are measured followed by the $\ssZ$-type syndromes~\footnote{This is unlike the surface code where both types of syndromes are measured at once.} which completes a measurement of all stabilizer generators.

  The total number of entangling stages is therefore $2\Delta$, where $\Delta = \max(\Delta_q,\Delta_g)$.
  Accounting for one stage for preparing and measuring ancilla qubits in each phase, we have a total of $s = 2\Delta + 4$ stages to measure syndromes.
\end{proof}

\end{document}